\DeclarePairedDelimiter{\ceil}{\lceil}{\rceil}
\DeclarePairedDelimiter{\floor}{\lfloor}{\rfloor}
\newtheorem{theorem}{Theorem}[section]
\newtheorem{lemma}[theorem]{Lemma}
\theoremstyle{definition}
\newtheorem{definition}[theorem]{Definition}
\newtheorem{specification}{Specification}
\newcommand{\FavorIn}{F^{\text{in}}} 
\newcommand{\FavorOut}{F^{\text{out}}} 
\newcommand{\OutEvent}[1][C,a]{\FavorOut(#1) = \emptyset} 
\newcommand{\PosEvent}[1][a,b]{\text{Tol}_{#1}} 
\newcommand{\FailEvent}[1][F]{\text{Fail}_{#1}}
\newcommand{\exv}{exit-denying} 
\newcommand{\env}{entry-denying} 
\newcommand{\EXVNoun}{Exit Denial} 
\newcommand{\ExvNoun}{Exit denial} 
\newcommand{\ENVNoun}{Entry Denial} 
\newcommand{\EnvNoun}{Entry Denial} 
\newcommand{\exvNoun}{exit denial} 
\newcommand{\envNoun}{entry denial} 
\newcommand{\partition}{\pi} 
\newcommand{\agone}{a} 
\newcommand{\agtwo}{b} 
\newcommand{\partnumber}{20}
\newcommand{\ind}{\iota} 
\newcommand{\coa}{\mathcal C} 
\newcommand{\mcoa}{\mathcal M} 
\newcommand{\Prob}{\mathbb{P}}
\newcommand{\alg}{\text{ALG}} 
\newcommand{\distr}{\mathcal D} 
\newcommand{\randgame}[1][\distr]{H(n,#1)} 
\newcommand{\Stwo}[1][N,u]{\mathcal S(#1)} 
\newcommand{\Ex}{\mathbb{E}}
\newcommand{\ER}{Erd\H{o}s-R{\'e}nyi}
\newcommand{\gdycliques}{\textsc{GreedyCliqueFormation}} 
\newcommand{\gdycluster}{\textsc{GreedyClustering}} 
\DeclareRobustCommand{\stirling}{\genfrac\{\}{0pt}{}}
\title{Stability in Random Hedonic Games}
\author{Martin Bullinger}
\author{Sonja Kraiczy}
\affil{ \small Department of Computer Science, University of Oxford, UK\protect\\ \vspace*{0.1cm} martin.bullinger@cs.ox.ac.uk, sonja.kraiczy@merton.ox.ac.uk}
\date{}
\begin{document}

\maketitle

\begin{abstract}
Partitioning a large group of employees into teams can prove difficult because unsatisfied employees may want to transfer to other teams.
In this case, the team (coalition) formation is unstable and incentivizes deviation from the proposed structure. 
Such a coalition formation scenario can be modeled in the framework of hedonic games
and a significant amount of research has been devoted to the study of stability in such games.
Unfortunately, stable coalition structures are not guaranteed to exist in general 
and their practicality is further hindered by computational hardness barriers.
We offer a new perspective on
this matter by studying a random model of hedonic games. 
For three prominent stability concepts based on single-agent deviations, we provide a high probability analysis of stability in the large agent limit.

Our first main result is an efficient algorithm that outputs an individually and contractually Nash-stable partition with high probability. 
Our second main result is that the probability that a random game admits a Nash-stable partition tends to zero.
Our approach resolves the two major downsides associated with individual stability and contractual Nash stability and reveals
agents acting single-handedly are usually to blame for instabilities.
\end{abstract}

\section{Introduction}
Consider the economic problem of partitioning employees into teams which work on distinct tasks. 
In such a setting, the composition of the team may impact the individual team members productivity and well-being. As a result, an employee may want to change team if they find that a different team would be a better match.
We are therefore faced with the challenge of finding a team structure 
that reflects the employees' preferences so as to prevent such instabilities. 

The above scenario is an example of a more general setting called \emph{coalition formation}, in which the goal is to partition a set of agents into so-called \emph{coalitions}.
We consider coalition formation in the framework of \emph{hedonic games}, where agents compare their own coalition to other coalitions they could join. \citep{DrGr80a}. 
Externalities such as the wider coalition structure are not taken into account.

For a deviation to another coalition to be reasonable, the deviating agent should prefer their new coalition to the abandoned coalition.
The absence of such deviations yields the classical concept of 
\emph{Nash stability} \citep{BoJa02a}. 
A standard example, sometimes referred to as \emph{run-and-chase}, illustrates that Nash stability is demanding even with just two agents.
While Alice prefers to be in a coalition on her own, Bob prefers to be in a coalition with Alice.
As a result, neither the coalition structure in which both agents are on their own nor the coalition structure in which they form one joint coalition is Nash-stable.

So far, we have entirely disregarded the preference of the abandoned team and the new team. 
However, in a real-world setting an employee may be more likely to deviate if either their current team or their preferred team also benefit from the proposed change.
In light of this, research in hedonic games has considered stability concepts of varying flexibility. Most prominent among them are variations of the concept of Nash deviation that require that the agents in the abandoned or in the joined coalition do not fare worse after the deviation---resulting in \textit{contractual Nash stability} and \textit{individual stability}, respectively \citep{BoJa02a,SuDi07b}. 
Despite modeling more realistic deviations, these stability concepts remain unsatisfactory for two reasons. 
Firstly, there exist simple situations in which they cannot be satisfied, so-called \textit{No-instances} [\citealp[Example~5]{BoJa02a}, \citealp[Example~2]{SuDi07b}].
Secondly, deciding whether a given hedonic game admits a stable coalition structure is usually computationally hard. 
There is an abundance of research concerning these and other stability concepts proving analogous results \citep[see, e.g.,][]{SuDi10a,ABS11c,Woeg13a,BBW21b,BBT22a,Bull22a}.\footnote{\citet[Chapter~4.3]{Bull23a} discusses a general method (applied by most of the cited literature) for the usage of No-instances to obtain hardness results.}
Many of the No-instances seem unrealistic and delicate in their design and therefore just within the combinatorial ``reach'' of the representation model.\footnote{For some solution concepts, even the smallest known counterexamples require a large number of agents \citep[see, e.g.,][]{ABB+17a,Bull22a} and some instances were only found by computer-aided search \citep[Theorem~5.1]{BBW21b}.} Motivated by these observations, we aim to answer the following natural question.
\begin{center}
    \textit{Do hedonic games typically admit stable coalition structures?}
\end{center}

\paragraph{Our Contribution}{In this work, we complement the worst-case analysis of stability in hedonic games with a \textit{high probability analysis}. 
We define a random game model based on the prominent class of additively separable hedonic games \citep{BoJa02a}.
In these games, agents' preferences are encoded by cardinal utility values for other agents.
The utility for a coalition then simply is the sum of the utility values for the agents in this coalition.
An agent aims at maximizing the utility of their coalition within a coalition structure.
We assume that, for every pair of agents $\agone$ and $\agtwo$, the utility value of $\agone$ for $\agtwo$ is independently and identically distributed according to some given distribution.
In this model, we analyse three stability notions described above: Nash stability, contractual Nash stability and individual stability.
We determine the limit behavior of the probability that a stable outcome exists in a random game if the number of agents tends to infinity---the so-called large agent limit.}

We start with two results that give insight on when simple coalition structures are stable.
First, whenever the utility distribution is positive with positive probability, then the coalition structure, where all agents form a single coalition, the so-called \emph{grand coalition}, is contractually Nash-stable.
This follows because for additive utility aggregation, a positive utility denies another agent to leave.
Hence,
in the large agent limit, the probability that an agent is denied to leave is high.
If, however, the distribution only has nonpositive support, then all utilities are negative and the coalition where every agent forms a coalition on their own is Nash-stable.
Second, if there is a bias towards a positive single-agent utility, then, in the large agent limit, the accumulated utility is positive for all agents, and consequently, the grand coalition is Nash-stable.

\begin{figure}
    \centering
    \begin{tikzpicture}
        \node (EXV) at (-3.5,3) {\textbf{\EXVNoun}};
        \node (ENV) at (5,3) {\textbf{{\ENVNoun} $\land$ Individual Rationality}};
        \node (NS) at (0,3) {Nash Stability};
        \node (IS) at (2.5,1.5) {Individual Stability};
        \node (CNS) at (-2.5,1.5) {Contractual Nash Stability};
        \node (IR) at (4.5,0) {Individual Rationality};
        \node (CIS) at (0,0) {Contractual Individual Stability};
        
        \draw[->] 
        (NS) edge (IS)
        (NS) edge (CNS)
        (ENV) edge (IS)
        (EXV) edge (CNS)
        (IS) edge (IR)
        (CNS) edge (CIS)
        (IS) edge (CIS);

    \end{tikzpicture}
    \caption{Stability concepts for hedonic games, where arrows represent logical relationships; see \Cref{sec:stab} for formal definitions.
    Contractual individual stability and individual rationality can always be satisfied while there exist No-instances for all other solution concepts. 
    }
    \label{fig:concepts}
\end{figure}
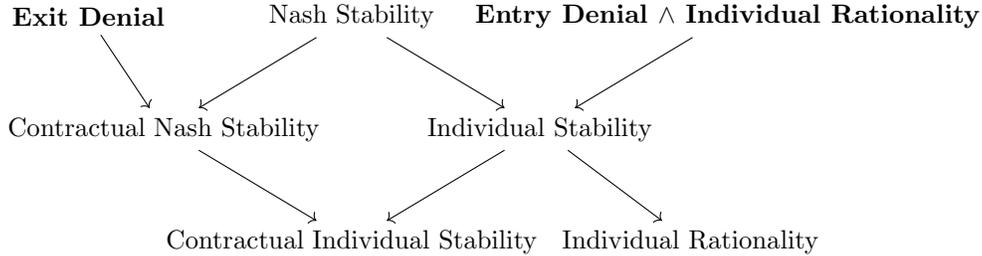

In the remaining paper, we perform an extensive analysis of our random game model if utilities are sampled from uniform distributions with mean~$0$. 
These distributions already allow for rich utility profiles and their symmetry around $0$ results in complex behavior.
Our main result is an efficient algorithm that, with high probability, produces outputs that are
\begin{itemize}
    \item individually rational: every agent has a nonnegative utility,
    \item {\env}: every agent is denied to join any other coalition by some agent in the other coalition, and
    \item {\exv}: every agent is denied to abandon their own coalition by some agent in their own coalition.
\end{itemize}

{\EXVNoun} is strictly stronger than contractual Nash stability.
In particular, it can never hold for coalition structures containing singleton coalitions, because such coalitions do not contain other agents to deny exit. Moreover, {\envNoun} does only imply individual stability in conjunction with individual rationality because {\envNoun} only prevents individual deviations to nonempty coalitions.
The logical relationships of all our solution concepts are depicted in \Cref{fig:concepts}.

Our algorithm consists of three stages: in the first stage, we apply a clustering coalitions to find coalitions in which all agents have mutually high utility.
In the second stage, we merge the coalitions from the first stage to larger coalitions.
In these two stages, we form coalitions of most but not all agents.
Hence, in a third stage, we greedily add the remaining agents to the coalitions created in the first two stages.
Intuitively, the output of this algorithm is individually rational because of the high utility achieved in the first stage, whereas the merging stage increases the size of coalitions to obtain coalitions that are large enough to deny agents to entry or exit.
With high probability, all coalitions are of size $\Theta(\log n)$ and almost equal in size, giving a much more practical coalition structure than the grand or singleton coalitions.

In contrast to this positive algorithmic result, we prove that with high probability, no partition is Nash-stable if we sample utilities from $U(-1,1)$ (or equivalently any other mean $0$ uniform distribution).
Our result is based on a sophisticated argument which relies on counting and estimation.

\section{Related Work}

Coalition formation has been an important concern in economic theory and can be traced back as far as to the origins of game theory \citep{vNM44a}.
The framework of hedonic games was introduced by \citet{DrGr80a} and further formalized by \citet{BoJa02a}, \citet{BKS01a}, and \citet{CeRo01a}.
Since these publications, hedonic games have received a constant stream of attention;
see the survey by \citet{AzSa15a} for an overview.

The general model of hedonic games poses two challenges: 
agents' preferences have to consider an exponentially large set of coalitions they can be part of and solution concepts, especially notions of stability, can often not be satisfied.
The main objective of hedonic games research is therefore often to identify succinctly representable classes of games, in which desirable solution concepts can be satisfied.

This has led to a rich variety of game classes.
While we investigate additively separable hedonic games (ASHGs), in which cardinal valuations are aggregated by sums \citep{BoJa02a}, other game classes are based on aggregating cardinal valuations by taking averages \citep{ABB+17a,Olse12a}, ranking agents \citep{CeRo01a,CeHa04a}, specifying friends and enemies \citep{DBHS06a}, or encoding preferences by Boolean formulae \citep{ElWo09a}.

While stability and other solution concepts have been considered in all of these classes, we focus on results for ASHGs.
Two properties have been identified to be vital for achieving stability based on deviations by single agents.
First, symmetric utilities allow the existence of Nash-stable (and therefore individually stable and contractually Nash-stable) outcomes \citep{BoJa02a}.
This follows from a local search algorithm that also leads to Nash stability in related settings, such as swap stability in ASHGs with fixed-size coalitions \citep{BMM22a} or for Nash stability in a generization of ASHGs \citep{BuSu24a}.
However, in accordance with the nature of this approach, computing Nash-stable and even individually stable partitions in symmetric ASHGs is \PLS-complete \citep{GaSa19a}.
Second, if utilities are not assumed to be symmetric, individual stability and contractual Nash stability can be obtained if utilities are essentially restricted to one positive and one negative value \citep{BBT22a}.
Importantly, this captures subclasses of ASHGs based on the distinction of friends and enemies \citep{DBHS06a}.

For the unrestricted utility model of ASHGs, deciding whether Nash-stable, individually stable, or contractually 
Nash-stable outcomes exists, is \NP-complete \citep{SuDi10a,Bull22a}.
By contrast, one can consider the case where deviations need the unanimous consent of both the abandoned and the joined coalition, which defines the concept of contractual individual stability. 
Then, stable outcomes are guaranteed to exist (because every deviation increases the social welfare) and can be computed in polynomial time \citep{ABS11c}.
An interesting contrast to this is the work by \citet{BuRo24a}, which considers stability in an online model of ASHGs and shows that contractual individual stability can only be obtained for restricted utility values.
For group stability, captured by the so-called core, stable outcomes once again exist for restricted utilities \citep{DBHS06a} while their computation for unrestricted ASHGs is $\Sigma_2^p$-complete
\citep{Woeg13a}.
ASHGs have also been studied for solution concepts beyond stability.
These encompass welfare optimality, Pareto optimality, fairness, or strategyproofness \citep{ABS11c,Bull19a,FKMZ21a,EFF20a}.

Also in hedonic games beyond ASHGs, existence of stable outcomes is rare, and usually can only be achieved under severe utility restrictions.
For instance, in fractional hedonic games, Nash-stable outcomes exist if weights are symmetric and restricted to be $0$ or $1$ \citep{BBS14a}, whereas symmetric but unrestricted utilities do not even guarantee individual stability \citep{BBW21b}, which is an interesting contrast to ASHGs.
A different approach was undertaken by \citet{AGIM24a} who consider axioms guaranteeing group stability in a general coalition formation context.
Due to the nature of this approach, their imposed axioms have to be quite strong, and their key structural requirements fail in our random games with high probability.

All of the preceding work considers deterministic games and hardness results can be viewed as a worst-case analysis.
By contrast, we propose the study of random hedonic games.
While this is the first such study for hedonic games, similar approaches have been explored in other settings, especially in voting.
For instance, with high probability Condorcet winners do not exist \citep{Plot67a} while common tournament solutions such as the top cycle or the Banks set are equal to the entire set of alternatives \citep{Fey08a}.
In the domain of coalition formation, an interesting orthogonal approach to ours was undertaken by \citet{FFKV23a} who consider randomized solution concepts instead of random games.
They study two classes of coalition formation games different from ASHGs and provide algorithms computing outcomes that do not admit group deviations with high probability.
Interestingly, similar to our work, guaranteeing individual rationality seems to be a challenge in their setting and is not achieved by their algorithms.

\section{Preliminaries}

In this section, we introduce hedonic games and stability concepts. 
We assume familiarity with basic concepts from graph theory, such as graphs, cliques, or colorings and refer the reader to the text book by \citet{Dies05a} for more detailed background.
We use the notation $[k] := \{1,\dots, k\}$ for any positive integer $k$.

\subsection{Hedonic Games}\label{sec:HG}
Throughout the paper, we consider settings with a set $N$ of $n$ agents.
The goal of coalition formation is to find a partition of the agents into disjoint coalitions according to their preferences. 
A \emph{coalition structure} (or \emph{partition}) of $N$ is a subset $\partition\subseteq 2^N$ such that $\bigcup_{C\in \partition} C = N$, and for every pair $C,D\in \partition$, it holds that $C = D$ or $C\cap D = \emptyset$. 
A nonempty subset of $N$ is called a \emph{coalition}.
Hence, every element of a partition is a coalition.
Given a partition~$\partition$, we denote by $\partition(\agone)$ the coalition containing agent~$\agone$.
We refer to the partition $\partition$ given by $\partition(\agone)=\{\agone\}$ for every agent $\agone\in N$ as the \emph{singleton partition}, and to $\partition=\{N\}$ as the \emph{grand coalition}.

We now introduce some terminology for partitions.
Given a partition $\partition$ and a positive integer $s\in \mathbb N$, $\partition$ is said to be \emph{$s$-balanced} if $|C| = s$ for all $C\in \partition$.
In addition, $\partition$ is said to be $s^+$-balanced if $|C| = s$  or $|C| = s+1$ for all $C\in \partition$. 
This notion is important for settings where $s$ does not divide the number of agents. 
Given an agent set $N$, a partition $\partition'$ of a subset $N'\subseteq N$ is called a \emph{partial partition}.
Given a partial partition $\partition'$, we denote $N(\partition') := \bigcup_{C\in \partition'}C$, i.e., $N(\partition)$ is the agent set partitioned by $\partition'$.
A sequence $(\partition_1,\dots, \partition_k)$ of partial partitions is said to be \emph{disjoint} if for all $1\le i < j\le k$ it holds that $N(\partition_i) \cap N(\partition_j) = \emptyset$.

We are ready to define hedonic games, the central concept of this paper.
Let $\mathcal N_{\agone}$ denote all possible coalitions containing agent $\agone$, i.e., $\mathcal N_{\agone}=\{C\subseteq N\colon \agone\in C\}$.
A \emph{hedonic game} is defined by a pair $(N,\succsim)$, where $N$ is a finite set of agents and ${\succsim} = (\succsim_{\agone})_{\agone\in N}$ is a tuple of weak orders $\succsim_{\agone}$ over $\mathcal N_{\agone}$ which represent the preferences of the respective agent $\agone$. Hence, agents express preferences only over the coalitions which they are part of without considering externalities.
The strict part of an order $\succsim_{\agone}$ is denoted by $\succ_{\agone}$, i.e., $C\succ_{\agone} D$ if and only if $C \succsim_{\agone} D$ and not $D \succsim_{\agone} C$.

Within the general framework of hedonic games, a large variety of classes of games have been proposed in the literature.
Many of these rely on cardinal utility functions that assign a value for every agent.
We consider the prominent class of games, where this values are aggregated additively to values of coalitions. 
Following \citet{BoJa02a}, an \emph{additively separable hedonic games} is defined by a pair $(N,u)$ where $N$ is a finite set of agents and $u = (u_{\agone})_{\agone\in N}$ is the tuple of \emph{utility functions}.
For each $\agone\in N$, we have $u_{\agone}\colon N\to \mathbb Q$.
This induces the hedonic game $(N,\succsim)$ where for each $C,D\in \mathcal N_{\agone}$ we have $C\succsim_{\agone} D$ if and only if $\sum_{\agtwo\in C\setminus \{\agone\}} u_{\agone}(\agtwo) \ge \sum_{\agtwo\in D\setminus \{\agone\}} u_{\agone}(\agtwo)$.
We also represent the induced hedonic game by the pair $(N,u)$.
We extend the utility functions to coalitions $C \in \mathcal N_{\agone}$ or partitions $\partition$ by defining $u_{\agone}(C) := \sum_{j\in C\setminus \{\agone\}} u_{\agone}(j)$ and $u_{\agone}(\partition) := u_{\agone}(\partition(\agone))$.
Note that we view the empty sum as having a value of~$0$, and therefore a singleton coalition has a utility of~$0$.
Because of the utility functions for single agents, additively separable hedonic games can equivalently be represented as weighted (and complete) directed graphs.

In our algorithms, we are sometimes only interested in a subgame of a hedonic game induced by a subset of the agents.
Given a utility function $u_{\agone}$ and a subset of agents $M\subseteq N$, we denote by $u_{\agone}|_M$ the function $u_{\agone}|_M \colon M\to \mathbb Q$ defined by $\agtwo\mapsto u_{\agone}(\agtwo)$.
Also, we denote by $u|_M$ the tuple $u|_M = (u_{\agone}|_M)_{\agone\in N}$.
Given a cardinal hedonic game $(N,u)$ and a subset of players $M\subseteq N$, we call the cardinal hedonic game $(M,u|_M)$ the \emph{subgame induced by} $M$.

\subsection{Stability Based on Single-Agent Deviations}\label{sec:stab}

We now formally introduce stability.
We remind the reader that their logical relationship can be found in \Cref{fig:concepts}.
As mentioned before, we focus on stability notions based on deviations by single agents.
A \emph{single-agent deviation} performed by agent $\agone$ transforms a partition~$\partition$ into a partition $\partition'$ 
such that $\partition(\agone)\neq\partition'(\agone)$ and, for all agents $\agtwo\neq \agone$, it holds that $\partition(\agtwo)\setminus\{\agone\} = \partition'(\agtwo)\setminus\{\agone\}$.
We write $\partition \xrightarrow{\agone} \partition'$ to denote a single-agent deviation performed by agent $\agone$ transforming partition $\partition$ to partition $\partition'$.

We consider myopic agents whose rationale is to only engage in a deviation if it immediately makes them better off. 
A \emph{Nash deviation} is a single-agent deviation performed by agent $\agone$ making her better off, i.e., $\partition'(\agone)\succ_{\agone} \partition(\agone)$.
Any partition in which no Nash deviation is possible is said to be \emph{Nash-stable} (NS).

Like we have argued in the introduction, this concept of stability is very strong and comes with the drawback that only the preferences of the deviating agent are considered. Therefore, we also consider two natural refinements, which additionally require the consent of the abandoned or the joined coalition.
For a compact representation, we introduce them via the notion of favor sets. 
Let $C \subseteq N$ be a coalition and $\agone \in N$ an agent. The \emph{favor-in set} of $C$ with respect to $\agone$ is the set of agents in $C$ (excluding $\agone$) that strictly favor having $\agone$ inside $C$ rather than outside, i.e.,
$\FavorIn(C, \agone) = \{\agtwo \in C \setminus \{\agone\} \colon C \cup \{\agone\} \succ_{\agtwo} C \setminus \{\agone\}\}$. 
The \emph{favor-out set} of $C$ with respect to $\agone$ is the set of agents in $C$ (excluding $\agone$) that strictly favor having $\agone$ outside $C$ rather than inside, i.e., 
$\FavorOut(C, \agone) = \{\agtwo \in C \setminus \{\agone\} \colon C \setminus \{\agone\} \succ_{\agtwo} C \cup \{\agone\}\}$.

An \emph{individual deviation} is a Nash deviation $\partition \xrightarrow{\agone} \partition'$ such that $\FavorOut(\partition'(\agone), \agone) = \emptyset$.
A \emph{contractual deviation} is a Nash deviation $\partition \xrightarrow{\agone} \partition'$ such that $\FavorIn(\partition(\agone), \agone) = \emptyset$.
Then, a partition is said to be \emph{individually stable} (IS) or \emph{contractually Nash-stable} (CNS) if it allows for no individual or contractual deviation, respectively. A related weakening of both stability concepts is contractual individual stability (CIS), based on deviations that are both individual and contractual deviations~\citep{BoJa02a,SuDi07b}.

A weak form of stability is to only consider the case of deviations where agents form a singleton coalition.
A coalition $C\in \mathcal N_{\agone}$ of agent~$\agone$ is said to be \emph{individually rational} if $C\succsim_{\agone} \{\agone\}$, i.e., $C$ is at least as good as being in a singleton coalition.
Moreover, a partition $\partition$ is said to be \emph{individually rational} if, for every agent~$\agone$, the coalition $\partition(\agone)$ is individually rational.
Since the condition of consent by the joined coalition is trivial in this case, individual rationality is a weakening of individual stability.
However, in contrast to individual stability, individual rationality can always be satisfied by some partition since the singleton partition is individually rational.

Finally, we introduce two related concepts that are related to individual and contractual Nash stability and single out the denial of agents.
A partition $\partition$ is said to be \emph{\env} if for any agent $\agone \in N$ and $C\in \partition\setminus \{\partition(\agone)\}$ it holds that $\FavorOut(C,a)\neq \emptyset$.
Similarly, $\partition$ is said to be \emph{\exv} if for any agent $\agone \in N$ it holds that $\FavorIn(\partition(\agone),\agone)\neq \emptyset$.
Hence, this concepts capture the situation where joining or abandoning coalitions does not have consent.
It directly follows from the definition that {\exvNoun} implies contractual Nash stability and that {\exv} cannot hold for partitions containing singleton coalitions.
By contrast, {\envNoun} only implies individual stability in conjunction with individual rationality.
This is due to the fact that {\envNoun} does not constraint deviations to form a singleton coalition.

\subsection{Random Hedonic Games}

For our high probability analysis, we introduce a model of random hedonic games.
Let $n\in \mathbb{N}$ and $\distr$ be a probability distribution over $\mathbb R$. 
Let $\randgame$ be the distribution
over games $(N,u)$ where $N$ is a set of $n$ agents and for $\agone, \agtwo \in N$, $u_{\agone}(\agtwo)$ is independently sampled from $\mathcal D$.
Hence, $\randgame$ captures random games where utilities are independent and identically distributed (henceforth,~i.i.d.) samples from $\distr$.
We call a game $(N,u)$ sampled from $\randgame$ a \emph{random hedonic game}.
As an important special case, we consider the case where $\distr$ is the uniform distribution on the interval $(-1,1)$, denoted by $U(-1,1)$.\footnote{By scaling utilities, this is equivalent to sampling utilities from $U(-x,x)$ for any $x>0$.}
We are interested in determining the value of limits of the type
	\begin{align*}
		\lim_{n\to \infty} \Prob\left((N,u) \text{ contains Nash-stable partition}\right)\text.
	\end{align*}

There, Nash stability may be replaced by other properties, such as individual rationality or individual stability.
In all probability expressions, we assume that games are drawn randomly, i.e., $(N,u)\sim \randgame$, where $n = |N|$.
Since these expressions consider the limit behavior when the number of agents tends to infinity, we also speak of the \emph{large agent limit}.

Because of the representation of cardinal hedonic games by (complete and weighted directed graphs), we will apply theory for random graphs from probabilistic combinatorics. 
We denote by $G(n,p)$ the {\ER} random (undirected) graph with $n$ vertices where every edge is present with probability $p$ \citep{ErRe59a}.

\section{Simple Stable Partitions}\label{sec:simplestable}
In this section, we consider two special partitions: the partition into singletons and the grand coalition. 
We show that under certain conditions on the utility distributions of our random games, these partitions are stable in the large agent limit.
In particular, contractual Nash stability is always achievable with high probability using one of these partitions. We will see that this no longer holds for individual stability or Nash stability in general.
For the special case where the utility distribution has strictly positive mean and finite variance, the grand coalition is Nash Stable, and we will exhibit this for uniform distributions.

Of course, both of these partitions are somewhat degenerate and are likely not a desired output of a real coalition formation scenario. 
Nevertheless, it is useful to see that for contractual Nash stability, we \textit{can} give a simple probabilistic argument for two reasons.
Firstly, because, as already mentioned, the simple approach fails for Nash stability and individual stability in general.
Thus, secondly, these results call for a more sophisticated approach even for the $U(-1,1)$ distribution case and lay the foundation for the involved proofs of our main results in the upcoming sections.

We first show that with high probability there exists a contractually Nash-stable partition.
More precisely, we will show that if we sample utilities from a distribution with positive mass on $(0,\infty)$, then the grand coalition is contractually Nash-stable in the large agent limit.
Otherwise, all sampled utilities are nonpositive and the singleton partition is Nash-stable.

\begin{restatable}{proposition}{CNSposssibility}
Let $\distr$ be any probability distribution over $\mathbb R$ and let $(N,u)\sim \randgame$ be a random hedonic game.
Then,
	\begin{align*}
		\lim_{n\to \infty} \Prob\left((N,u) \text{ admits a contractually Nash-stable partition}\right) = 1\text.
	\end{align*}
\end{restatable}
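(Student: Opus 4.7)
The plan is to split on whether $\distr$ places positive mass on the positive reals. Let $p := \Prob_{X\sim\distr}(X > 0)$. The two cases are governed by two different candidate partitions: the singleton partition when $p=0$, and the grand coalition when $p>0$.

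\textbf{Case 1 ($p = 0$).} With probability $1$, every sampled utility $u_{\agone}(\agtwo)$ is nonpositive. I would argue that the singleton partition $\partition$ defined by $\partition(\agone) = \{\agone\}$ is Nash-stable (and hence, \emph{a fortiori}, contractually Nash-stable). Indeed, each agent obtains utility $0$ in $\partition$, and any single-agent deviation by $\agone$ moves it into a coalition $C \cup \{\agone\}$ for some existing $C \in \partition$; the resulting utility $\sum_{\agtwo \in C} u_{\agone}(\agtwo)$ is nonpositive, so it cannot strictly improve upon $0$. This case is immediate and deterministic.

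\textbf{Case 2 ($p > 0$).} I claim that the grand coalition $\{N\}$ is contractually Nash-stable with probability tending to $1$. The key structural observation is that if $\partition = \{N\}$, then for any agent $\agone$ the only single-agent deviation available transforms $\partition$ into $\{N\setminus\{\agone\},\{\agone\}\}$, since the only pre-existing coalition is $N$ itself. Thus the only contractual deviation that could destroy stability at $\agone$ requires $\FavorIn(N,\agone) = \emptyset$, i.e., no agent $\agtwo \in N\setminus\{\agone\}$ satisfies $u_{\agtwo}(\agone) > 0$. By i.i.d.\ sampling, this event has probability exactly $(1-p)^{n-1}$.

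A union bound over the $n$ agents gives
\[
\Prob\bigl[\{N\} \text{ is not contractually Nash-stable}\bigr] \;\le\; n(1-p)^{n-1},
\]
which tends to $0$ as $n\to\infty$ since $p>0$. Hence $\Prob[(N,u) \text{ admits a CNS partition}] \ge 1 - n(1-p)^{n-1} \to 1$, completing the proof.

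I do not expect any serious obstacle here; the only point deserving care is the simple but crucial observation that from the grand coalition the only single-agent deviation destination is the singleton $\{\agone\}$, which collapses the analysis to a single contractual condition per agent and makes the union bound effective. The proposition really reflects the fact that a positive-probability utility value, accumulated over $n-1$ independent samples, almost surely locks at least one agent into denying any given departure.
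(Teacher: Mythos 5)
Your proof is correct and follows essentially the same route as the paper: the case split on whether $\distr$ has positive mass on $(0,\infty)$, the singleton partition for the nonpositive case, and the grand coalition with the observation that a positive incoming utility blocks any contractual exit. The only cosmetic difference is that you bound the failure probability by a union bound $n(1-p)^{n-1}$, whereas the paper exploits independence across agents to write the exit-denial probability exactly as $\left(1-(1-p)^{n-1}\right)^n$; both yield the limit.
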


\begin{proof}
	Let $\distr$ be any probability distribution over $\mathbb R$ and let $(N,u)\sim \randgame$.
Let $X\sim \distr$ be a random variable distributed according to $\distr$ and define $\epsilon:=\Prob(X > 0)$. 
Note that $\epsilon \ge 0$.
We perform a case distinction dependent on whether $\epsilon=0$ or $\epsilon >0$.
First, if $\epsilon=0$, then all utilities between agents in $(N,u)$ are nonpositive.
Hence, the singleton coalition is Nash-stable and therefore in particular contractually Nash-stable.

Now consider the case that $\epsilon >0$.
Then, by assumption, for any pair of agents $a,b\in N$ it holds that $\Prob(u_b(a)\le 0) = 1-\epsilon$. 
Hence, since the utilities are drawn independently, it holds for every player $a\in N$ that
	\begin{equation}\label{eq:deny}
		\Prob(\exists b\in N\setminus\{a\}\colon u_b(a) > 0) = 1 - (1-\epsilon)^{n-1}\text.
	\end{equation}
	
	Consequently,
	\begin{align*}
		&\Prob(\{N\} \text{ contractually Nash-stable for }(N,u))  \le \Prob(\{N\} \text{ \exv})\\
		& = \prod_{a\in N}\Prob(\exists b\in N\setminus\{a\}\colon u_b(a) > 0) \overset{\text{\Cref{eq:deny}}}{=} \left(1 - (1-\epsilon)^{n-1}\right)^{n} \xrightarrow{n \to \infty} 1.
	\end{align*}
	
	In the equality, we use that for every two players $a,a'\in N$ the events $\{\exists b\in N\setminus\{a\}\colon u_b(a) > 0\}$ and $\{\exists b\in N\setminus\{a'\}\colon u_b(a') > 0\}$ are independent because they depend on a different set of utilities.
\end{proof}

Note that the grand coalition is typically not Nash-stable if the utility distribution has a negative mean $\mu$ and constant variance $\sigma^2$. The problem is that the probability that the grand coalition provides negative overall utility tends to a positive constant since by the Central Limit Theorem the utility for the grand coalition is approximately Gaussian with mean $n\mu$. Therefore, in the large agent limit, some agent has negative utility.  

However, following the same reasoning, we see that if the mean $\mu$ of the distribution is strictly positive (and the variance is finite), then the utility of the grand coalition is concentrated around $(n-1)\mu>0$ and we can apply concentration inequalities.
When the distribution has bounded support, this is easily captured by Hoeffding's inequality.
\begin{lemma}[\citealp{Hoef63a}]\label{thm:Hoeffding}
Let $X_1,\ldots, X_n$ be independent random variables such that, for all $i\in [n]$, it holds that $a_i\leq X_i \leq b_i$. 
Let $S_n=\sum_{i=1}^n X_i$.
Then, $$\Prob(S_n -\Ex(S_n)\geq t)\leq e^{-\frac{2t^2}{\sum_{i=1}^n(b_i-a_i)^2}}.$$
\end{lemma}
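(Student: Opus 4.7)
The plan is to use the exponential moment (Chernoff) method, which is the standard route for tail bounds of this shape. First, for any $s > 0$, applying Markov's inequality to the nonnegative random variable $e^{s(S_n - \Ex(S_n))}$ gives
\[\Prob(S_n - \Ex(S_n) \geq t) \leq e^{-st}\, \Ex\bigl[e^{s(S_n - \Ex(S_n))}\bigr].\]
Then I would exploit independence of the $X_i$ to factor the expectation as $\prod_{i=1}^n \Ex\bigl[e^{s(X_i - \Ex(X_i))}\bigr]$, reducing the problem to bounding the moment generating function of each individual centered summand.

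The main obstacle, and heart of the proof, is a single-variable auxiliary inequality commonly called \emph{Hoeffding's lemma}: for any zero-mean random variable $Y$ with $\alpha \leq Y \leq \beta$, one has $\Ex[e^{sY}] \leq e^{s^2(\beta-\alpha)^2/8}$. The standard approach uses convexity of $y \mapsto e^{sy}$ to bound it by the linear interpolation between its values at the endpoints $\alpha$ and $\beta$. Taking expectations and using $\Ex[Y] = 0$ reduces matters to showing that a certain auxiliary function $\phi(u) = -p u + \log(1 - p + p e^{u})$, with $p = -\alpha/(\beta - \alpha)$ and $u = s(\beta - \alpha)$, satisfies $\phi(u) \leq u^2/8$. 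This can be verified by a Taylor expansion at $u = 0$, using $\phi(0) = \phi'(0) = 0$ together with the uniform bound $\phi''(u) \leq 1/4$ (which itself is an arithmetic–geometric mean inequality applied to the Bernoulli variance). I expect this step to be the only delicate one.

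Applying the lemma to each $Y_i = X_i - \Ex(X_i)$, which lies in an interval of length $b_i - a_i$, and combining with the factorization above gives
\[\Prob(S_n - \Ex(S_n) \geq t) \leq \exp\!\Bigl(-s t + \tfrac{s^2}{8} \sum_{i=1}^n (b_i - a_i)^2\Bigr).\]
The final step is to optimize this bound over the free parameter $s > 0$. The exponent is a quadratic in $s$ minimized at $s^\star = 4t / \sum_{i=1}^n (b_i - a_i)^2$, and substituting back yields exactly the claimed bound $\exp\!\bigl(-2t^2 / \sum_{i=1}^n (b_i - a_i)^2\bigr)$. This last optimization is routine calculus, so the whole argument really hinges on establishing the MGF bound for bounded zero-mean variables in the middle step.
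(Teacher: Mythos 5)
Your proposal is correct: it is the standard Chernoff/moment-generating-function argument with Hoeffding's lemma and optimization over $s$, and the final calculus (minimizing $-st+\tfrac{s^2}{8}\sum_{i=1}^n(b_i-a_i)^2$ at $s^\star=4t/\sum_{i=1}^n(b_i-a_i)^2$) does yield the stated exponent. Note that the paper does not prove this lemma at all — it simply cites \citet{Hoef63a} — and your argument is essentially the classical proof from that source, so there is nothing to reconcile.
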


It follows that no agent has negative utility with high probability. 

\begin{restatable}{proposition}{NashBias}\label{prop:NashBias}
Consider $a< b$ with $\frac{a+b}{2}>0$. 
Let $\distr=U(a,b)$ 
and $(N,u)\sim \randgame$ be a random hedonic game. 
Then,
\begin{align*}
		\lim_{n\to \infty} \Prob\left((N,u) \text{ admits a Nash-stable partition}\right) = 1\text.
	\end{align*}
\end{restatable}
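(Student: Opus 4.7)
The plan is to show that the grand coalition $\{N\}$ is itself Nash-stable with probability tending to $1$, which immediately implies the proposition. The key reduction is that under the partition $\{N\}$, the only single-agent deviation available to an agent $i$ produces the partition $\{N\setminus\{i\},\{i\}\}$, in which $i$'s new coalition is a singleton contributing utility $0$. Hence $\{N\}$ is Nash-stable if and only if $u_i(N) = \sum_{j\in N\setminus\{i\}} u_i(j) \ge 0$ for every agent $i$.

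For each fixed $i$, the random variable $S_i := u_i(N)$ is a sum of $n-1$ i.i.d.\ $U(a,b)$-distributed summands, each lying in the bounded interval $[a,b]$, so $\Ex(S_i) = (n-1)\mu$ with $\mu := (a+b)/2 > 0$ by assumption. Applying Hoeffding's inequality (Lemma~\ref{thm:Hoeffding}) to $-S_i$ with $t = (n-1)\mu$ yields
\begin{equation*}
\Prob(S_i < 0) \;\le\; \exp\!\left(-\frac{2(n-1)\mu^{2}}{(b-a)^{2}}\right).
\end{equation*}
A union bound over the $n$ agents then gives
\begin{equation*}
\Prob(\{N\}\text{ is not Nash-stable}) \;\le\; n\cdot\exp\!\left(-\frac{2(n-1)\mu^{2}}{(b-a)^{2}}\right) \;\xrightarrow{n\to\infty}\; 0,
\end{equation*}
since the exponential decay in $n$ easily absorbs the linear factor. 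Because the existence of any Nash-stable partition is at least as likely as $\{N\}$ being Nash-stable, the proposition follows.

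There is no substantive obstacle here; the argument is precisely the sanity check foreshadowed in the paragraph preceding the statement, and it is routine once one notes that the only candidate Nash deviation from the grand coalition is to a singleton. The only mild caveat is that Lemma~\ref{thm:Hoeffding} is stated as an upper-tail bound, so I apply it to $-S_i$ (equivalently, use the standard symmetric two-sided form) to control the lower tail.
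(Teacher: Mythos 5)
Your proposal is correct and follows essentially the same route as the paper: show that the grand coalition is Nash-stable with high probability by applying Hoeffding's inequality to each agent's utility for $N$ (via the negated sum to get the lower tail) and then taking a union bound over the $n$ agents. The only difference is cosmetic—your exponent $\frac{(n-1)(a+b)^2}{2(b-a)^2}$ is in fact the cleaner, correctly computed Hoeffding bound, whereas the paper's displayed exponent has a minor slip in the denominator; both decay exponentially, so the conclusion is unaffected.
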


\begin{proof}
For $i\in [n-1]$, let $X_i\sim \distr$.
Then, for agent $\agone\in N$, it holds that $u_{\agone}(N) = \sum_{i=1}^{n-1} X_i$, which describes the utility of $\agone$ for the grand coalition. 
We have that $\Ex\left(\sum_{i=1}^{n-1} X_i\right)=n\frac{a+b}{2}$.
Hence, by \citeauthor{Hoef63a}'s inequality
\begin{align*}\Prob\left(\sum_{i=1}^{n-1} X_i\leq 0\right)&=\Prob\left(\sum_{i=1}^{n-1} -X_i\geq 0\right)
\\&=\Prob\left(\sum_{i=1}^{n-1} -X_i+n\frac{a+b}{2}\geq(n-1)\frac{a+b}{2}\right)\\
&\leq e^{-\frac{(n-1)^2(a+b)^2}{2\cdot(b-a)(n-1)}}=e^{-\frac{(n-1)(a+b)^2}{2\cdot(b-a)}}\text.\end{align*}
The inequality follows from \Cref{thm:Hoeffding} with i.i.d. random variables $-X_1,\ldots,-X_{n-1}$ distributed according to $U(-b,-a)$.
    By a union bound over all agents, we can bound the probability that any agent has negative utility for the grand coalition.
    Hence,
    \begin{align*}
    &\Prob\left((N,u) \text{ admits a Nash-stable partition}\right)\\
    &\geq \Prob\left(\{N\} \text{ is Nash-stable for } (N,u)\right)\geq 1- ne^{-\frac{(n-1)(a+b)^2}{2\cdot(b-a)}}\text,\end{align*}
    which tends to $1$ as $n$ tends to infinity, as required.
\end{proof}

\section{Design of our Three-Stage Clustering Algorithm}\label{sec:2stageclust}

From our initial consideration in the previous section, we have gained two insights.
{\ExvNoun} is very powerful in achieving stability.
Moreover, another condition that even yields Nash stability is a reliable uniform bias towards positive utilities.
In this section, we work towards achieving stable partitions for the unbiased distribution $U(-1,1)$.
We will derive our main technical contribution, namely an algorithm that produces nondegenerate partitions with strong stability guarantees including individual stability. 
With high probability, this algorithm produces a partition with $\Theta\left(\frac{n}{\log n}\right)$ coalitions of size $\Theta(\log n)$.
We start with an overview of our approach.
Then, we present our algorithm and structural properties of our algorithm in this section, and stability guarantees in the subsequent section.

\subsection{Outline of the Analysis}\label{sec:proofoutline}
An advantage of our proof of the existence of individually stable partitions is that it is constructive.
In particular, we give an efficient algorithm to compute individually stable partitions with high probability.
We start by describing the key steps of our algorithm.

The general idea of the algorithm is that we want to create coalitions where every agent has a high utility.
We achieve this by considering an {\ER} random graph with edges induced by pairs of agents with mutually high utility.
We then run a greedy clique formation algorithm, based on the greedy graph coloring algorithm by \citet{grimmett1975colouring}, to subsequently find cliques in this graph which we use as coalitions.
This algorithm follows the simple idea of iterating through the set of agents and adding them to a tentative clique whenever possible.

We want our algorithm to satisfy three properties with high probability:
\begin{enumerate}
    \item Individual rationality: Every agent has nonnegative utility.
    \item {\EnvNoun}: For every agent $\agone$ and coalition $C$ other than the agent's coalition, some agent in $C$ has negative utility for $\agone$.
    \item {\ExvNoun}: For every agent $\agone$, some agent in their own coalition has positive utility for $\agone$.
\end{enumerate}

Together, these properties imply individual stability as well as contractual Nash stability.
Since {\exvNoun} follows for similar reasons like {\envNoun}, we focus on giving intuition on how to achieve {\envNoun}.

Unfortunately, the naive algorithm of iteratively forming cliques and removing them from the graph as a coalition is not sufficient for our purpose.
Firstly, at some point of the algorithm, there will only be a small number of agents remaining and reasonably large cliques are impossible to form.
Hence, we need to figure out a way to deal with \emph{remaining agents}.
Secondly, we need to tailor the proportion of the number of created coalitions and the sizes of the created coalitions.
If we run the simple algorithm of iteratively forming cliques, we end up with a coalition structure that satisfies individual rationality.
However, we do not have a straightforward way to achieve {\envNoun} because the obtained coalitions are too small to have every coalition reject every agent outside of the coalition with high probability.
Surprisingly, the created coalitions in the basic approach are only too small by a constant factor.
Hence, our solution is to merge a finite number of coalitions.
We can do this in a way where utilities only decrease a bit. 
Therefore, we maintain individual rationality while establishing {\envNoun}.

\begin{figure}
    \centering
    \pgfmathsetmacro\xbredth{1}
    \pgfmathsetmacro\ybredth{1}
    \begin{tikzpicture}
        \foreach[count = \k] \i in {0,3*\xbredth,9*\xbredth}
        {
        \foreach[count = \l] \j in {10.5*\ybredth,7.5*\ybredth,3.6*\ybredth,0}
        {\node[draw, circle, minimum size = 5em] (C\k\l) at (\i,\j){};
        }
        \draw[rounded corners = 1em] ($(-.8,-1.6)+(\i,0)$) rectangle ($(.8+\i,-4)$);
        \node[draw, circle, minimum size = .4em] at ($(.4+\i,-3.5)$) {};
        \node[draw, circle, minimum size = .4em] at ($(.2+\i,-2.8)$) {};
        \node[draw, circle, minimum size = .4em] at ($(.4+\i,-2.1)$) {};
        \node[draw, circle, minimum size = .4em] at ($(-.4+\i,-3.3)$) {};
        \node[draw, circle, minimum size = .4em] at ($(-.4+\i,-2.3)$) {};
        } 

        \foreach \k in {1,2,3}
        {
        \foreach \l in {1,2,3,4}
        {
        \node[draw, circle, minimum size = .4em] (a\k\l) at ($(C\k\l) + (90:.4)$) {};
        \node[draw, circle, minimum size = .4em] (b\k\l) at ($(C\k\l) + (210:.4)$) {};
        \node[draw, circle, minimum size = .4em] (c\k\l) at ($(C\k\l) + (330:.4)$) {};
        \draw 
        (a\k\l) edge (b\k\l)
        (b\k\l) edge (c\k\l)
        (c\k\l) edge (a\k\l);
        \node (dots\l) at (barycentric cs:C2\l=1,C3\l=1) {\Large\textbf{\dots}};
        }
        \node[rotate = 90] at (barycentric cs:C\k3=1,C\k4=1) {\Large\textbf{\dots}};
        \node[rotate = 90] at (barycentric cs:C\k2=1,C\k3=1) {\Large\textbf{\dots}};
        }

        \node at ($(dots4) + (0,-2.8)$) {\Large\textbf{\dots}};
        
        \foreach \i/\k in {10.5*\ybredth/1,7.5*\ybredth/2,3.6*\ybredth/{t'}} 
        {
        \draw[rounded corners = 2em] ($(-1.5\xbredth,-1.2\xbredth) + (0,\i)$) rectangle ($(1.5*\xbredth+9*\xbredth,1.2\xbredth) + (0,\i)$);
        \node at ($(1.9*\xbredth+9*\xbredth,\i)$) {$C_{\k}$};
        }
        \draw[rounded corners = 2em] ($(-1.5\xbredth,-4.2\xbredth) + (0,0)$) rectangle ($(1.5*\xbredth+9*\xbredth,1.2\xbredth) + (0,0)$);
        \node at ($(1.9*\xbredth+9*\xbredth,-1.5\xbredth)$) {$R$};
        
        \foreach \i/\k in {0/1,3*\xbredth/2,9*\xbredth/20}
        {
        \draw[rounded corners = 2em] ($(-1.2\xbredth,-1.5\xbredth) + (\i,-3)$) rectangle ($(1.2\xbredth,1.5*\xbredth+10.4*\ybredth) + (\i,0)$);
        \node at ($(\i,1.9*\xbredth+10.3*\ybredth)$) {$N^{\k}$};
        }

        \foreach[count = \k] \i in {1,2,20}
        {
        \foreach[count = \l] \j in {1,2,{t'},t}
        {
        \node (l\k\l) at ($(C\k\l) + (45:1.2)$) {$C_{\j}^{\i}$};
        
        }
        \node at ($(l\k4)+(0.05,-2.3)$) {$R^{\i}$};
        }
    \end{tikzpicture}
    \caption{Visualization of our coalition formation algorithm.
    In the first stage, illustrated by the vertical sets, we consider $G^i$ for $i\in [\partnumber]$ and form clique coalitions $C^i_1,\dots,C^1_t$ by \Cref{alg:greedycol}.
    In each of the $G^i$, we have a set $R^i_1$ of remaining vertices.
    In the second stage, illustrated by horizontal coalitions, we merge the cliques obtained in the first stage by \Cref{alg:greedy} form the merged coalitions $C_1,\dots, C_{t'}$. 
    We are left with the remainder set $R$ of agents that we add to the obtained coalitions in the final stage.
    }
    \label{fig:algo}
\end{figure}
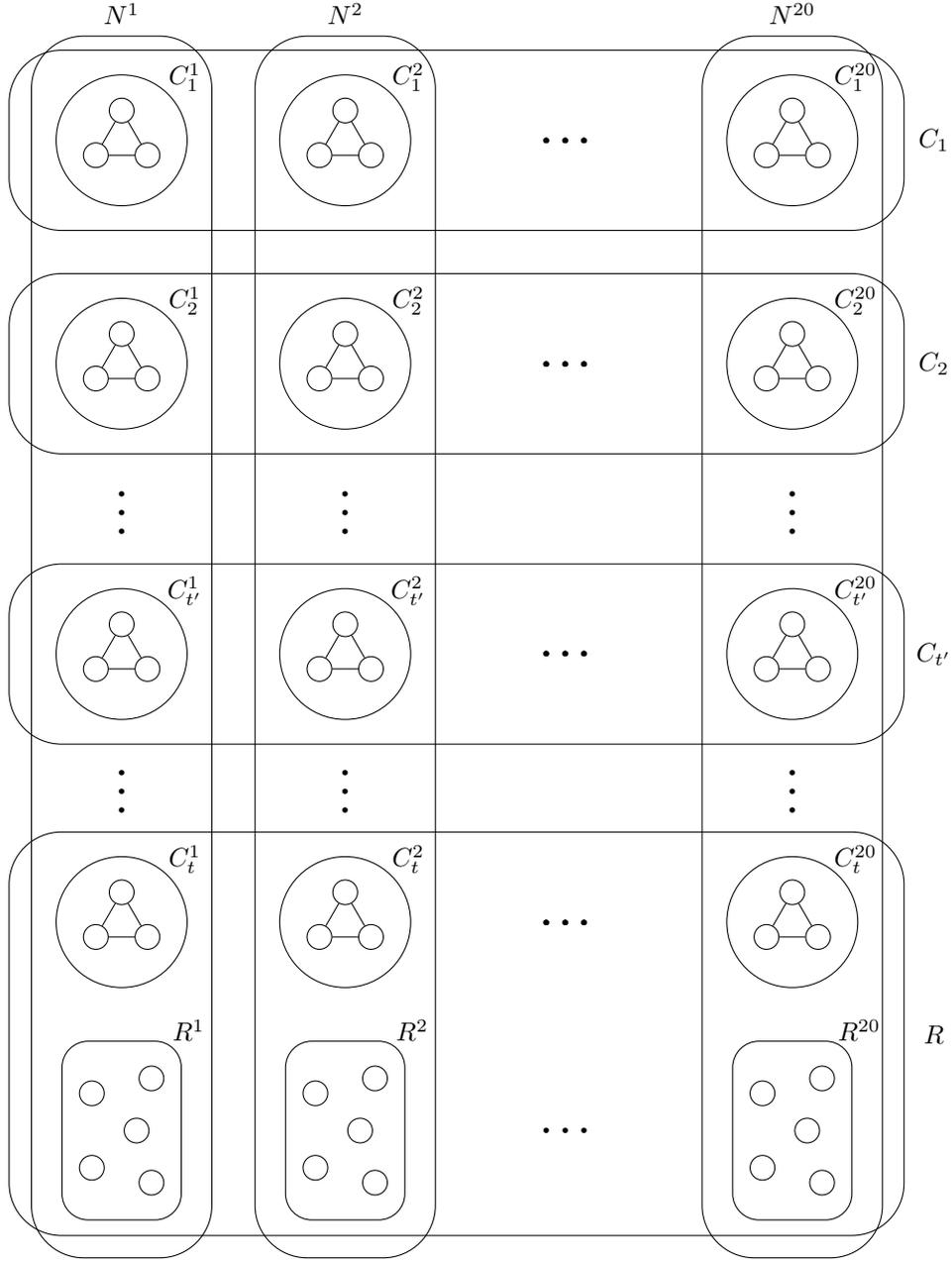

We now describe our main algorithm, whose basic structure is presented as \Cref{alg:main} and which is visualized in \Cref{fig:algo}.
First, we partition the agent set into $20$ subsets $N^1,\dots, N^{\partnumber}$ of (almost) equal size.
These sets are indicated by the vertical boxes in \Cref{fig:algo}.
For each of these sets, we run our greedy clique formation algorithm {\gdycliques} on the respective induced subgame.
For $i\in [20]$, this yields a partial partition $\partition_i$ of $N^i$ together with a remainder set $R^i_1$.
We collect the union of the remainder sets as a set $R_1$.
We refer to this as \emph{Stage~1} of the algorithm.
Moreover, we refer to partitions produced by {\gdycliques} as \emph{clique partitions} and to the coalitions of a clique partition as a \emph{clique coalition}.

We continue with \emph{Stage~2} of the algorithm, where we cluster the clique partitions into larger coalitions that are individually rational and consist of exactly one coalition from each of the clique partitions.
This yields a partial partition $\partition$ as well as a further remainder set $R_2$.
The partition $\partition$ is indicated by the horizontal boxes in \Cref{fig:algo}.
We refer to partitions created by {\gdycluster} as \emph{merged partitions} and to the coalitions of a merged partition as a \emph{merged coalition}.
Finally, in \emph{Stage~3}, we add all agents in remainder sets to the coalitions of the partial partition $\partition$.
We will specify how exactly to select the respective coalitions in \Cref{spec:stagethree}.
This yields our final partition $\partition^*$.

\begin{algorithm}[tb]
  \caption{Computation of desirable partition in random ASHGs}\label{alg:main}
	\KwIn{Additively separable hedonic game $(N,u)$}
	\KwOut{Partition $\partition^*$}

Partition $N$ into $20$ sets $N^1,\dots, N^{\partnumber}$ with $|N^i| \in \left\{\left\lfloor \frac {|N|}{20}\right\rfloor, \left\lceil \frac {|N|}{20}\right\rceil\right\}$ for all $i\in [20]$  \\
\For{$i\in [20]$}
{
\tikzmark{ph1}$(\partition_i,R^i_1) \leftarrow \gdycliques(N^i,u|_{N^i})$% 
}
$R_1 \leftarrow \bigcup_{i\in [20]}R^i_1$\\
\tikzmark{ph2}$(\partition,R_2)\leftarrow \gdycluster((N,u),(\partition_1,\dots,\partition_{\partnumber}))$\\ 
$\partition'\leftarrow \emptyset$,
$R\leftarrow R_1\cup R_2$\\
\For{$x\in R$}{
\tikzmark{ph3}
Select $C\in \partition$\\
$\partition' \leftarrow \partition' \cup \{C\cup\{x\}\}$,
$\partition\leftarrow\partition\setminus\{C\}$% 
}
$\partition^*\leftarrow \partition\cup \partition'$\\
\KwRet {$\partition^*$}

\begin{tikzpicture}[remember picture, overlay]
\draw[thick, decoration={brace,raise=15pt},decorate]
  ([yshift=4.5ex,xshift = .8\textwidth]pic cs:ph3) -- node[right=20pt] (st3) {Stage $3$} ([yshift=-3.5ex,xshift = .8\textwidth]pic cs:ph3);
\draw[thick, decoration={brace,raise=15pt},decorate]
  ([yshift=4.5ex,xshift = .8\textwidth]pic cs:ph1) -- node[right=20pt] {Stage $1$} ([yshift=-3.5ex,xshift = .8\textwidth]pic cs:ph1);
\node at ([yshift = 8.5ex]st3) {Stage $2$};
\end{tikzpicture}

\end{algorithm}

\subsection{Stage 1: Greedy Clique Formation}

We now go through each of the stages of the algorithm separately and present the specific algorithms and theory.
We start with the details for the first stage of \Cref{alg:main}, where we execute the greedy clique formation procedure.
We essentially apply the greedy coloring algorithm by \citet{grimmett1975colouring}, which was designed for coloring {\ER} graphs.
Based, on the mutual utilities of agents, we assume that edges are present.
We then simply try to enlarge cliques until they reach a certain size, and remove them from the graph.
We iterate this until only a small fraction of vertices remains.
The pseudo code of this stage is presented in \Cref{alg:greedycol}.

\begin{algorithm}[tb]
\caption{\gdycliques}
	\label{alg:greedycol}
	\KwIn{Additively separable hedonic game $(N,u)$}
	\KwOut{Partition $\partition$ and remainder set $R$}
        $\partition \leftarrow \emptyset$\\
        $R \leftarrow V$\\
        \While{$R\neq \emptyset$}{
		Select $v \in R$\\
            $C \leftarrow \{v\}$\\
            $L \leftarrow R$\\
            \While{$L\neq \emptyset$ and $|C|<\ceil*{\frac{\log_{16}n}{2}}$}{
                Select $w \in L$\\
			\If{ $u_w(z)\ge \frac 12$ and $u_z(w) \ge \frac 12$ for every $z \in C$}{
				$C \leftarrow C\cup \{w\}$\\
			}
                $L \leftarrow L\setminus \{w\}$\\

		}
            \If{$|C|<\ceil*{\frac{\log_{16}n}{2}}$}{
                \KwRet $(\partition,R)$\\            
            }
            $R$ = $R\setminus C$\\
			$\partition $ = $\partition \cup \{\{C\}\}$
	}
 \KwRet $(\partition,R)$
\end{algorithm}

The critical property for the guarantee of \Cref{alg:greedycol} for random ASHGs can be extracted from the following insight about {\ER} graphs.
The proof of this result is analogous to the case $p=\frac{1}{2}$ as considered by \citet{grimmett1975colouring}.
We adapt the proof for dealing with general $p\in(0,1)$. 

\begin{restatable}{theorem}{maximalclique}\label{thm:greedycolouring}
Consider the {\ER} graph $G(n,p)$. Then, with probability at least $1-e^{-\Omega(\log_b^3 n)}$, every maximal clique of $G(n,p)$ has order greater than $\log_{b} n-3 \log_{b} \log_{b} n$ where $b=\frac{1}{1-p}$.
\end{restatable}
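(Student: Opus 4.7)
The plan is a standard first-moment (union-bound) argument. Let $X_k$ denote the number of maximal cliques of size exactly $k$ in $G(n,p)$, and set $k^* := \log_b n - 3\log_b \log_b n$. By Markov's inequality,
\begin{align*}
\Prob[\text{some maximal clique has order} \leq k^*] \leq \sum_{k=1}^{\lfloor k^* \rfloor} \Ex[X_k],
\end{align*}
so it suffices to show this sum is at most $e^{-\Omega(\log_b^3 n)}$.

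The first step computes $\Ex[X_k]$ explicitly. For a fixed $k$-subset $C$ of vertices, the event ``$C$ is a maximal clique'' decomposes into two events concerning disjoint edge sets (those inside $C$ versus those leaving $C$), which are therefore independent: (i) $C$ spans a clique, of probability $p^{\binom{k}{2}}$; and (ii) no vertex outside $C$ is adjacent to every vertex of $C$, of probability $(1 - p^k)^{n - k}$. Thus
\begin{align*}
\Ex[X_k] = \binom{n}{k}\, p^{\binom{k}{2}}\, (1 - p^k)^{n - k}.
\end{align*}
The key estimate uses $(1 - p^k)^{n-k} \leq \exp(-p^k(n-k))$ together with the monotonicity $p^k \geq p^{k^*}$ for $k \leq k^*$, which bounds this last factor by $\exp(-p^{k^*}(n - k^*))$. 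The defining expression for $k^*$ is calibrated precisely so that $p^{k^*}(n - k^*) = \Theta(\log_b^3 n)$, producing the target factor $e^{-\Omega(\log_b^3 n)}$ from this term alone.

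The main obstacle is controlling the combinatorial prefactor $\binom{n}{k}\, p^{\binom{k}{2}}$, which is the expected number of $k$-cliques in $G(n,p)$ and can itself be exponentially large in $n$; it must be dominated by $(1-p^k)^{n-k}$. For $k \leq k^* \leq \log_b n$ I would crudely bound it by $n^k \leq \exp(\log^2 n / \log b)$ (discarding the $p^{\binom{k}{2}} \leq 1$ factor). Since $\log_b^3 n = (\log n / \log b)^3$ exponentially dominates $\log^2 n / \log b$ whenever $\log b = o(\sqrt{\log n})$---which holds for any fixed $p$ as $n \to \infty$---we obtain $\Ex[X_k] \leq e^{-(1-o(1))\log_b^3 n}$ for each $k \leq k^*$. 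Summing over the at most $O(\log n)$ values of $k$ loses only a polynomial factor in $n$ that is absorbed into the $\Omega(\cdot)$, and the claim follows. The adaptation from the $p=\frac{1}{2}$ case of \citet{grimmett1975colouring} is essentially bookkeeping: one carries the base $b$ through the calibration of $p^{k^*}$ and verifies that the above dominance condition still holds for general $p\in(0,1)$.
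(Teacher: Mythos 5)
Your overall architecture is the same as the paper's: a first-moment bound on the number of maximal cliques of each size $k\le k^*$, Markov's inequality, and the observation that a factor $e^{-\Theta(\log_b^3 n)}$ swamps the prefactor $\binom{n}{k}\le n^k=e^{O(\log_b^2 n)}$ when $p$ is fixed. The gap is in the calibration step, which is the heart of the proof. You bound the maximality factor by $(1-p^k)^{n-k}\le e^{-p^{k^*}(n-k^*)}$ and assert that $k^*=\log_b n-3\log_b\log_b n$ is ``calibrated precisely so that $p^{k^*}(n-k^*)=\Theta(\log_b^3 n)$.'' That identity requires $p^{\log_b n}=1/n$, i.e.\ $b=1/p$; with the theorem's $b=\frac{1}{1-p}$ it holds only at $p=\frac12$. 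In general $p^{k^*}(n-k^*)\approx n^{\,1-\log_b(1/p)}\cdot(\log_b n)^{3\log_b(1/p)}$, so for $p>\frac12$ you get a polynomially large exponent (harmless, the bound still follows), but for $p<\frac12$ — the regime the paper actually needs, $p=\frac1{16}$ in \Cref{thm:cliquestage} — one has $\log_b(1/p)>1$, hence $p^{k^*}(n-k^*)\to 0$, your exponential factor is $1-o(1)$, and nothing controls the $\binom{n}{k}$ term: the argument gives no nontrivial bound. So the concluding claim that the adaptation to general $p\in(0,1)$ is ``essentially bookkeeping'' is exactly where the proof fails.

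The paper's proof diverges at precisely this point: it bounds the probability that a size-$k$ clique is maximal by $\left(1-(1-p)^k\right)^{n-k}$, the probability that every vertex outside $C$ has at least one edge into $C$, and for that event the stated $k^*$ is the right calibration, since $(1-p)^{k^*}=\log_b^3 n/n$ with $b=\frac1{1-p}$, yielding $n^k e^{-(1-k/n)\log_b^3 n}=e^{-\Omega(\log_b^3 n)}$. Your factor $(1-p^k)^{n-k}$ (no outside vertex adjacent to all of $C$) is the natural non-extendability condition for cliques, but it is calibrated to base $1/p$, not $1/(1-p)$; carried through honestly, your argument proves the statement with $b=1/p$ rather than the stated $b$. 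Reconciling the maximality event you use with the base $b$ in the theorem (or redoing the calibration for the correct base) is the missing step; as written, the ``$\Theta(\log_b^3 n)$'' claim papers over it.
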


\begin{proof}
Consider a clique $C$ of size $k$ in $G(n,p)$.
Then, the probability that $C$ is maximal is $\left(1-(1-p)^k\right)^{n-k}$ because this is the probability that every vertex outside of $C$ has an edge to at least one vertex inside of $C$. 
As there are at most $\binom{n}{k} \le n^k$ cliques of size $k$, 
the expected number of maximal cliques of size at most $t=\log_{b} n -3\log_b\log_b(n)$ is at most 
\begin{align*}
&\sum_{k=1}^{\floor*{\log_b n -3 \log_b\log_b(n)}}\binom{n}{k}\left(1-(1-p)^k\right)^{n-k}\\&\leq\sum_{k=1}^{\floor*{\log_b n -3\log_b\log_b(n)}}n^k e^{-(n-k)(1-p)^{k}}\\
&\leq (\log_b n) n^{t} e^{-(1-\frac{t}{n})n(1-p)^t}\\
&\leq (\log_b n) n^{\log_b n} e^{-(1-\frac{t}{n})\log_b^3 n}\\
&=e^{-\Omega(\log_b^3 n)}
\end{align*}
since for $1\le k\le \log_b n -3 \log_b\log_b n$, the expression $n^k e^{-(n-k)(1-p)^{k}}$ is maximised if $k=\log_b n -3 \log_b\log_b n$.
Then, by Markov's inequality, the probability that we have at least one maximal clique of size at most $\log_b n -\log_b\log_b n$ is at most $e^{-\Omega(\log_b^3 n)}$.
\end{proof}

We can apply this insight to prove a guarantee for \Cref{alg:greedycol}.
Our analysis follows the proof by \citet{grimmett1975colouring} to show that their greedy coloring algorithm leads to a coloring with $\mathcal O\left(\frac n{\log_b n}\right)$ colors.

\begin{restatable}{theorem}{cliquepartition}\label{thm:cliquestage} 
Consider a random hedonic game $(N,u)\sim \randgame$ where $\distr = U(-1,1)$. 
Then, with probability $1 - e^{-\Omega(\log^3_b n)}$, \Cref{alg:greedycol} terminates with an output $(\partition, R)$ such that
\begin{itemize}
    \item $\partition$ is an $s$-balanced partition of $N\setminus R$ for $s = \ceil*{\frac{\log_{16}n}{2}}$ and
    \item $R$ is of size $|R| \le \frac n{\log^2_{16} n}$. 
\end{itemize}
\end{restatable}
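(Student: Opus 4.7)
The plan is to reinterpret \Cref{alg:greedycol} as greedy clique extraction on an {\ER} graph and then invoke \Cref{thm:greedycolouring}. Define the auxiliary undirected graph $G$ on $N$ by declaring $\{v,w\}$ an edge iff $u_v(w)\ge 1/2$ \emph{and} $u_w(v)\ge 1/2$. Since $u_v(w)$ and $u_w(v)$ are independent $U(-1,1)$ random variables and distinct unordered pairs involve disjoint utility values, each edge is present independently with probability $p=1/16$; write $b=1/(1-p)=16/15$, so that $G\sim G(n,p)$. Under this identification, the inner loop of \Cref{alg:greedycol} is exactly the greedy procedure that tries to grow a clique to size $s=\ceil*{\frac{\log_{16}n}{2}}$ inside $G[R]$ by testing one candidate at a time. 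Any round that does not exit early therefore contributes a coalition of size exactly $s$ to $\partition$, so the ``$s$-balanced'' conclusion holds deterministically, and the only probabilistic task is to bound $|R|$ at termination.

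Next I would analyse one round by a deferred-decision argument. Let $R_i$ denote the remainder set at the start of round~$i$ and let $C^{(j)}$ be the clique built in round $j<i$. Every utility pair the algorithm has queried up to round $i$ corresponds to an edge incident to some $C^{(j)}$, because the inner loop only compares a tentative clique member to a candidate being tested; since $C^{(j)}\subseteq N\setminus R_i$, no edge of $G[R_i]$ has been exposed. Conditional on the history, it follows that $G[R_i]\sim G(|R_i|,1/16)$. Applying \Cref{thm:greedycolouring} to this subgraph, with probability at least $1-e^{-\Omega(\log_b^3|R_i|)}$ every maximal clique of $G[R_i]$ has size greater than $\log_b|R_i|-3\log_b\log_b|R_i|$. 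Whenever $|R_i|\ge n/\log_{16}^2 n$, this quantity is $\Omega(\log n)$ and, because $\log_b n / s$ is a large constant of order $\log 16 / \log(16/15)$, strictly exceeds $s$ for all sufficiently large~$n$.

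I would then link this bound to the inner loop. If that loop exits with $L=\emptyset$ and $|C|<s$, then $C$ is a maximal clique of $G[R_i]$: by monotonicity of cliqueness, any $w\in R_i\setminus C$ was tested earlier against a subclique of the eventual $C$, failed that test, and hence cannot be adjacent to all of the final $C$ either. Under the event above no such small maximal clique exists, so the inner loop must instead exit with $|C|=s$ and the round succeeds. A union bound over the at most $n/s = O(n/\log n)$ rounds in which $|R_i|\ge n/\log_{16}^2 n$ yields a total failure probability of $O(n/\log n)\cdot e^{-\Omega(\log_b^3 n)}=e^{-\Omega(\log_b^3 n)}$, absorbing the polynomial factor via $\log n = o(\log_b^3 n)$. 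On the complementary event, the first round that fails (if any) must satisfy $|R_i|<n/\log_{16}^2 n$, and otherwise the algorithm runs all the way to $R=\emptyset$; either way the returned remainder meets the claimed size bound.

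The main technical subtlety lies in the deferred-decision step. Vertices that were tested and rejected during the formation of some earlier $C^{(j)}$ remain in $R_i$, and queries were made between those vertices and the elements of $C^{(j)}$; one must verify carefully that no edge with \emph{both} endpoints in $R_i$ has been exposed. Fortunately every queried edge always has one endpoint inside the clique currently being built (which ends up as $C^{(j)}\subseteq N\setminus R_i$), so the undisclosed edges of $G[R_i]$ retain their independent Bernoulli$(1/16)$ law and the invocation of \Cref{thm:greedycolouring} is legitimate.
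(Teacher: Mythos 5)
Your proposal is correct and follows essentially the same route as the paper's proof: reinterpret \Cref{alg:greedycol} as greedy clique extraction in $G(n,1/16)$, apply \Cref{thm:greedycolouring} to the remaining graph while the remainder has size at least $n/\log_{16}^2 n$, and union bound over the at most polynomially many rounds. Your extra care in spelling out why a short clique produced with $L=\emptyset$ is maximal and why no edge inside the current remainder has been exposed (so the conditional law is still \ER) is exactly the justification the paper leaves implicit, so there is nothing to fix.
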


\begin{proof}
    First, note that \Cref{alg:greedycol} can be interpreted as attempting to iteratively find a clique of size at least $\ceil*{\frac{\log_{16}n}{2}}$ in an {\ER} graph where the edge probability is $\frac 1{16}$.
    The latter is true because for the uniform distribution $\distr = U(-1,1)$ and for any pair of agents $w$ and $z$, it independently holds that $u_w(z)\ge \frac 12$ and $u_z(w)\ge \frac 12$ with probability $\frac 1{16}$.

    Now, assume that we reach the first while condition with a set $R$ with $n' = |R|$ with $n' \ge  \frac{n}{\log_b^2 n}$.
    By \Cref{thm:greedycolouring} applied to $G\left(n',\frac 1{16}\right)$, with probability $1-e^{-\Omega(\log_b^3 n')} = 1 - e^{-\Omega\left(\log_b^3 \left(\frac n{\log_b^2 n}\right)\right)} = 1-e^{-\Omega(\log_b^3 n)}$, there exists a clique of size at least 
    \begin{align*}\log_b n' &- 3 \log_b \log_b n'\geq \log_b (\frac{n}{\log_b^2 n}) - 3 \log_b \log_b (\frac{n}{\log_b^2 n})\\
    &=\log_b n-2\log_b\log_b  n -3 \log_b \log_b n+ 6 \log_b\log_b\log_b n \\
    &\geq \log_b n - 5 \log_b \log_b 
    n\\
    &\ge \ceil*{\frac{\log_{16}n}{2}}\text.\end{align*}

    Hence, by a union bound, the probability that we exit the while loop before $|R| < \frac{n}{\log_b^2 n}$ is bounded by $n e^{-\Omega(\log_b^3 n)}$, where we use the fact that \Cref{alg:greedycol} creates at most $n$ coalitions.
    In other words, \Cref{alg:greedycol} terminates with a remainder set of size smaller than $\frac{n}{\log_b^2 n}$ with probability at least $1 - n e^{-\Omega(\log_b^3 n)} = 1 - e^{-\Omega(\log^3_b n)}$, as desired.
\end{proof}

These properties motivate us for the following definition.
\begin{definition}\label{def:success:one}
    We say that \Cref{alg:main} \emph{succeeds in Stage~1} if for all $i\in [\partnumber]$ $\gdycliques(N^i,u|_{N^i})$ produces a pair of coalition and remainder set $(\partition_i,R^i_1)$ such that $\partition_i$ is an $s$-balanced partition of $N^i$ for $s = \ceil*{\frac{\log_{16}n}{2}}$ and $|R_1^i| \le \frac n{\log^2_{16} n}$.
\end{definition}

Note that \Cref{thm:cliquestage} immediately implies that \Cref{alg:main} succeeds in Stage~1 for random input games with probability $1 - 20 e^{-\Omega(\log^3_b n)} = 1 - e^{-\Omega(\log^3_b n)}$.
\subsection{Conditional Utilities after Stage 1}

In the analysis of our algorithm in \Cref{sec:properties}, we will have to compute the probability that the output satisfies certain properties.
However, these properties depend on the distribution of the utilities between the agents.
By the principle of deferred decisions \citep[see, e.g.,][Chapter~1.3]{MiUp05a}, the distribution of utilities that have not been revealed by the algorithm remains the original distribution, e.g., the uniform distribution $U(-1,1)$ if our game was sampled for $\distr = U(-1,1)$.

For the revealed utilities, the conditional distribution depends on the conditions that our algorithm checks for these utilities.
In this section, we want to establish the conditional distributions for the utilities revealed by the algorithm in Stage~1.

In \Cref{alg:greedycol}, we repeatedly check the condition whether a new agent $w$ can join a tentative coalition $C$, which is the case if $u_w(z)\ge \frac 12$ and $u_z(w) \ge \frac 12$ for every $z \in C$.
More specifically, we can implement this by checking agent pairs one by one. Once we reveal a utility of less than $\frac 12$, we do not reveal any further utility values and immediately exit the if-condition.
Revealing that an agent's utility for another agent is at least $\frac 12$ or less than $\frac 12$ leads to conditional utility distributions $U(0.5,1)$ and $U(-1,.5)$, respectively.
A visualization is given in \Cref{fig:revelation}.
We summarize this in the following specification for \Cref{alg:greedycol}.

\begin{specification}\label{fact:revelation}
    The inner if-condition of \Cref{alg:greedycol} can be implemented such that the following holds:
    For every coalition $C$ created by \Cref{alg:greedycol} and agent $a\in N\setminus C$, at most one revealed utility value of~$a$ for an agent in $C$ is less than $\frac12$, whereas all other revealed utilities are at least $\frac 12$.
    In particular, if $\distr = U(-1,1)$, then there exists a subset $C'\subseteq C$ with $|C'|\le 1$ such that the conditional utility after Stage~1 between $a$ and $z\in C$ is $U(-1,.5)$ if $z\in C'$,  $U(0.5,1)$ if $z\notin C'$ and $u_a(z)$ was revealed by \Cref{alg:greedycol}, and $U(-1,1)$ if $u_a(z)$ was not revealed by \Cref{alg:greedycol}. 
\end{specification}

\begin{figure}
    \centering
        \begin{tikzpicture}
            \node[draw, circle] (v1) at (0,4) {};
            \node[draw, circle] (v2) at (0,1.5) {};
            \node[draw, circle] (v3) at (0,-1) {};

            \node[draw, circle,label = 0:$\agtwo$] (w1) at (4,.6) {};
            \node[draw, circle,label = 0:$\agone$] (w2) at (4,3.4) {};

            \draw[bend right = 15,->] (v1) edge node[pos = .7, fill = white] {\footnotesize $< \frac 12$} (w2);
            \draw[bend right = 15,->] (w2) edge node[pos = .3, fill = white] {\footnotesize $\ge \frac 12$} (v1);
            \foreach \i in {3,2}{
            \draw[bend right = 12,->] (v\i) edge node[pos = .45, fill = white] {\footnotesize $\ge \frac 12$} (w1);
            \draw[bend right = 12,->] (w1) edge node[pos = .7, fill = white] {\footnotesize $\ge \frac 12$} (v\i);}
            \draw[bend right = 12,->] (v1) edge node[pos = .4, fill = white] {\footnotesize $\ge \frac 12$} (w1);
            \draw[bend right = 12,->] (w1) edge node[pos = .5, fill = white] {\footnotesize $\ge \frac 12$} (v1);

            \draw (0,1.5) ellipse (.9cm and 3cm);
            \node at (-1.1,1.5) {$C$};
        \end{tikzpicture}
    \caption{Visualization of utility revelation in Stage~1.
    Assume that $C$ is the tentative coalition.
    When checking whether $\agone$ can join $C$, we reveal a utility smaller than $\frac 12$ and exit the if-condition.
    For $\agtwo$, all utilities are large enough and therefore $\agtwo$ is added to $C$.}
    \label{fig:revelation}
\end{figure}
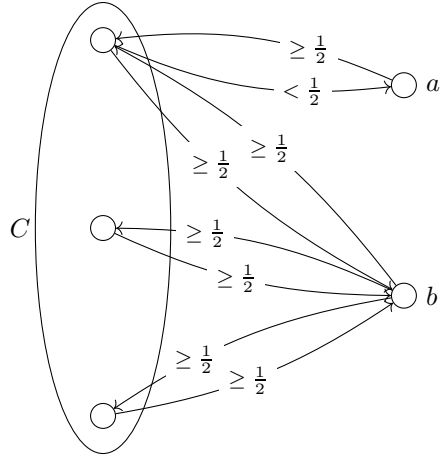

\subsection{Stage 2: Merging Coalitions via Greedy Clustering}

In the second stage of the algorithm, we aim at merging the $\partnumber$ partitions obtained in Stage~1 into larger coalitions.
Recall that we refer to the coalitions and partitions created by \Cref{alg:greedy} as merged coalitions and merged partitions, respectively.
The goal is to maintain individual rationality while enlarging coalitions to facilitate achieving {\envNoun}.
Hence, we want to merge coalitions if this operation does not decrease the utility of each involved agent by too much.

Assume that we are given a sequence of disjoint partial partitions $(\partition_1,
\dots, \partition_{\partnumber})$ in an random game $(N,u)$.
In \Cref{alg:main}, we will even have that each of these partitions is $s$-balanced for the same $s$.
Given a coalition $M\subseteq N$ and a coalition $C\in \partition_k$, we say that $C$ is \emph{compatible with} $M$ if, for all agents $\agone\in M$ and $\agtwo\in C$, it holds that $u_{\agone}(C) \ge -\frac 1{80}\ceil*{\frac{\log_{16}n}{2}}$ and $u_{\agtwo}(M) \ge -\frac {k-1}{80}\ceil*{\frac{\log_{16}n}{2}}$.
If this condition is not satisfied, then we say that $C$ is \emph{incompatible with} $M$.
If the coalition $M$ is clear from the context, we simply say that $C$ is \emph{incompatible}.
If \Cref{alg:greedy} executes the first if-condition with $C_1,\ldots C_{k-1}$ and $C$, then we call this a \emph{merge attempt}.
The merge attempt is said to be \emph{successful} if $C$ is compatible with $\bigcup_{i=1}^{k-1} C_i$, and unsuccessful, otherwise.

\begin{algorithm}[tb]
  \caption{\gdycluster}
  \label{alg:greedy}
	\KwIn{Additively separable hedonic game $(N,u)$ together with sequence $(\partition_1,
\dots, \partition_{\partnumber})$ of disjoint partial partitions of~$N$}
	\KwOut{Partial partition $\partition$ and remainder set $R$
 }

$\partition \leftarrow \emptyset$\\
\While {$\partition_1\neq \emptyset$}{
$C_k \leftarrow\emptyset$ for all $k\in [\partnumber]$\\
\For {$k = 1,\dots, \partnumber$}{
\For {$C\in \partition_k$}
{\If {$C$ is compatible with $\bigcup_{i=1}^{k-1} C_i$}
{
$C_k\leftarrow C$\\
\textbf{break}
}
} 

} 

\If {$C_k = \emptyset$ for some $k\in [\partnumber]$}{
    $R\leftarrow \bigcup_{k = 1}^{\partnumber} \bigcup_{C\in \partition_k}C$\\
    \KwRet $(\partition,R)$
}
{$\partition_k \leftarrow \partition_k \setminus \{C_k\}$ for all $k\in [\partnumber]$\\
$\partition \leftarrow \partition \cup \left\{\bigcup_{k=1}^{\partnumber}C_k\right\}$
}
}
$R\leftarrow \bigcup_{k = 1}^{\partnumber} \bigcup_{C\in \partition_k}C$\\
\KwRet $(\partition,R)$
\end{algorithm}

In \Cref{alg:greedy}, we seek to merge one coalition from each of the partitions $\partition_i$, $i\in [20]$, successfully.
Given coalitions $C_i\in \partition_i$ for each $i \in[20]$, we say that $(C_1,\dots, C_{\partnumber})$ \emph{merge successfully} if, for each $2\le \ell \le \partnumber$, it holds that $C_{\ell}$ is compatible with $\bigcup_{j = 1}^{\ell-1}C_j$.
\Cref{alg:greedy} attempts to find sequences of coalitions that merge successfully. If such a sequence is found, the algorithm adds the merged coalition $\bigcup_{j = 1}^{\partnumber}C_j$ to its output partition.
The algorithm will try to create a merged partition starting from each $C\in \partition_1$.
Once it does not succeed to merge a partial coalition $(C_1,\ldots,C_i)$ with any remaining $C_{i+1}\in \partition_{i+1}$, \Cref{alg:greedy} terminates and outputs the partial partition created until then as well as the set of remaining agents.
At this point, Stage~2 in \Cref{alg:main} terminates.

\paragraph{Notation}
Before we succeed with formal results about Stage~2, we define some useful notation to easily refer
to the objects that an agent is part of during the execution of \Cref{alg:main}.
Let $(N,u)$ be an additively separable hedonic game and consider an agent $a\in N$.
Assume that we run \Cref{alg:main} for $(N,u)$.
First, we denote by $\ind(\agone)$ the index of the unique subset of $N$ produced in the preprocessing such that $\agone\in N^{\ind(\agone)}$.
Moreover, if $\agone\notin R_1$ at the end of Stage~1 of \Cref{alg:main}, we denote by $\coa(\agone)$ the clique coalition that $\agone$ is part of, i.e., $\coa(\agone) := \partition_{\ind(\agone)}(\agone)$.
If $\agone\notin R_1\cup R_2$ at the end of Stage~2 of \Cref{alg:main}, we denote by $\mcoa(\agone)$ the merged coalition that $x$ is part of.
Finally, for $i\in \mathbb N$ with $i\ge 2$ consider a sequence of sets $(S_1, \ldots, S_i)$. 
We define $Pairs(S_1,\ldots,S_i):=\{(x,S_j)\mid x\in S_i, j\in [i-1]\}\cup \bigcup_{j\in [i-1]}\{(x,S_i)\mid x\in S_j\}$.
We will apply this notation for sequences of coalitions $(C_1,\ldots,C_k)$ where $C_j \in \partition_j$ for all  $j\in [k]$ and $2\le k\le \partnumber$.
It is straightforward to see that we can bound $Pairs(C_1,\ldots,C_i)$ for equally sized coalitions.

\begin{restatable}{lemma}{pairscount}\label{prop:pairscount} Let $2 \le i \le \partnumber$ and, for each $j\in [i]$, let $C_j\in \partition_j$ with $|C_j| = \ceil*{\frac{\log_{16} n}{2}}$. Then, it holds that
$$|Pairs(C_1,\ldots,C_i)| 
\leq 38\ceil*{\frac{\log_{16} n}{2}}\text.$$
\end{restatable}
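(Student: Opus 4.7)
The plan is to evaluate $|Pairs(C_1,\ldots,C_i)|$ by direct counting from the defining expression, treating its two component sets separately and then bounding their sum.

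Write $A := \{(x,C_j)\mid x\in C_i,\ j\in [i-1]\}$ and $B := \bigcup_{j\in[i-1]}\{(x,C_i)\mid x\in C_j\}$, so that $Pairs(C_1,\ldots,C_i)=A\cup B$. Every pair in $A$ has second coordinate in $\{C_1,\dots,C_{i-1}\}$, while every pair in $B$ has second coordinate equal to $C_i$, so $A$ and $B$ are disjoint and $|Pairs(C_1,\ldots,C_i)| = |A|+|B|$; no inclusion--exclusion correction is needed.

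For $A$, the first coordinate ranges freely over $C_i$ and the second over the $i-1$ coalitions $C_1,\dots,C_{i-1}$, giving $|A| = |C_i|\cdot (i-1)$. For $B$, each pair is $(x, C_i)$ with $x$ an element of some $C_j$ for $j\in[i-1]$; since the input partial partitions $\partition_1,\dots,\partition_{\partnumber}$ are disjoint, the coalitions $C_1,\dots,C_{i-1}$ are pairwise disjoint, so distinct $j$'s contribute distinct pairs and $|B|=\sum_{j=1}^{i-1}|C_j|$. Plugging in the uniform size assumption $|C_j| = s := \ceil*{\frac{\log_{16} n}{2}}$ for each $j\in[i]$ yields
\[
|Pairs(C_1,\ldots,C_i)| = (i-1)s + (i-1)s = 2(i-1)s,
\]
and the hypothesis $i\le \partnumber = 20$ then gives $2(i-1)s \le 38\ceil*{\frac{\log_{16} n}{2}}$, as required.

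I do not expect any real obstacle: this is a bookkeeping argument. The only subtle points are (i) ensuring the two component sets are disjoint (handled above via the second coordinate) and (ii) converting the union in $B$ into a sum, which requires the pairwise disjointness of $C_1,\dots,C_{i-1}$ coming from the ``disjoint partial partitions'' hypothesis on the input to \gdycluster.
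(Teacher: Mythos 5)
Your proof is correct and takes essentially the same route as the paper, which simply states the count $|Pairs(C_1,\ldots,C_i)| = (i-1)|C_i| + \sum_{j=1}^{i-1}|C_j| = 2(i-1)\ceil*{\frac{\log_{16} n}{2}} \le 38\ceil*{\frac{\log_{16} n}{2}}$ directly; your version just makes explicit the disjointness bookkeeping that the paper leaves implicit.
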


\begin{proof}
    Let $2 \le i \le \partnumber$ and, for each $j\in [i]$, let $C_j\in \partition_j$ with $|C_j| = \ceil*{\frac{\log_{16} n}{2}}$.
    Then,
    $$|Pairs(C_1,\ldots,C_i)| = (i-1)|C_i| + \sum_{j=1}^{i-1}|C_j| = 2(i-1)|C_i| =  2(i-1)\ceil*{\frac{\log_{16} n}{2}}\leq 38\ceil*{\frac{\log_{16} n}{2}}\text.$$
\end{proof}

We now move on to proving formal high probability claims about the partition produced in Stage~2 of \Cref{alg:main}. 
A crucial observation is that, while Stage~1 only depends on edges between agents of a fixed set $N^i$ for some $i\in [\partnumber]$, Stage~2 only depends on edges between two different such agent sets.
Hence, by the principle of deferred decisions, the distribution of utilities between agents from different sets $N^i$ is independent from Stage~1 when we enter Stage~2.
Thus, these utilities are still distributed the same as in the random game with which we started the algorithm. 
Our first goal is to give a bound on the probability that a merge attempt fails.
We again apply Hoeffding's inequality (cf. \Cref{thm:Hoeffding}).

\begin{restatable}{lemma}{LemMerge}
    \label{lem:merge} 
Let $\distr = U(-1,1)$ and consider a random hedonic game $(N,u)\sim \randgame$.
Then, there exists a constant $0<r<1$, such that, for every $2 \le i \le \partnumber$ and $C_j\in \partition_j$ with $|C_j| = \ceil*{\frac{\log_{16} n}{2}}$ for $j\in [i]$, it holds that
 $$\Prob\left( C_i \text{ incompatible with } \bigcup_{j=1}^{i-1}C_j\right)\leq \frac{1}{n^r}$$ for large $n$. 
\end{restatable}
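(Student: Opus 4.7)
The plan is to reduce the incompatibility event to a union of Hoeffding-type tail events on sums of independent $U(-1,1)$ variables. The key observation is the principle of deferred decisions: since Stage~1 only inspects utilities whose endpoints both lie in the same $N^{j}$, all utilities $u_x(y)$ with $x\in N^{j}$, $y\in N^{j'}$, $j\ne j'$ remain i.i.d.~$U(-1,1)$ after Stage~1 has completed. In particular, given the partitions $\partition_1,\ldots,\partition_{\partnumber}$ produced in Stage~1, the cross-part utilities between $C_i$ and $\bigcup_{j<i}C_j$ are fresh $U(-1,1)$ samples.

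Next I would unpack the definition of compatibility. Writing $M := \bigcup_{j=1}^{i-1} C_j$ and $s := |C_i| = \ceil*{\log_{16}n/2}$, incompatibility is equivalent to the occurrence of at least one of the following ``bad'' events:
\begin{itemize}
\item[(a)] there exists $a\in M$ with $u_a(C_i) < -s/80$;
\item[(b)] there exists $b\in C_i$ with $u_b(M) < -(i-1)s/80$.
\end{itemize}
For a fixed $a\in M$, $u_a(C_i)$ is a sum of $s$ independent $U(-1,1)$ variables with mean~$0$; applying \Cref{thm:Hoeffding} to $-u_a(C_i)$ gives
$\Prob(u_a(C_i) < -s/80) \le \exp\!\left(-\tfrac{2(s/80)^2}{4s}\right) = \exp(-s/12800)$.
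For a fixed $b\in C_i$, $u_b(M)$ is a sum of $|M|=(i-1)s$ independent $U(-1,1)$ variables of mean $0$, and the analogous Hoeffding computation yields
$\Prob(u_b(M) < -(i-1)s/80) \le \exp(-(i-1)s/12800) \le \exp(-s/12800)$.

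Since $s \ge \tfrac{1}{2}\log_{16}n$, the single-agent bound $\exp(-s/12800)$ is of the form $n^{-r'}$ for the absolute constant $r' = \tfrac{1}{25600\log 16}$. A union bound over the at most $|M|+|C_i| \le \partnumber\, s = O(\log n)$ bad events then gives
$$\Prob(C_i \text{ incompatible with } M) \;\le\; \partnumber\, s \cdot n^{-r'} \;\le\; n^{-r}$$
for any $0<r<r'$ and all sufficiently large $n$, since the $O(\log n)$ factor is eventually absorbed into $n^{r'-r}$. The one subtlety to spell out is the principle of deferred decisions argument, which lets us treat the cross-part utilities as fresh i.i.d.~$U(-1,1)$ even though the coalitions $C_j$ were produced by an earlier randomized stage; this is the only nontrivial step, while the Hoeffding calculation and the union bound are routine.
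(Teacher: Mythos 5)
Your proposal is correct and follows essentially the same route as the paper: by deferred decisions the cross-set utilities are fresh $U(-1,1)$ samples, a Hoeffding tail bound gives a per-agent probability of order $n^{-r'}$ with $r'=\tfrac{1}{25600\ln 16}$, and a union bound over the $O(\log n)$ agents involved is absorbed into $n^{-r}$ for any $0<r<r'$. The only (immaterial) difference is that you apply Hoeffding directly to $u_b(M)$ as a sum of $(i-1)s$ variables, whereas the paper first reduces this event by averaging to a single pair $(b,C_j)$ with $u_b(C_j)\le -s/80$ and then union-bounds over the set of such agent--coalition pairs.
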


\begin{proof}
Let $\agone\in N$ and $C\subseteq N\setminus \{\agone\}$ with $|C| = \ceil*{\frac{\log_{16} n}{2}}$.
We will first 
show that there exists a constant $0< r' < 1$ such that $\Prob(u_a(C)\leq -\frac{t}{80})\leq \frac{1}{n^{r'}}$.

Define $t := \ceil*{\frac{\log_{16} n}{2}}$ and consider random variables $Y_1,\ldots, Y_{t}$ where $Y_i \sim \mathcal D$. 
Then, $u_{\agone}(C)$ is distributed according to $\sum_{i=1}^t Y_i$. 
Note that $-1\leq Y_i\leq 1$ and $\Ex[u_{\agone}(C)]=0$.
By \citeauthor{Hoef63a}'s inequality (\Cref{thm:Hoeffding}), we obtain 
\begin{align*}
&\Prob\left(u_{\agone}(C)\leq -\frac{t}{80}\right)=\Prob\left(u_{\agone}(C)\geq \frac{t}{80}\right)\\
&=\Prob\left(u_{\agone}(C)-\Ex[u_{\agone}(C)]\geq \frac{t}{80}\right)\leq e^{-\frac{2\cdot \left(\frac{t}{80}\right)^2}{4\cdot t}}\\
&=e^{-\frac{t}{12800}}=e^{-\frac{\ceil*{\frac{\log_{16} n}{2}}}{12800}}\leq e^{-\frac{\log_{16} n}{25600}}=\frac{1}{n}^{\frac{1}{\ln 16 
\cdot 25600}}\text.\end{align*}
Hence, our claim is true for $r'=\frac{1}{\ln 16 
\cdot 25600}$, which satisfies $0<r'<1$. 
Now, observe that, by definition of compatibility, incompatibility can only happen if there exists $(\agone,C)\in Pairs(C_1,\ldots,C_i)$ with $u_{\agone}(C) \le -\frac{t}{80}$.
We apply \Cref{prop:pairscount} and a union bound to conclude
\begin{align*}
    \Prob\left( C_i \text{ incompatible with } \bigcup_{j=1}^{i-1}C_i\right)\leq \sum_{(x,C)\in Pairs(C_1,\ldots,C_i)} \Prob\left(u_{\agone}(C)\leq-\frac{t}{80}\right)
     \leq 38\ceil*{\frac{\log_{16} n}{2}}\frac 1{n^{r'}}\text.
\end{align*}

Clearly, there exists $0<r<1$ such that, for $n$ large enough, it holds that 
$38\ceil*{\frac{\log_{16} n}{2}}\frac 1{n^{r'}} \le \frac{1}{n^r}$. 
This completes the proof.
\end{proof}

Similar to \Cref{def:success:one} for Stage~1, we define success of Stage~2.
\begin{definition}\label{def:success:two}
    We say that \Cref{alg:main} \emph{succeeds in Stage~2} if Stage~2 terminates with $(\pi, R_2)$ where $\pi$ is $s$-balanced for $s = \partnumber \ceil*{\frac{\log_{16}  n}{2}}$ and $|R_1\cup R_2|\le 20 \frac n{\log^2_{16} n} + \frac{80}{r}\cdot \ceil*{\frac{\log_{16}n}{2}}$ where $r$ is the parameter obtained in \Cref{lem:merge}.
\end{definition}

We are ready to prove a success guarantee for Stage~2.

\begin{restatable}{theorem}{mergingstage}\label{thm:mergingstage}
Assume that we run \Cref{alg:main} for a random input game $(N,u)\sim H(n,\mathcal{D})$.
Assume that \Cref{alg:main} succeeds in Stage~1. 
Then, with probability $1-n^{-3}$ \Cref{alg:main} succeeds in Stage 2.

\end{restatable}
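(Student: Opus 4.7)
That $\pi$ is $s$-balanced with $s = \partnumber\ceil*{\frac{\log_{16} n}{2}}$ is automatic from the structure of \Cref{alg:main}: each merged coalition is the union of one clique coalition from each of $\pi_1,\dots,\pi_{\partnumber}$, and every clique coalition has size $\ceil*{\frac{\log_{16}n}{2}}$ by \Cref{def:success:one}. What remains is to bound $|R_1\cup R_2|$. Now, Stage~1 only examines utilities internal to each $N^i$, whereas \Cref{alg:greedy} only examines utilities between distinct $N^i$'s; hence by the principle of deferred decisions (cf.\ \Cref{fact:revelation}), conditioning on Stage~1's outcome leaves the Stage~2 utilities i.i.d.\ $U(-1,1)$. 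Furthermore, across different merge attempts inside \Cref{alg:greedy} the sets of probed utility values are pairwise disjoint: two distinct candidates in the same $\pi_k$ are vertex-disjoint by the partition property, and whenever a previously-failed $C\in\pi_k$ is retried in a later outer iteration the context $\bigcup_{i<k}C_i$ has been replaced by freshly chosen coalitions because successful merges strictly shrink every $\pi_k$. Hence, by \Cref{lem:merge}, each merge attempt fails independently with probability at most $1/n^r$.

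Let $M_k := |\pi_k|$ at the start of Stage~2 and set $L := \ceil*{4/r}$. Call an outer iteration $t$ \emph{regular} if $|\pi_k|_t = M_k-(t-1) \ge L$ for every $k\in[\partnumber]$, equivalently if $t \le T_0 := \min_k M_k - L + 1$. For a regular iteration and any $k\in\{2,\dots,\partnumber\}$, the event that every remaining $C\in\pi_k$ is incompatible with $\bigcup_{i<k}C_i$ has probability at most $(1/n^r)^L \le n^{-4}$ by the independence above. A union bound over the at most $n/s$ regular iterations and the $\partnumber-1$ relevant values of $k$ yields
\[
\Pr[\text{some regular iteration fails}] \;\le\; \frac{(\partnumber-1)\,n}{s\,n^{4}} \;\le\; n^{-3}
\]
for $n$ large.

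On the complementary event, every regular iteration succeeds and the algorithm terminates at some iteration $T\ge T_0$ (either via the while-loop exit, via exhaustion of some $\pi_k$, or via a single endgame incompatibility). At termination, $R_2$ contains exactly the surviving clique coalitions, so $|R_2|=s\sum_{k\in[\partnumber]}(M_k-(T-1))$. Combining with $sM_k=|N^k|-|R_1^k|$ gives the accounting identity $|R_1|+|R_2|=n-\partnumber(T-1)s$. Substituting $T-1\ge \min_k M_k - L$, $\min_k|N^k|\ge\lfloor n/\partnumber\rfloor$, and $\max_k|R_1^k|\le n/\log^2_{16}n$ telescopes to
\[
|R_1\cup R_2| \;\le\; \partnumber\,\frac{n}{\log^2_{16} n} + \partnumber L s + (\partnumber-1),
\]
which with $L=\ceil*{4/r}$ fits the tolerance of \Cref{def:success:two} for $n$ large.

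The main obstacle is the independence argument of the first paragraph: because \Cref{alg:greedy} can retry a failed coalition many times across outer iterations, one must carefully verify that no utility value is ever probed twice, a fact that rests on successful merges strictly shrinking each $\pi_k$ together with the vertex-disjointness of coalitions within each $\pi_k$. The remaining ingredients---a single union bound over regular iterations and a telescoping identity for $|R_1|+|R_2|$---are routine; the threshold $L=\ceil*{4/r}$ is chosen precisely so that regular-phase failures cost at most $n^{-3}$ in probability while the endgame slack stays within the tolerance permitted by \Cref{def:success:two}.
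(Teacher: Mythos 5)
Your proposal is correct and takes essentially the same route as the paper: the same deferred-decisions/disjointness argument for independence of merge attempts, the same use of \Cref{lem:merge} with a threshold of roughly $4/r$ surviving candidates to get a per-search failure probability of at most $n^{-4}$, a union bound over the at most $n$ (iteration, level) searches, and an equivalent accounting of $|R_1\cup R_2|$ (parameterized by completed iterations rather than by surviving clique coalitions, as the paper does). The only discrepancy is a negligible additive slack in the final bound (the ceiling on $4/r$ and the $+19$ from $n-20\lfloor n/20\rfloor$) relative to the exact constant in \Cref{def:success:two}, which is at the same level of bookkeeping imprecision as the paper's own derivation.
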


\begin{proof}
Consider the event $\mathcal A$ defined as ``Stage~2 terminates with the following property: $R_2$ contains agents from at most $\frac 4 r$ clique coalitions from each of the clique partitions $\partition_1,\dots, \partition_{\partnumber}$ where $r$ is the parameter obtained in \Cref{lem:merge}.''
Our goal is to prove that $\mathcal A$ occurs with high probability.
Note that, if Stage~1 succeeds, each partial partition created in Stage~1 contains the same number of coalitions.
Hence, the number of remaining coalitions is the same for each partial partition. 
Thus, event $\mathcal A$ occurs whenever $R_2$ contains agents from at most $\frac 4r$ clique coalitions from \emph{some} clique partition.

Now let $2 \le i\le 20$ and consider any time during the execution of \Cref{alg:greedycol} where we want to find a coalition $C\in \partition_i$ compatible with $\bigcup_{j = 1}^{i-1}C_j$ while $|\partition_i|\ge \frac 4 r$ clique coalitions.
Since each of the merge attempts depends on a disjoint set of utility values we can apply \Cref{lem:merge} to deduce
$$\Prob\left(\bigcup_{j = 1}^{i-1}C_j \text{ incompatible with all }C\in \partition_i\right) \le \left(\frac{1}{n^r}\right)^{\frac{4}{r}} =  \frac{1}{n^4}\text.$$

Now, Stage~2 can only terminate if some tentative coalition is incompatible with all proposed coalitions.
In total, $\partition$ contains at most $\frac n {\partnumber}$ coalitions\footnote{For simplicity, we estimate with the number of agents here, whereas we know that coalitions are of size $\ceil*{\frac{\log_{16}n}{2}}$ and therefore there are only $\mathcal O\left(\frac n{\log_{16}n}\right)$ coalitions in $\partition_1$.} and each of them is only attempted to be increased by a coalition from the next partition for $\partnumber$ times (even 1 less times).
For the complement $\mathcal A^c$ of $\mathcal A$, we obtain
$$\Prob(\mathcal A^c) = \Prob\left(\text{some }\bigcup_{j = 1}^{i-1}C_j \text{ incompatible with all }C\in \partition_i \text{ while } |\partition_i|\ge \frac 4r\right) \le n\cdot \frac 1{n^4} = \frac 1 {n^3}\text.$$

For the inequality, we apply a union bound for the at most $n$ events. 
Hence, $\Prob(\mathcal A)\ge 1 - \frac 1{n^3}$.

For the case that $\mathcal A$ occurs, let $i\in [20]$ such that at most $\frac 4r$ coalitions in $\partition_i$ remain at the end of Stage~2.
Since Stage~1 succeeds, we also know that $|R_1^i|\le \frac n{\log^2_{16} n}$.
Together, $$|R\cap N^i| \le |R_1^i| + \frac 4r \cdot \ceil*{\frac{\log_{16}n}{2}} \le \frac n{\log^2_{16} n} + \frac 4r \cdot \ceil*{\frac{\log_{16}n}{2}}\text.$$
Since every merged coalition contains exactly the same number of agents from each of the agent sets $N^i$, we can conclude that 
$$|R| = \sum_{i\in [\partnumber]}|R\cap N^i| \le 20 \frac n{\log^2_{16} n} + \frac{80}{r}\cdot \ceil*{\frac{\log_{16}n}{2}}\text.$$
This concludes the proof because our desired properties are satisfied whenever event $\mathcal A$ occurs.
\end{proof}

\subsection{Stage 3: Completing the Partition}

\Cref{alg:main} attempts to complete the partition $\partition$ by assigning agents from sets $R_1\cup R_2$ to distinct merged coalitions in $\partition$, thereby producing the final partition $\partition^*$.
This leaves still some choice to select an appropriate coalition.
For the individual rationality of the produced partitions, we want to guarantee that each agents receives a positive utility.
Moreover, to simplify our analysis of {\exvNoun}, it is beneficial to add agents only to coalitions towards which no utility was revealed in Stage~2.

\begin{restatable}{theorem}{finalstage}\label{thm:finalstage}
Assume that Stage~2 of \Cref{alg:main} succeeds on input $(N,u)\sim H(n,\mathcal{D})$.
Then, with probability $1 - \mathcal O\left(\frac 1{n^3}\right)$ each agent in $\agone\in R$ can be added to a coalition $C$ in Stage~3 such that 
\begin{itemize}
    \item no utilities between $\agone$ and $C$ were revealed in Stage~2 and
    \item $u_{\agone}(C\cup \{\agone\}) >0$.
\end{itemize}
\end{restatable}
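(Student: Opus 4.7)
The plan is to find, for each $a \in R$, a large pool of admissible merged coalitions $M \in \partition$ satisfying both (i) no Stage~2 revelations between $a$ and $M$, and (ii) $u_a(M \cup \{a\}) > 0$, and then match $R$ to $\partition$ greedily. A preliminary observation is that Stage~2 only queries utilities incident to agents inside the coalitions of the $\partition_i$: for $a \in R_1$ no utility involving $a$ is ever examined in Stage~2, so (i) is automatic for every $M$. For $a \in R_2$, let $\coa(a)$ denote the clique coalition containing $a$. A utility involving $a$ is revealed in Stage~2 only when $\coa(a)$ is tested for compatibility in \gdycluster, and in that case only with agents in $\bigcup_{j=1}^{\ind(a)-1} C_j$---a subset of the merged coalition produced by that iteration (if any). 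Hence the forbidden set $F_a$ of coalitions violating~(i) equals the set of merged coalitions produced by outer iterations that tested $\coa(a)$, and $|F_a|$ is at most the number of iterations that test $\coa(a)$.

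To bound $|F_a|$ for $a \in R_2$, note that $\coa(a)$ is never picked by \gdycluster, so whenever $\coa(a)$ is tested in an iteration that nevertheless produces a merged coalition $M \in F_a$, $\coa(a)$ must have been found incompatible. By \Cref{lem:merge}, each such incompatibility has probability at most $1/n^r$, and the events in distinct outer iterations are mutually independent since they depend on disjoint sets of utility values (the tested $C_j$ change from iteration to iteration, being removed after being picked). A Chernoff bound over the at most $|\partition|+1$ iterations therefore gives $|F_a| \le 2|\partition|/n^r$ with failure probability $\exp(-\Omega(n^{1-r}/\log n))$, which is $o(1/n^c)$ for any constant $c$. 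Together with $|R_2| = O(\log n)$, this yields $|F_a| \le |\partition|/2$ simultaneously for all $a \in R_2$ with overwhelming probability.

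For condition~(ii), I apply the principle of deferred decisions. For any admissible $M \in \partition \setminus F_a$, the $19 \lceil \log_{16} n / 2\rceil$ utilities $u_a(b)$ with $b \in M \setminus N^{\ind(a)}$ were never queried in Stages~1 or~2 and are therefore i.i.d.\ $U(-1,1)$, while the remaining $\lceil \log_{16} n / 2\rceil$ utilities (with $b \in M \cap N^{\ind(a)}$) are positively biased by \Cref{fact:revelation} (revealed values are $\ge 1/2$ except possibly one, and any unrevealed values are still $U(-1,1)$). A symmetry/Hoeffding argument then yields $\Prob(u_a(M \cup \{a\}) > 0) \ge \tfrac{1}{2} - o(1) \ge \tfrac{1}{3}$ for every admissible $M$. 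Since distinct $M, M' \in \partition$ are disjoint as agent sets, the events $\{u_a(M \cup \{a\}) > 0\}$ are mutually independent across $M$, so another Chernoff bound gives $|G_a| \ge |\partition|/8$ with probability $1 - \exp(-\Omega(n/\log n))$, where $G_a := \{M \in \partition \setminus F_a\colon u_a(M \cup \{a\}) > 0\}$.

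Finally, each $a \in R$ has $|G_a| = \Omega(n/\log n)$ admissible coalitions, whereas $|R| \le 20n/\log^2_{16} n + O(\log n)$, so processing the agents of $R$ in any order and greedily assigning each to an unused element of $G_a$ is always feasible. A union bound over $a \in R$ converts the per-agent failure into total failure probability $O(1/n^3)$. The principal subtlety is the bound on $|F_a|$: because $r$ from \Cref{lem:merge} is only a tiny constant and $n^r$ therefore grows subpolynomially, one must verify that the Chernoff decay $\exp(-\Omega(n^{1-r}/\log n))$ is nonetheless strong enough to absorb the union bound over $|R_2|$ while keeping $|\partition \setminus F_a|$ a constant fraction of $|\partition|$.
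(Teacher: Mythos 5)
Your proposal is correct and takes essentially the same route as the paper's proof: bound the number of merged coalitions whose utilities with $\agone$ were revealed in Stage~2 using \Cref{lem:merge}, use \Cref{fact:revelation} and deferred decisions to show each unrevealed merged coalition yields positive utility with constant probability, apply a Chernoff bound over the (disjoint, hence independent) coalitions, and compare $|R| = \mathcal O\left(\frac{n}{\log^2 n}\right)$ against $|\partition| = \Theta\left(\frac{n}{\log n}\right)$. The only immaterial difference is quantitative: you bound the revealed set $F_a$ by $o(|\partition|)$ via a Chernoff bound over iterations, whereas the paper bounds it by the constant $\frac{4}{r}$ with probability $1-n^{-4}$, which is the source of its stated $1-\mathcal O\left(\frac{1}{n^3}\right)$ guarantee.
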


\begin{proof}
If \Cref{alg:main} was successful in Stage~2, then, by definition, it terminates Stage~2 with a partial partition $\partition$ containing $|\partition| = \Theta\left(\frac{n}{\log n}\right)$ coalitions and a remainder set $R$ of size $|R|=\mathcal O\left(\frac{n}{\log^2 n}\right)$.
Hence, for large $n$, it holds that $|\partition|$ is much larger than $3|R|$.
Therefore, it suffices to show that every agent in $R$ receives a positive utility when joining at least $\frac 13$ of the merged coalitions for which no utility was revealed in Stage~2.
Then, all agents in $R$ can be added to a coalition with the desired properties.

Consider an agent $\agone\in R$.
If $a\in R_1$, then no utilities for $a$ were revealed in Stage~2.
Hence, assume that $a\in R_2$.
Then, $\coa(a)$ failed to be compatible in any performed merge attempt in Stage~2.
By \Cref{lem:merge} the probability of one such failed merge attempt is bounded by $\frac{1}{n^r}$ for some $r$.
Hence, the probability that $\coa(a)$ failed to be merged for more than $\frac{4}{r}$ times is bounded by $\frac{1}{n^r}^{\frac{4}{r}}\leq \frac{1}{n^4}$.
Hence, the probability that information regarding $a$'s utility for more than $\frac{4}{r}$ merged coalitions in $\partition$ was revealed during the execution of the algorithm is at most $\frac{1}{n^4}$.

We consider again a general agent $\agone\in R$.
    Let $\partition^a\subseteq \partition$ be the subset of merged coalitions such that no utility information concerning these coalitions and agent $a$ was revealed in Stage~2.
    By the previous argument, it holds that 
    \begin{equation}\label{eq:mergesuccesses}
        \Prob\left(|\partition\setminus \partition^a|\le \frac{4}{r}\right) \ge 1-n^{-4}\text.
    \end{equation}
    
    Now consider any coalition $C\in \partition^a$. 
    Let $\chi_{C,a}$ be the indicator random variable that is $1$ if $u_a(C) > 0$ and $0$, otherwise. 
    Agent $a$'s conditional utility distribution for agents in $C$ at the end of Stage~2 can only be different from $U(-1,1)$ for revealed utilities in Stage~1 towards agents in $C\cap N^{\ind(a)}$.
    By \Cref{fact:revelation}, for all except one of the revealed utilities, the distribution is $U\left(\frac 12,1\right)$ whereas the distribution for at most one agent is $U\left(-1,\frac 12\right)$. 
    Let $t$ be the number of agents in $C$ to which no utility information was revealed.
    Then, $t\ge \left|C\setminus N^{\ind(a)}\right| = 19 \ceil*{\frac{\log_{16} n}{2}}$.
    Hence, if $Y_1,\ldots,Y_t$ are independent random variables distributed according to $U(-1,1)$, then $u_a(C)\ge \sum_{j =1}^t Y_j - 1$, where the subtraction of $1$ accounts for the one possible agent with distribution $U\left(-1,\frac 12\right)$ and we omit all other revealed utilities, which have to be positive.
    Hence, for large enough $n$, it holds that $\Prob(\chi_{C,a} = 1) = \Prob(u_a(C) > 0)\ge \frac 49$.
    
    We apply a Chernoff bound to that at least a third of the coalitions in $\partition^a$ yield a positive utility to $a$.
    Let $\delta=\frac{4}{9}$.
    Then,

    \begin{align*}
    &\Prob\left(\frac{1}{|\partition^a|}\sum_{C\in \partition^a}\chi_{C,a}\leq \frac{1}{3}\right)=
    \Prob\left(\frac{1}{|\partition^a|}\sum_{C\in \partition^a}\chi_{C,a}\leq (1-\delta)\frac{3}{5}\right)\\
    &=
    \Prob\left(\sum_{C\in \partition^a}\chi_{C,a}\leq (1-\delta)\frac{3}{5}\left|\partition^a\right|\right)\leq e^{-\frac{\delta^2\frac{3}{5}\left|\partition^a\right|}{2}}\text. 
    \end{align*}

    Now assume that Stage~3 of \Cref{alg:main} is implemented such that each remainder agent is added to some coalition in $\partition^a$ that maximizes their utility among the remaining coalitions.
    Assume that this yields the outcome $\partition^*$.

    \begin{align}
        &\Prob\left(u_a(\partition^*)>0\right)\notag\\
        &\ge 
        \Prob\left(u_a(\partition^*)>0\ \Big|\  |\partition\setminus \partition^a|\le \frac{4}{r}\right)\cdot \Prob\left(|\partition\setminus \partition^a|\le \frac{4}{r}\right)\label{eq:mergesuccessapp}\\
        &\ge \Prob\left(\frac{1}{|\partition^a|}\sum_{C\in \partition^a}\chi_{C,a} > \frac{1}{3} \ \Big|\ |\partition\setminus \partition^a|\le \frac 4r\right)\cdot \Prob\left(|\partition\setminus \partition^a|\le \frac{4}{r}\right)\label{eq:onethird}\\
        &\ge \left(1-e^{-\Omega\left(\frac{n}{\log_{16} n}\right)}\right)\cdot \left(1-\frac 1{n^4}\right) = 1- \mathcal O\left(\frac 1{n^4}\right)\label{eq:pia}\text.
    \end{align}
    
    For Inequality (\ref{eq:mergesuccessapp})     we use the law of total probability and apply Inequality \ref{eq:mergesuccesses}.
    For Inequality (\ref{eq:onethird}), we use that the utility is positive if at least one third of the coalitions in $\partition^a$ yield a positive utility.
    Above, we have argued that this is the case for $\partition$, but it also holds for $\partition^a$ if its size only differs from the size of $\partition$ by a constant.
    For Inequality (\ref{eq:pia}), we use that if $|\partition\setminus \partition^a|\le \frac{4}{r}$, then $|\partition^a|\geq|\partition|-\frac{4}{r}=\Omega\left(\frac{n}{\log_{16} n}\right)$
    because $r$ is a fixed constant.

By a union bound, since $|R|\le n$, every agent received positive utility in $\partition^*$ with probability at least $1- \mathcal O\left(\frac{1}{n^3}\right)$.
This completes the proof because $\partition^*$ was constructed so that agents are only added to coalitions for which no information was revealed in Stage~2.
\end{proof}

Based on this proof, we make the following specification for implementing Stage~3.

\begin{specification}\label{spec:stagethree}
    In Stage~3, when selecting a coalition for a remainder agent $\agone$, \Cref{alg:main} selects a coalition in which $\agone$ receives positive utility and for which no utilities were revealed in Stage~2.
\end{specification}

We define success of Stage~3.

\begin{definition}
    We say that \Cref{alg:main} \emph{succeeds in Stage~3} if it computes a valid coalition structure when Stage~3 is executed according to \Cref{spec:stagethree}.
\end{definition}

\Cref{thm:finalstage} implies that success in Stage~3 is once again a high probability event.

\section{Properties of Our Algorithm}\label{sec:properties}
In \Cref{sec:2stageclust}, we have described \Cref{alg:main} and have shown structural results for its output that apply with high probability.
We leverage these insights to show that the algorithm produces partitions satisfying extensive stability properties.
Specifically, we prove that, with high probability, the output produced by \Cref{alg:main} is
\begin{enumerate}
    \item individually rational (\Cref{sec:algIR}),
    \item {\env} (\Cref{sec:ENV}), and
    \item {\exv} (\Cref{sec:EXV}).
\end{enumerate}

Given a game $(N,u)$, let $\alg(N,u)$ denote the partition produced by \Cref{alg:main}.
Let $\mathfrak P$ be any property among the properties individual rationality, {\envNoun}, and {\exvNoun}.
Our goal is to show that for each of these properties
\begin{equation*}% 
    \Prob\left(\alg(N,u) \text{ satisfies } \mathfrak P\right) \to 1 \text{ for } n\to \infty\text.
\end{equation*}

To show this, we condition on the high probability event that Stage~2 succeeds (cf. \Cref{def:success:two}).
Given a hedonic game $(N,u)$, let $\Stwo$ denote the event that Stage~2 of \Cref{alg:main} succeeds on input $(N,u)$.
We have that 
\begin{align*}
    &\Prob\left(\alg(N,u) \text{ satisfies } \mathfrak P\right)\\
    &\ge \Prob\left(\alg(N,u) \text{ satisfies } \mathfrak P \mid \Stwo \right) \cdot\Prob\left(\Stwo\right)\text.
\end{align*}

By \Cref{thm:finalstage}, we already know that $\Prob\left(\Stwo\right)$ tends to $1$ in the large agent limit.
Hence, it suffices to show that
\begin{equation}\label{eq:highcondprob}
    \Prob\left(\alg(N,u) \text{ satisfies } \mathfrak P \mid \Prob\left(\Stwo\right)\right)\to 1 \text{ for } n\to \infty\text.
\end{equation}

\subsection{Individual Rationality}\label{sec:algIR}

Most of the work to establish individual rationality was already done in \Cref{thm:finalstage}.
It implies that agents in the remainder set are individually rational with high probability.
We only need to prove individual rationality of agents that are part of a clique coalition after Stage~1 of the algorithm.
However, then individual rationality follows directly from the positive utility by agents in their own clique coalition.

\begin{restatable}{lemma}{notr}\label{lem:notr} 
Consider a random hedonic game $(N,u)\sim H(\mathcal{D},n)$ where $\distr = U(-1,1)$. 
Then, for $n$ large enough, it holds that $$\Prob(u_a(\alg(N,u))>0\mid a\notin R,\Stwo) = 1.$$
There, $R$ is the random remainder set resulting from running \Cref{alg:main} for $(N,u)$.
\end{restatable}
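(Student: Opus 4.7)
The plan is to show that, conditional on $a \notin R$ and $\Stwo$, the utility $u_a(\alg(N,u))$ is deterministically positive for all sufficiently large $n$; the randomness has already been absorbed into the conditioning. Since $a \notin R = R_1 \cup R_2$, the agent $a$ belongs to some clique coalition $\coa(a) \in \partition_{\ind(a)}$ produced in Stage~1, and this clique is successfully merged in Stage~2 into a merged coalition $\mcoa(a) = C_1 \cup \cdots \cup C_{\partnumber}$. Because $\Stwo$ occurred, the merged partition is $20s$-balanced, so $|C_j| = s := \ceil*{\frac{\log_{16}n}{2}}$ for each $j$. Let $k$ be the unique index with $a \in C_k = \coa(a)$.

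I would then bound $u_a(\mcoa(a))$ by decomposing it across the $20$ sub-cliques and reading off the bounds guaranteed by the algorithm's own acceptance criteria. From the inner if-condition of \Cref{alg:greedycol}, every $b \in C_k \setminus \{a\}$ satisfies $u_a(b) \ge \tfrac12$, so $u_a(C_k) \ge (s-1)/2$. For each $j > k$, the merge attempt that appended $C_j$ in \Cref{alg:greedy} was successful, and since $a$ already belonged to the accumulating coalition, the first clause of compatibility forces $u_a(C_j) \ge -s/80$. For the sub-cliques with $j < k$, the merge attempt that appended $C_k$ itself was successful, and $a \in C_k$ plays the role of an agent in the newly added coalition, so the second clause of compatibility forces $u_a(C_1 \cup \cdots \cup C_{k-1}) \ge -(k-1)s/80$. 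Summing the three contributions yields
\[
u_a(\mcoa(a)) \;\ge\; \tfrac{s-1}{2} - \tfrac{(k-1)s}{80} - \tfrac{(20-k)s}{80} \;=\; \tfrac{21s-40}{80}.
\]

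It remains to handle Stage~3. The for-loop over $R$ in \Cref{alg:main} removes the chosen coalition $C$ from $\partition$ after each iteration, so each merged coalition absorbs at most one remainder agent. Because utilities lie in $(-1,1)$, this extra agent changes $a$'s total utility by strictly less than $1$ in absolute value, giving $u_a(\alg(N,u)) > (21s-40)/80 - 1 = (21s-120)/80$, which is positive as soon as $s \ge 6$, i.e., for all sufficiently large $n$. The main obstacle is essentially bookkeeping: correctly matching the two clauses of the compatibility definition to the two cases ($j > k$, where $a$ sits in the accumulated coalition, and $j < k$, where $a$ sits in the newly added coalition), and observing that the constants $\partnumber$ and $1/80$ baked into the algorithm were calibrated precisely so that $(s-1)/2 - 19s/80$ stays bounded away from~$0$, leaving enough slack to absorb the single Stage~3 addition.
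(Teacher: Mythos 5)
Your proof is correct and follows essentially the same route as the paper's: lower-bound $u_a(\mcoa(a))$ by $(s-1)/2$ from the clique acceptance criterion plus $-19s/80$ from the two clauses of compatibility in the successful merges, then subtract at most $1$ for the single Stage-3 addition. Your case analysis over $j<k$ versus $j>k$ just spells out in more detail the paper's one-line claim that $\sum_{j\neq \ind(a)} u_a(C_j) \ge -\tfrac{19t}{80}$.
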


\begin{proof}
For $\distr = U(-1,1)$, consider a hedonic game $(N,u)\sim H(\mathcal{D},n)$. 
Define $t := \ceil*{\frac{\log_{16}n}2}$.
Let $\partition^* = \alg(N,u)$ and let $\partition$ be the partial partition consisting of merged coalitions obtained after Stage~2.
If \Cref{alg:main} succeeds in Stage~2 on input $(N,u)$, then $\partition$ is $s$-balanced for $s = 20t$.

Let $a\in N\setminus R$.
Then, since utilities are bounded from below by $-1$ and since every coalition in $\partition^*$ contains at most one more member as compared to $\partition$, it holds that $u_a(\partition^*)\geq u_a(\partition)-1$.  

Moreover, recall that the merged coalition $\mcoa(a)$ of agent $a$ in $\partition$ is the disjoint union of $\partnumber$ clique coalitions $C_{1},\ldots, C_{\partnumber}$.
By design of \Cref{alg:main}, every member of $\coa(a) = C_{\ind(a)}$ yields a utility distributed as $U(0.5.1)$ to $a$.
Hence, in particular $u_a(C_{\ind(a)})\geq \frac{t-1}{2}$.
Moreover, because $(C_1,\dots, C_{\partnumber})$ merge successfully, it holds that% 
$\sum_{j = 1, j\neq \ind(a)}^{20}u_a(C_j)\geq -\frac{19t}{80}$.
Together, 
\begin{align*}u_a(\partition^*)\ge u_a(\mcoa(a))-1=\sum_{j=1}^{\partnumber} u_a(C_j) - 1\geq \frac{t-1}{2}-\frac{19t}{80} - 1
\geq \frac{t}{4}-\frac{3}{2}.\end{align*}
Clearly, for sufficiently large $n$, this utility is strictly positive.
This completes the proof.
\end{proof}

Combining \Cref{lem:notr} with \Cref{thm:finalstage}, we obtain that \Cref{alg:main} produces an individually rational partition with high probability.
\begin{restatable}{theorem}{propertyIR}\label{thm:ir}
Consider random hedonic games for $\distr = U(-1,1)$. Then,
\Cref{alg:main} produces an individually rational partition with probability tending to~$1$ in the large agent limit.
\end{restatable}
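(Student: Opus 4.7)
The plan is to combine \Cref{lem:notr} and \Cref{thm:finalstage} through the conditioning framework outlined at the start of \Cref{sec:properties}. Concretely, I would condition on the high-probability event $\Stwo[N,u]$ that Stage~2 succeeds. Since \Cref{thm:finalstage} (together with the earlier stage analyses \Cref{thm:cliquestage} and \Cref{thm:mergingstage}) implies $\Prob(\Stwo[N,u]) \to 1$ in the large agent limit, it suffices to establish
\[
\Prob\bigl(\alg(N,u) \text{ individually rational} \mid \Stwo[N,u]\bigr) \to 1.
\]

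To this end, I would split the agent set into two groups: the agents $a \notin R$ that are placed in a clique coalition during Stage~1 (and then absorbed into a merged coalition in Stage~2), and the agents $a \in R$ that are added to a coalition in Stage~3 according to \Cref{spec:stagethree}. For the first group, \Cref{lem:notr} directly gives $u_a(\alg(N,u)) > 0$ with conditional probability $1$ for $n$ sufficiently large, since adding a single remainder agent to the merged coalition can lower utilities by at most~$1$, which is dominated by the $\Theta(\log n)$ surplus accumulated in Stage~1. For the second group, \Cref{thm:finalstage} combined with \Cref{spec:stagethree} ensures that each $\agone \in R$ is assigned to some coalition $C$ with $u_\agone(C \cup \{\agone\}) > 0$, and that this assignment succeeds \emph{simultaneously} for every remainder agent with probability at least $1 - \mathcal{O}(1/n^3)$, because the union bound over $|R| \le n$ failure events is already embedded in the statement of that theorem.

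Putting the two cases together, every agent in the output of \Cref{alg:main} has strictly positive, hence nonnegative, utility with conditional probability at least $1 - \mathcal{O}(1/n^3)$. Multiplying by $\Prob(\Stwo[N,u]) \to 1$ yields the theorem. I do not expect any genuine obstacle here: the probabilistic heavy lifting has already been done in \Cref{thm:cliquestage}, \Cref{thm:mergingstage}, \Cref{thm:finalstage}, and \Cref{lem:notr}, and the remaining work is essentially bookkeeping to verify that no hidden dependence spoils the two-case decomposition. The one point worth stating explicitly in the write-up is that the event ``every $\agone \in R$ receives positive utility in Stage~3'' is precisely the success condition that \Cref{spec:stagethree} asks the algorithm to enforce, so on this high-probability event the produced partition $\partition^*$ is individually rational for all agents at once.
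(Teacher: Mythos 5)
Your proposal is correct and follows essentially the same route as the paper's own proof: conditioning on $\Stwo$, splitting agents into those outside $R$ (handled by \Cref{lem:notr}, which gives conditional probability~$1$) and those in $R$ (handled by \Cref{thm:finalstage} together with \Cref{spec:stagethree}, giving $1-\mathcal O(1/n^3)$), and then using $\Prob(\Stwo)\to 1$. No meaningful difference from the paper's argument.
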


\begin{proof}
    We will prove \Cref{eq:highcondprob} for the case where $\mathfrak P$ represents individual rationality.
    Consider an agent $a\in N$.
    Let $\mathcal B_a$ denote the event that $u_a(\alg(N,u)) \le 0$.
    By \Cref{thm:finalstage}, it holds that 
    $$\Prob\left(\bigcup_{a\in R} \mathcal B_a \ \Big|\ \Stwo\right) = \mathcal O\left(\frac 1{n^3}\right)\text.$$

    Moreover, by \Cref{lem:notr}, it holds that 
    $$\Prob\left(\bigcup_{a\notin R} \mathcal B_a \ \Big|\ \Stwo\right) = 0\text.$$

    Together, 
    \begin{align*}
        &\Prob\left(\alg(N,u) \text{ individually rational} \mid \Stwo\right)\\
        &= 1 - \Prob\left(\bigcup_{a\in N} \mathcal B_a \ \Big|\ \Stwo\right)\\
        &= 1 - \Prob\left(\bigcup_{a\notin R} \mathcal B_a \ \Big|\ \Stwo\right) - \Prob\left(\bigcup_{a\in R} \mathcal B_a \ \Big|\ \Stwo\right) \\ 
        &= 1 - \mathcal O\left(\frac 1{n^3}\right)\text.
    \end{align*}

    Clearly, this expression tends to $1$ for $n$ tending to infinity.
\end{proof}

\subsection{\ENVNoun}\label{sec:ENV}

In this section, we will show that \Cref{alg:main} produces partitions satisfying {\envNoun} with high probability.
Recall that, given a coalition $C\subseteq N$ and an agent $a\in N\setminus C$, $\FavorOut(C,a)$ denotes the so-called favor-out set of agents, i.e., the set of agents whose utility is decreased if $a$ joins them.
We want to show that the probability that the event $\FavorOut(C,a) = \emptyset$ occurs for any coalition $C\in \alg(N,u)$ and $a\in N\setminus C$ tends to $0$.
Since there are at most $n^2$ pairs of coalitions and agents, it suffices to show that 
\begin{equation*}
    \Prob\left(\OutEvent\ \big|\ \Stwo\right) = o\left(n^{-2}\right)\text.
\end{equation*}
The claim then follows by a union bound.

Intuitively, $\OutEvent$ should have a small probability because, for the unconditional utility distribution according to $U(-1,1)$, each agent in coalition $C$ prefers agent $a$ not to join with probability $\frac 12$.
Our goal is to show that the probability is still sufficiently small for the conditional utility distribution after running \Cref{alg:main}.
Recall that we may reveal utilities concerning $a$ and $C$ in three ways: 
\begin{enumerate}
    \item In Stage~1, we may reveal utilities between $a$ and agents in $C\cap N^{\ind(a)}$.
    \item In Stage~2, we may reveal utilities during merge attempts. However, since $a\notin C$, we only care about utilities revealed during \emph{unsuccessful} merge attempts.
    \item In Stage~3, we may reveal utilities of $a$ for agents in $C$ if $a\in R$, i.e., $a$ is in the remainder set. 
\end{enumerate}

To deal with the first case, we show that the probability of $\OutEvent$ is already sufficiently low if we just consider the possible denial by agents in $C\setminus N^{\ind(a)}$, for which Stage~1 does not affect the conditional utility.

Similarly, for the third case, if $C$ contains an agent from $R$, we also disregard their potential denial. 
The crucial insight is how to deal with the second case.
Intuitively, an unsuccessful merge event should only decrease the probability that all agents in a coalition have a nonnegative utility for some agent.
We will turn this intuition into a rigorous formal statement by applying an FKG inequality (\citealp{FKG71a}, see also \citealp{Grim99a}, Theorem 2.4).

\begin{lemma}[\citealp{FKG71a}]\label{lem:harris} Let $f$ and $g$ be functions of 
random variables with $\mathbb E[f^2] < \infty$ and $\mathbb E[g^2] < \infty$.
If $f$ is monotone increasing and $g$ is monotone decreasing
then $$\mathbb{E}[fg]\leq \mathbb{E}[f]\mathbb{E}[g]$$
\end{lemma}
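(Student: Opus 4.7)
The plan is to establish this FKG-type inequality via the classical induction on the number of coordinate random variables, the standard approach going back to Harris and used in textbook treatments of the FKG inequality on product spaces. I would work in the setting most relevant to the application: $f$ and $g$ are functions of independent coordinate random variables $X_1,\dots,X_n$, each coordinatewise monotone (increasing for $f$, decreasing for $g$). This is the regime in which the lemma is invoked later, since the game is sampled from a product measure on independent utility values.

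The base case $n=1$ rests on the pointwise observation that for any realizations $x,y$ the product $(f(x)-f(y))(g(x)-g(y))$ is nonpositive, as the two factors have opposite signs under the monotonicity hypotheses. Letting $Y$ be an independent copy of $X_1$, one takes expectations to obtain
\[
0 \;\geq\; \mathbb{E}\bigl[(f(X_1)-f(Y))(g(X_1)-g(Y))\bigr] \;=\; 2\mathbb{E}[f(X_1)g(X_1)] - 2\mathbb{E}[f(X_1)]\,\mathbb{E}[g(X_1)],
\]
using independence and equality in distribution of $X_1$ and $Y$; the square-integrability hypothesis ensures every expectation in sight is finite by Cauchy--Schwarz.

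For the inductive step, I would condition on the last coordinate. Define $F(x) := \mathbb{E}[f(X_1,\dots,X_{n-1},x)]$ and analogously $G(x)$. The key point is that, because the $X_i$ are independent and $f$ is coordinatewise monotone increasing, $F$ inherits monotone increasingness on $\mathbb{R}$ (and similarly $G$ is decreasing): pointwise monotonicity in the last coordinate is preserved by integration over the other coordinates. Applying the inductive hypothesis in $n-1$ variables for each fixed value of $X_n=x$ gives $\mathbb{E}[fg\mid X_n=x]\leq F(x)G(x)$; taking expectations over $X_n$ and then applying the one-variable base case to $F$ and $G$ yields
\[
\mathbb{E}[fg] \;=\; \mathbb{E}\bigl[\mathbb{E}[fg\mid X_n]\bigr] \;\leq\; \mathbb{E}[F(X_n)G(X_n)] \;\leq\; \mathbb{E}[F(X_n)]\,\mathbb{E}[G(X_n)] \;=\; \mathbb{E}[f]\,\mathbb{E}[g].
\]

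The most delicate point is the bookkeeping around monotonicity of the conditional expectations: one must verify that $F$ and $G$ are genuinely monotone and that the coordinatewise monotonicity hypothesis is preserved when we integrate out $X_n$. Under independence this is essentially immediate (pointwise inequalities survive integration), but it would need to be spelled out carefully; outside the product-measure setting the statement requires the full FKG lattice-condition machinery, which is why standard references phrase the result either for independent coordinates or for measures satisfying the FKG lattice inequality. Since all applications here use i.i.d.\ utility values sampled from $U(-1,1)$, the independent-coordinate formulation above suffices for the subsequent {\envNoun} argument.
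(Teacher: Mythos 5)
This lemma is not proved in the paper at all: it is quoted as a known result of \citet{FKG71a} (see also Harris's inequality and Theorem~2.4 in \citealp{Grim99a}), so there is no internal proof to compare against. Your argument is the standard and correct proof of Harris's inequality for product measures: the one-coordinate case via the nonpositivity of $(f(x)-f(y))(g(x)-g(y))$ for an independent copy, and the inductive step by integrating out one coordinate and observing that the conditional expectations $F$ and $G$ inherit monotonicity, with square-integrability handled by Cauchy--Schwarz and the contractivity of conditional expectation. You are also right to flag the one genuine subtlety: as literally stated (``functions of random variables'' with no structure on the joint law), the inequality is false in general and requires either independent coordinates or the FKG lattice condition; restricting to coordinatewise monotone functions of independent coordinates is exactly the setting in which the paper later applies the lemma (the $f_{a,b}$ and $g_F$ are monotone functions of the i.i.d.\ utility values), so your formulation suffices for every use made of it. In short, the proposal is correct and self-contained where the paper simply defers to the literature; the only thing you would still need to spell out in a fully rigorous write-up is the routine measurability/integrability bookkeeping in the inductive step, which you have already identified.
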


We want to apply the inequality for the case where $f$ is the indicator function of certain agents not vetoing an agent to join them and $g$ is the indicator function of certain unsuccessful merges.

\begin{restatable}{lemma}{LemCorrelation}\label{lem:correlation} Consider a random hedonic game $(N,u)\sim \randgame$ where $\distr = U(-1,1)$.
Let $C\in \alg(N,u)$ and $a\in N\setminus C$. 
Then, $$\Prob\left(\OutEvent\ \big|\ \Stwo\right)\le n^{-\frac {19}8}\text.$$ 
\end{restatable}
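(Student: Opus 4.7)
The plan is to reduce $\OutEvent$ to an event concerning a carefully chosen subset $C^* \subseteq C$ of agents whose utilities for $a$ are constrained by the algorithm only through monotone-decreasing conditions, and then invoke the FKG inequality (\Cref{lem:harris}). Define $C^* := C \setminus (N^{\ind(a)} \cup R)$, thereby removing from $C$ both the clique-coalition members sharing $a$'s initial batch (whose utilities for $a$ may have been probed in Stage~1) and the at most one remainder agent added to $C$ in Stage~3. A short case analysis on whether $C$ received a remainder agent and whether that agent lies in $N^{\ind(a)}$ shows that $|C^*| = 19 \ceil*{\log_{16} n / 2}$ in every case. Since $\OutEvent$ requires $u_x(a) \ge 0$ for every $x \in C$, it certainly requires this for every $x \in C^*$; call the weaker event $F$. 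Note that $F$ is monotone increasing in the utility vector $(u_x(a))_{x \in C^*}$, and unconditionally
\[
 \Prob(F) = \left(\tfrac{1}{2}\right)^{|C^*|} = 2^{-19\ceil*{\log_{16} n / 2}} \le n^{-19/8},
\]
using $\log_{16} 2 = \tfrac14$.

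Next, I catalog how the conditioning on $\Stwo$, together with the event that the algorithm outputs $C$, interacts with $(u_x(a))_{x \in C^*}$. Stage~1 only reveals utilities within a single batch $N^i$, and $x$ and $a$ lie in different batches, so Stage~1 is independent of these utilities. Stage~3 only probes utilities of the form $u_r(\cdot)$ for $r \in R$ (by \Cref{spec:stagethree}), and no $x \in C^*$ is in $R$, so Stage~3 is also independent. In Stage~2, the successful compatibility checks building $C$ involve only pairs of agents inside $C$, and those building any other merged coalition (including $\mcoa(a)$ if $a \notin R$) involve only pairs inside that coalition; since $x \in C$ and $a \notin C$, no successful Stage~2 merge reveals information about $u_x(a)$. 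Revelations of $u_x(a)$ can therefore arise only from unsuccessful Stage~2 merge attempts in which $\coa(a)$ is attempted to join a tentative coalition containing $\coa(x)$ (or symmetrically while building $\mcoa(a)$), and each such failure is an inequality of the form $u_x(\coa(a)) < -\tfrac{1}{80}\ceil*{\log_{16} n / 2}$ (or a companion bound), which is monotone decreasing in $u_x(a)$; the dual failure conditions on sums of the form $u_a(\cdot)$ do not involve $u_x(a)$ at all.

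The FKG step then proceeds in two stages. First, conditioning on all events that are independent of $(u_x(a))_{x \in C^*}$ (Stage~1, Stage~3, every successful Stage~2 merge, and every unsuccessful merge that does not involve both $\coa(x)$ and $\coa(a)$) leaves the conditional joint distribution of $(u_x(a))_{x \in C^*}$ unchanged as i.i.d.\ $U(-1,1)$. Second, the residual conditioning is a conjunction $G$ of monotone-decreasing events in these utilities, so \Cref{lem:harris} gives $\Prob(F \mid G) \le \Prob(F)$. Combining this with $\OutEvent \subseteq F$ yields $\Prob(\OutEvent \mid \Stwo) \le n^{-19/8}$, as required.

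The main obstacle is the precise bookkeeping of how each compatibility check in Stage~2 decomposes into monotone events in $(u_x(a))_{x \in C^*}$. In particular, one has to fix an order in which individual utilities are revealed inside each check (analogous to \Cref{fact:revelation} for Stage~1) so that a failed check contributes only monotone-decreasing side information on these utilities and does not inadvertently impose an ``increasing'' constraint such as the fact that a preceding partial sum had to be large enough for the probing to reach the failing term. A careful implementation mirroring \Cref{fact:revelation} should isolate the relevant information into a single decreasing inequality per failed check, which is what the FKG argument requires.
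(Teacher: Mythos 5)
Your proposal is correct and follows essentially the same route as the paper: bound $\OutEvent$ by the tolerance event for the $19\ceil*{\frac{\log_{16}n}{2}}$ coalition members outside $N^{\ind(a)}$ (you additionally excise the possible Stage-3 remainder agent, which the paper handles by the same disregarding argument in its preamble), note that this event is increasing while the conditioning coming from Stage~2 consists of unsuccessful merge attempts, and apply the FKG inequality (\Cref{lem:harris}) to get $\Prob(\cdot) \le (1/2)^{19\ceil*{\log_{16}n/2}} \le n^{-19/8}$. The ``main obstacle'' you flag at the end is not actually needed: the paper conditions on each failed merge attempt $\FailEvent$ as a whole event, whose indicator is decreasing in the utility profile no matter how the compatibility check is implemented internally (a failure is the negation of a conjunction of increasing threshold events, i.e.\ a disjunction of violated inequalities rather than one specific inequality), so no Stage-2 analogue of \Cref{fact:revelation} or revelation-order bookkeeping is required.
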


\begin{proof}
Consider a random hedonic game $(N,u)\sim \randgame$ where $\distr = U(-1,1)$.
Let $C\in \alg(N,u)$ and $a\in N\setminus C$.

The key insight is to prove the correlation of failed merge events with {\envNoun} by applying \Cref{lem:harris}.
We start by precisely defining the functions, for which we want to apply it.

First, for a pair of agents $a,b\in N$ with $a\neq b$, let $\PosEvent$ be the event that $u_b(a)\ge 0$, i.e., $a$ is tolerated by $b$ to join their coalition.
Note that for $b, c\in N\setminus \{a\}$ with $b\neq c$, the events $\PosEvent$ and $\PosEvent[a,c]$ depend on two different, independently sampled utilities, and these events are therefore independent.
Let $f_{a,b}$ be the indicator random variable of the event $\PosEvent$.
For a subset $B\subseteq N\setminus \{a\}$, define $f_B := \prod_{b\in B} f_{a,b}$.

Moreover, for $2\le k\le \partnumber$, let $F = (C_1,\dots, C_k)$ denote a sequence of coalitions where, for $i\in [k]$, we assume $C_i\in \partition^i$.
Let $\FailEvent$ be the event that the merge attempt of $(C_1,\dots, C_{k-1})$ with $C_k$ is not successful and let $g_F$ be the indicator random variable of $\FailEvent$. 
For a set of merge attempts $\mathfrak F = \{F = (C_1,\dots, C_k)\colon 2\le k\le \partnumber, C_i\in \partition^i \text{ for } i\in [k]\}$, define the function $g_{\mathfrak F} := \prod_{F\in \mathfrak F}g_F$.

Clearly, for all $a, b\in N$ with $a\neq b$, $f_{a,b}$ is an increasing function on the space of utility profiles because increasing utilities can only turn an agent objecting another agent to join to not objecting them anymore and not vice versa. Hence, as a product of nonnegative, increasing functions, $f_M$ is increasing.
Similarly, for a sequence of coalitions $F$, $g_F$ is a decreasing function on the space of utility profiles because decreasing utilities can only turn a successful merge into an unsuccessful merge and not vice versa.
Hence, as a product of nonnegative, decreasing functions, $g_{\mathfrak F}$ is a decreasing function.
Additionally, since both of these functions are indicator functions, it holds that $\mathbb E[f^2] < \infty$ and $\mathbb E[g^2] < \infty$.
Hence, by \Cref{lem:harris}, it holds that
\begin{equation}\label{eq:correlation}
    \mathbb E[f_B\cdot g_{\mathfrak F}]\le \mathbb E[f_B]\mathbb E[g_{\mathfrak F}]\text.
\end{equation}

Note that it holds that $\Prob\left(\bigcap_{b\in B} \PosEvent\right) = \mathbb E[f_B]$, $\Prob\left(\bigcap_{F\in \mathfrak F} \FailEvent\right) = \mathbb E[g_{\mathfrak F}]$, as well as $\Prob\left(\bigcap_{b\in B} \PosEvent\cap \bigcap_{F\in \mathfrak F} \FailEvent\right) = \mathbb E[f_B\cdot g_{\mathfrak F}]$.
Hence, \Cref{eq:correlation} directly implies that 
\begin{equation}\label{eq:HarrisConditional}
    \Prob\left(\bigcap_{b\in B} \PosEvent\ \bigg|\ \bigcap_{F\in \mathfrak F} \FailEvent\right) 
    = \frac{\Prob\left(\bigcap_{b\in B} \PosEvent\cap \bigcap_{F\in \mathfrak F} \FailEvent\right)}{\Prob\left(\bigcap_{F\in \mathfrak F} \FailEvent\right)}
    \le \Prob\left(\bigcap_{b\in B} \PosEvent\right)\text.
\end{equation}

We can now apply \Cref{eq:HarrisConditional} to prove the assertion of the lemma.
Consider a random hedonic game $(N,u)\sim \randgame$ where $\distr = U(-1,1)$.
Let $C\in \alg(N,u)$ and $a\in N\setminus C$.
If Stage~2 succeeds, then $C$ contains $19\ceil*{\frac{\log_{16} n}{2}}$ agents in $N\setminus N^{\ind(a)}$.
Consider $B = C\setminus N^{\ind(a)}$ and
$\mathfrak F$ be the set of unsuccessful merge events in the execution of \Cref{alg:main} for $(N,u)$.
Then,

\begin{align*}
    &\Prob\left(\OutEvent\ \big|\ \Stwo\right) = \Prob\left(\bigcap_{b\in C} \PosEvent\ \Big|\ \Stwo\right)\\
    &\le \Prob\left(\bigcap_{b\in B} \PosEvent\ \Big|\ \Stwo\right)= \Prob\left(\bigcap_{b\in B} \PosEvent\ \bigg|\ \bigcap_{F\in \mathfrak F} \FailEvent\right)\\
    &\overset{\text{\Cref{eq:HarrisConditional}}}{\le} \Prob\left(\bigcap_{b\in B} \PosEvent\right) = \prod_{b\in B}\Prob\left(\PosEvent\right)\\
    & = \left(\frac 12\right)^{19\ceil*{\frac{\log_{16} n}{2}}} \le \left(\frac 12\right)^{19\frac{\log_{16} n}{2}} = \left(\frac 12\right)^{\frac {19}2\frac{\log_{\frac 12} n}{-4}} = n^{-\frac {19}8}\text.
\end{align*}

For the equality in the second line, we have used that the conditional utility of $a$ for agents that are both outside of the coalition of $a$ in $\alg(N,u)$ and not in $N^{\ind(a)}$ only depends on unsuccessful merge events.
In the second-to-last line, we use independence of the utilities for $a$ by different agents.
\end{proof}

We can apply \Cref{lem:correlation} to prove the next theorem.

\begin{restatable}{theorem}{IndStability}\label{thm:ind-stability}
Consider random hedonic games for $\distr = U(-1,1)$. Then,
\Cref{alg:main} produces an {\env} partition with probability tending to~$1$ in the large agent limit.
\end{restatable}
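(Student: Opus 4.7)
The plan is to apply Lemma \ref{lem:correlation} together with a union bound, following the general framework laid out at the start of Section \ref{sec:properties}. Since Lemma \ref{lem:correlation} gives $\Prob(\OutEvent \mid \Stwo) \leq n^{-19/8}$, and $19/8 > 2$, this bound is of order $o(n^{-2})$, which is exactly what the overview in Section \ref{sec:properties} identifies as sufficient for the union-bound strategy to succeed.

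In more detail, I would first observe that $\Prob(\Stwo) \to 1$ as $n \to \infty$, which follows by combining the high-probability success of Stage~1 (Theorem \ref{thm:cliquestage}) with the conditional success of Stage~2 (Theorem \ref{thm:mergingstage}). Next, on the event $\Stwo$ the partition $\alg(N,u)$ consists of at most $n$ coalitions, so the number of pairs $(C, a)$ with $C \in \alg(N,u)$ and $a \in N \setminus C$ is at most $n^2$. Applying Lemma \ref{lem:correlation} to each such pair and taking a union bound then yields
\[
\Prob\left(\alg(N,u) \text{ is not {\env}} \ \big|\ \Stwo\right) \;\leq\; n^2 \cdot n^{-19/8} \;=\; n^{-3/8},
\]
which tends to $0$. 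Combining this with the elementary bound
\[
\Prob(\alg(N,u) \text{ is {\env}}) \;\geq\; \Prob(\alg(N,u) \text{ is {\env}} \mid \Stwo) \cdot \Prob(\Stwo),
\]
both factors of which tend to $1$, completes the argument.

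The only delicate point is formalizing the union bound cleanly, given that the coalitions in $\alg(N,u)$ are themselves random objects rather than a predetermined family of subsets. The safest way to handle this is through Markov's inequality applied to the random count of violating pairs: one upper-bounds the expected number of pairs $(C, a)$ with $C \in \alg(N,u)$, $a \notin C$, and $\FavorOut(C, a) = \emptyset$ by $n^2 \cdot n^{-19/8}$ using Lemma \ref{lem:correlation}, and then applies Markov. Since the substantive work, namely the FKG-based decorrelation of the denial events from the conditioning introduced by the unsuccessful merges in Stage~2, was already carried out in Lemma \ref{lem:correlation}, I do not anticipate any further obstacle here.
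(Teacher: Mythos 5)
Your proposal matches the paper's own argument: it conditions on the success event $\Stwo$, applies \Cref{lem:correlation} to each of the at most $n^2$ pairs of a coalition and an outside agent, and concludes via a union bound together with $\Prob(\Stwo)\to 1$, exactly as in the paper's proof of \Cref{thm:ind-stability}. The extra remark about handling the randomness of the coalitions via an expected count of violating pairs is a harmless refinement of the same union-bound step.
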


\begin{proof}
    We will prove \Cref{eq:highcondprob} for the case where $\mathfrak P$ represents {\envNoun}.
    Let $\partition^* = \alg(N,u)$, i.e., $\partition^*$ denotes the output of \Cref{alg:main} for $(N,u)$.
    Then, 
    \begin{align*}
        &\Prob\left(\alg(N,u) \text{ \env} \mid \Stwo\right)\\
        &= 1 - \Prob\left(\bigcup_{a\in N, C\in \partition^*\setminus\{\partition^*(a)\}} \OutEvent \ \Big|\ \Stwo\right)\\ 
        & \overset{\text{\Cref{lem:correlation}}}{\ge} 1 - n^2 n^{-\frac{19}8}\to 1 \text{ for }n \to \infty.
    \end{align*}
    There, we use a union bound and that there are at most $n$ agents in $N$ and $n$ coalitions in $\partition^*$.
\end{proof}

Recall that every individually rational and {\env}
partition is individually stable.
Hence, \Cref{thm:ir,thm:ind-stability} imply our first main result.

\begin{theorem}
    Consider random hedonic games for $\distr = U(-1,1)$. Then,
\Cref{alg:main} produces an individually stable partition with probability tending to~$1$ in the large agent limit.
\end{theorem}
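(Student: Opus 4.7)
The plan is to derive this theorem as an immediate consequence of the two preceding theorems together with the logical implication encoded in \Cref{fig:concepts}, namely that a partition which is both individually rational and {\env} is individually stable. The bulk of the work has already been done; the only remaining task is to justify that implication carefully and then combine the two high-probability guarantees via a union bound.

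To justify the implication, I would unwind definitions. Suppose $\partition$ is individually rational and {\env}, and suppose for contradiction that some agent $\agone$ has an individual deviation $\partition \xrightarrow{\agone} \partition'$. By definition, $\partition'(\agone) \succ_{\agone} \partition(\agone)$ and $\FavorOut(\partition'(\agone),\agone) = \emptyset$. Now there are two cases. If $\partition'(\agone) = \{\agone\}$ (the deviation forms a singleton), then $u_{\agone}(\partition'(\agone)) = 0$, which contradicts strict improvement over $\partition(\agone)$ since individual rationality gives $u_{\agone}(\partition(\agone)) \ge 0$. Otherwise $\partition'(\agone) \setminus \{\agone\} = C$ for some $C \in \partition \setminus \{\partition(\agone)\}$ (a non-singleton target coalition in the original partition), and then the {\envNoun} property of $\partition$ gives $\FavorOut(C,\agone) \neq \emptyset$, contradicting $\FavorOut(\partition'(\agone),\agone) = \emptyset$. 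So no individual deviation exists, and $\partition$ is individually stable.

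With the implication in hand, the conclusion follows by a union bound on the failure events. Let $\mathcal{A}$ be the event that $\alg(N,u)$ is individually rational, and $\mathcal{B}$ the event that it is {\env}. By \Cref{thm:ir}, $\Prob(\mathcal{A}^c) \to 0$; by \Cref{thm:ind-stability}, $\Prob(\mathcal{B}^c) \to 0$. Hence
\begin{align*}
    \Prob(\alg(N,u) \text{ individually stable}) &\ge \Prob(\mathcal{A} \cap \mathcal{B}) \\
    &\ge 1 - \Prob(\mathcal{A}^c) - \Prob(\mathcal{B}^c) \to 1
\end{align*}
in the large agent limit, as required.

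The main obstacle here is not probabilistic but rather logical bookkeeping: one must be careful that {\envNoun} as defined in \Cref{sec:stab} only rules out deviations to \emph{existing} non-empty coalitions in $\partition$ and does not by itself forbid splitting off to a singleton. Overlooking the singleton case is the only way this argument can go wrong, and individual rationality precisely plugs that gap. Beyond this, the proof is a one-line union bound.
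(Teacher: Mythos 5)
Your proposal is correct and matches the paper's argument: the paper likewise obtains this theorem by combining \Cref{thm:ir} and \Cref{thm:ind-stability} with the fact (noted in \Cref{sec:stab}) that individual rationality together with the {\env} property implies individual stability, the rationality assumption handling exactly the singleton-deviation case you single out. Your explicit unwinding of the definitions and the union bound simply spell out what the paper leaves implicit.
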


\subsection{\EXVNoun}\label{sec:EXV}

Compared to {\envNoun}, it is far easier to obtain {\exvNoun}.
Agents that are part of a merged coalition after Stage~2 are denied to abandon their coalition by the agents in their clique coalition from Stage~1.
For agents from the remainder set, we show that they are denied to leave with high probability.
For this, it is useful that we implemented Stage~3 according to \Cref{spec:stagethree}.
Hence, the denial follows by considering the large set of agents outside $N^{\ind(a)}$, for which we can still assume that utilities are distributed as $U(-1,1)$.

\begin{restatable}{theorem}{THMexitdenial}
    Consider random hedonic games for $\distr = U(-1,1)$. Then,
\Cref{alg:main} produces an {\exv} partition with probability tending to~$1$ in the large agent limit.
\end{restatable}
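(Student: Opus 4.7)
The plan is to mirror the structure of the proofs of \Cref{thm:ir} and \Cref{thm:ind-stability}: condition on the high-probability event $\Stwo$ and establish
\begin{equation*}
\Prob\left(\alg(N,u) \text{ \exv} \ \big|\ \Stwo\right) \to 1 \text{ as } n\to\infty,
\end{equation*}
after which the unconditional conclusion follows from $\Prob(\Stwo)\to 1$ via \Cref{thm:mergingstage} and the law of total probability. Writing $\partition^* = \alg(N,u)$, I split the analysis according to whether an agent $a$ lies in the remainder set $R$.

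For $a\notin R$, I claim $\FavorIn(\partition^*(a),a)\neq\emptyset$ \emph{deterministically} whenever $\Stwo$ holds. Indeed $\partition^*(a)\supseteq \mcoa(a)\supseteq \coa(a)$, and every $b\in \coa(a)\setminus\{a\}$ was added to $\coa(a)$ only after \Cref{alg:greedycol} verified $u_b(a)\ge \tfrac12>0$; hence $b\in \FavorIn(\partition^*(a),a)$. No probability is consumed in this case, which is why {\exvNoun} is strictly easier than {\envNoun}.

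The substantive case is $a\in R$. Here the crucial point is the careful accounting of which utilities involving $a$ have been revealed by the time Stage 3 finishes. Stage 1 reveals utilities only within a single class $N^i$, and Stage 3 only reads utilities of the form $u_a(\cdot)$ (to check $u_a(C\cup\{a\})>0$), never $u_b(a)$. In Stage 2, the only mechanism by which $u_b(a)$ can be queried is a merge attempt involving $\coa(a)$; but by \Cref{spec:stagethree}, the coalition $C_0$ into which $a$ is inserted is chosen precisely so that no utilities between $a$ and $C_0$ were revealed during Stage 2, and by \Cref{thm:finalstage} such a $C_0$ exists with probability $1-\mathcal O(1/n^3)$. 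Consequently, for every $b\in C_0\setminus N^{\ind(a)}$, the value $u_b(a)$ is unrevealed and, by the principle of deferred decisions, still distributed as $U(-1,1)$, independently across such $b$. Since $\Stwo$ forces $C_0$ to contain exactly $\ceil*{\tfrac{\log_{16}n}{2}}$ agents from each class, we have $|C_0\setminus N^{\ind(a)}|=19\ceil*{\tfrac{\log_{16}n}{2}}$ and hence
\begin{equation*}
\Prob\left(\FavorIn(\partition^*(a),a)=\emptyset\ \big|\ \Stwo,\ a\in R\right)\le \left(\tfrac12\right)^{19\ceil*{\log_{16}n/2}} + \mathcal O(n^{-3}) \le n^{-19/8} + \mathcal O(n^{-3}),
\end{equation*}
where the extra term absorbs the event that Stage 3 fails to produce a legal $C_0$.

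A union bound over at most $n$ remainder agents then gives a total failure probability of $\mathcal O(n^{-11/8})=o(1)$, completing the proof. The only step that requires care is the bookkeeping described above, namely verifying that for $a\in R$ (in particular when $a\in R_2$, so that $\coa(a)$ was involved in failed merges) the utilities $u_b(a)$ for $b\in C_0\setminus N^{\ind(a)}$ genuinely remain untouched; this is exactly what \Cref{spec:stagethree} was designed to guarantee. Beyond this, the argument is a straightforward union bound of independent $U(-1,1)$ samples and is considerably simpler than the proof of \Cref{thm:ind-stability}, since no FKG-type correlation argument is needed.
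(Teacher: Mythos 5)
Your proposal is correct and follows essentially the same route as the paper's proof: exit denial for agents outside $R$ comes deterministically from their clique coalition (utilities at least $\tfrac12$), while for remainder agents you use \Cref{spec:stagethree} and the principle of deferred decisions to get $19\ceil*{\frac{\log_{16}n}{2}}$ independent unrevealed $U(-1,1)$ utilities, yielding the $n^{-19/8}$ bound and a union bound over at most $n$ agents. The only cosmetic difference is that the paper conditions on success of all three stages up front, whereas you absorb the Stage~3 failure probability as an additive $\mathcal O(n^{-3})$ term, which is equivalent.
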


\begin{proof}
    By \Cref{thm:cliquestage,thm:mergingstage,thm:finalstage}, \Cref{alg:main} succeeds in Stages~1,~2 and~3 with probability tending to~$1$ and we assume their success henceforth.
    By the success of Stage~3, all agents in the remainder set are assigned coalitions, for which no utilities were revealed in Stage~2.
    Let $\partition^*$ denote the produced partition of \Cref{alg:main}.

    We make a case distinction to consider when agents are denied to abandon their coalition.
    Let $\agone\in N$ and assume first that $\agone\notin R$. 
    Then, in the partition produced by \Cref{alg:main}, every agent in the clique coalition $\coa(\agone)$ is still in the coalition of $\agone$ and denies $\agone$ to leave.
    Therefore, conditioned on the success of all stages, with probability~$1$, these agents are denied to leave their final coalition.

    Now, assume that $a\in R$ and let $C:=\partition^*(a)$.
    By the success in Stage~3, no utilities of agents in $C\setminus N^{\ind(a)}$ were revealed in the execution of the algorithm.\footnote{Note, however, that in Stage~3, utilities by $a$ for agents in $C$ were revealed.}
    By success of Stage~2, it holds that $|C\setminus N^{\ind(a)}| = 19\ceil*{\frac{\log_{16} n}{2}}$. 
    
    Hence, by the principle of deferred decisions, there exists no agent in the produced coalition of agent~$a$ that denies $a$ to leave with probability at most 
    $$\left(\frac 12\right)^{19\ceil*{\frac{\log_{16} n}{2}}} \le \left(\frac 12\right)^{19\frac{\log_{16} n}{2}} = \left(\frac 12\right)^{\frac {19}2\frac{\log_{\frac 12} n}{-4}} = n^{-\frac {19}8}\text.$$

    By a union bound, since there are at most $n$ agents and therefore at most $n$ agents in the remainder set, every agent in the remainder set is denied to leave their coalition with a probability of at least $1- n\cdot n^{-\frac {19}8} = 1- n^{-\frac {11}8}$\text.
\end{proof}

This immediately implies contractual Nash stability.

\begin{theorem}
    Consider random hedonic games for $\distr = U(-1,1)$. Then,
\Cref{alg:main} produces a contractually Nash-stable partition with probability tending to~$1$ in the large agent limit.
\end{theorem}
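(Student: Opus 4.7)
The plan is essentially to invoke the previous theorem and the logical implication already recorded in \Cref{sec:stab}. The paper explicitly notes that {\exvNoun} implies contractual Nash stability, since every agent being denied exit by some member of its current coalition rules out any unilateral deviation that has the consent of the abandoned coalition. So the proof should be a one-line reduction rather than a fresh probabilistic argument.

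More concretely, I would let $\partition^* = \alg(N,u)$ and observe that
\begin{align*}
    &\Prob(\partition^* \text{ is contractually Nash-stable}) \\
    &\qquad \ge \Prob(\partition^* \text{ is {\exv}}),
\end{align*}
by the implication recorded in \Cref{fig:concepts} and the discussion in \Cref{sec:stab}. The previous theorem states that the right-hand side tends to $1$ as $n \to \infty$ for $\distr = U(-1,1)$, and therefore so does the left-hand side. This gives the desired conclusion.

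There is no real obstacle here: the substantive work—showing that with high probability every agent's favor-in set in its own coalition is nonempty—was carried out in the proof of the {\exvNoun} theorem, which in turn rested on the success of Stages~1--3 of \Cref{alg:main} and on the conditional distribution arguments via the principle of deferred decisions. I would simply state the reduction and cite the preceding theorem.
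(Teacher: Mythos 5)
Your proposal is correct and matches the paper exactly: the paper likewise gives no separate argument, treating the statement as an immediate consequence of the preceding {\exv} theorem together with the fact from \Cref{sec:stab} that {\exvNoun} (a nonempty favor-in set in one's own coalition for every agent) rules out all contractual deviations. Nothing further is needed.
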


\section{A Case against Nash stability}\label{sec:Nash}
In \Cref{prop:NashBias}, we have seen that it is easy to satisfy Nash stability if utilities are distributed with a bias to be positive: we can simply form the grand coalition, for which Nash stability is equivalent to individual rationality.
Then, for a large number of agents, the grand coalition is individually rational for every agent with high probability.
However, this result depends very much on the positive bias of the utility distribution.
In this section, we will show that Nash-stable partitions do only exist with vanishing probability in random hedonic games when utilities are sampled from $U(-1,1)$.
This is a very interesting contrast to the stability guarantees of \Cref{alg:main} and suggests that Nash stability might be too great a demand. 

Our analysis depends on a sophisticated counting argument.
We will bound the sum of the probabilities for any possible partition to be Nash-stable by a function that tends to $0$ in the large agent limit.
For this, we obtain an expression for the probability that a partition into $k$ coalitions is Nash-stable and multiply it with the $k$th Stirling number of the second kind.

The proof is structured into a series of lemmas.
We defer some of the technical details to the appendix in order to focus on the intuition and key steps of our argument. 

We start with a lemma that will be useful in bounding the probability that a partition into $k$ coalitions is Nash-stable.
Its proof relies on the use of Lagrange multipliers and can be found in the appendix.
\begin{restatable}{lemma}{bound}\label{bound}
    Let $n,k\in \mathbb{N}$.
Let $s_1,\ldots,s_{k}\in \mathbb N$ be positive integers such that $\sum_{i=1}^{k} s_i=n$.
Let $z\ge 0$ and let $q_1,\ldots,q_k\in \mathbb R$ be nonnegative real numbers such that $\sum_{i=1}^{k} q_i=z$.
Then, it holds that
$$ \prod_{i=1} q_i^{s_i}\leq \left(\frac{z}{k}\right)^{n}\text.$$
\end{restatable}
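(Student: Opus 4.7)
The plan is to maximize $\prod_{i=1}^k q_i^{s_i}$ over the simplex $\{q \in \mathbb{R}_{\ge 0}^k : \sum_i q_i = z\}$ via Lagrange multipliers and then bound the resulting maximum by $(z/k)^n$.

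First I would dispose of the boundary of the feasible set. If any $q_i = 0$, then since $s_i \ge 1$ the product vanishes, and the claimed bound $0 \le (z/k)^n$ is immediate. Hence I may restrict to the interior where all $q_i > 0$. Passing to logarithms, the objective becomes $F(q) = \sum_{i=1}^k s_i \ln q_i$, which is a strictly concave function of $q$ on the positive orthant (being a positive linear combination of strictly concave logarithms). Consequently, any interior critical point of $F$ under the linear constraint $\sum_i q_i = z$ is automatically the unique global maximum on the feasible set.

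Next I would set up the Lagrangian $L(q,\lambda) = \sum_{i=1}^k s_i \ln q_i - \lambda\bigl(\sum_i q_i - z\bigr)$ and compute its first-order conditions $\partial L/\partial q_i = s_i/q_i - \lambda = 0$, giving $q_i^* = s_i/\lambda$. Substituting this into the constraint $\sum_i q_i^* = z$ yields $\lambda = n/z$, so that the unique interior maximizer is $q_i^* = s_i z/n$. Evaluating the original product at this point gives the maximum value
\[
\prod_{i=1}^k (q_i^*)^{s_i} = \prod_{i=1}^k \left(\frac{s_i z}{n}\right)^{s_i} = \left(\frac{z}{n}\right)^n \prod_{i=1}^k s_i^{s_i}.
\]

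The final step, which I expect to be the main obstacle, is to verify that this maximum is at most $(z/k)^n$, i.e.\ that $\prod_i s_i^{s_i} \le (n/k)^n$ whenever $s_1,\ldots,s_k$ are positive integers summing to $n$. I would attack this by analyzing how the function $\prod_i s_i^{s_i}$ behaves on the constraint set $\sum_i s_i = n$, exploiting any additional structure of the $s_i$ available from the application context (e.g.\ boundedness or concentration properties coming from the counting argument in \Cref{sec:Nash}) and concluding via elementary manipulations of sums and products to close the chain of inequalities.
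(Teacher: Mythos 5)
Your first two steps are correct and coincide with the paper's own argument: the boundary case $q_i=0$ is trivial, the log-objective $\sum_i s_i\ln q_i$ is strictly concave, and the Lagrange conditions give the unique interior maximizer $q_i^*=s_iz/n$ with maximum value $(z/n)^n\prod_i s_i^{s_i}$. The problem is the step you defer, namely $\prod_i s_i^{s_i}\le (n/k)^n$ for positive integers $s_i$ summing to $n$: this is not merely the main obstacle, it is false whenever the $s_i$ are not all equal. Since $x\mapsto x\ln x$ is convex, Jensen gives $\sum_i s_i\ln s_i\ge n\ln(n/k)$, i.e.\ $\prod_i s_i^{s_i}\ge (n/k)^n$ --- the reverse inequality, with equality only for $s_i\equiv n/k$. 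Concretely, with $k=2$, $s=(3,1)$ (so $n=4$), $z=1$ and $q=(3/4,1/4)$ one has $\prod_i q_i^{s_i}=27/256>1/16=(z/k)^n$, so the statement in the generality of \Cref{bound} is itself false and no elementary manipulation can close your chain; moreover, importing ``additional structure of the $s_i$ from the application context'' cannot prove a lemma that is asserted for arbitrary positive integers $s_i$ and arbitrary nonnegative $q_i$ on the simplex.

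For comparison, the paper's proof follows exactly your route and then handles your deferred step by a second Lagrange computation, claiming that the stationary point $s_i=n/k$ of $\sum_i s_i\ln(s_iz/n)$ subject to $\sum_i s_i=n$ is a maximum; since that objective is convex in $s$, the stationary point is in fact the minimizer, so the paper's argument breaks at precisely the point your instinct flagged. Any true variant of the lemma must either assume the $s_i$ are (essentially) equal or exploit a genuine link between the $q_i$ and the $s_i$ of the kind present in the intended application (\Cref{lem:allbound,lem:nosinglebound}, where $q_i$ is the probability that a sum of $s_i$ uniform variables is the maximum); your proposal, as written, neither identifies nor uses such information, so the gap at the final step is real and cannot be filled along the proposed lines.
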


When an agent checks whether they can perform a Nash deviation, they want to decide which of the possible sums of random variables is best for them.
Hence, we would like to bound the probability of a partition with $k$ coalitions being Nash-stable with the bound in \Cref{bound}, where the $q_i$ should represent the probability that the $i$th sum of random variables is maximum and the $s_i$ represent sizes of random variables.
There is, however, a technical challenge that makes it hard to apply the \Cref{bound} directly:
When an agent considers their own coalition, they also represent a utility value and therefore, when checking whether their own coalition, say the $i$th coalition, maximizes the utility, they consider only a sum of $s_i - 1$ utility values.
The key technical challenge is to show that adding an additional random variable to this sum can only make the agent better off.
The next lemmas help with this endeavor.

For $m\in \mathbb N$ define $\distr_m$ to be the distribution of the sum of $m$ i.i.d. samples from $\distr$.

\begin{restatable}{lemma}{technical}\label{lem:technical} 
Let $\distr = U(-1,1)$. Let $Z_1$ be sampled from $\mathcal{D}_{m(1)}$ where $m(1)\in \mathbb N$ for $i\in [k]$.
Let $Y$ be a continuous random variable with cumulative density function $f_Y$ satisfying $f_Y(y)\geq f_Y(-y)$ for every $y\geq 0$.
Let $X\sim \mathcal{D}$.
Then, $\Prob(Z_1\geq Y)\leq \Prob(Z_1+X\geq Y)$.
\end{restatable}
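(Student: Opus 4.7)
The plan is to reduce the inequality to a pointwise statement about the density of $W := Z_1 - Y$, and then to exploit the symmetry and unimodality of $f_{Z_1}$ to establish that statement. Writing $\Prob(Z_1 \geq Y) = \Prob(W \geq 0)$ and using the symmetry $X \stackrel{d}{=} -X$ together with the independence of $X$ from $(Z_1, Y)$, I would rewrite
\[
\Prob(Z_1 + X \geq Y) = \Prob(W \geq -X) = \frac{1}{2}\int_{-1}^{1} \Prob(W \geq x)\,dx.
\]
Splitting this integral at $0$ and substituting $t = -x$ on $[-1,0]$, the desired inequality reduces to
\[
\int_0^1 \bigl[\Prob(-t \leq W < 0) - \Prob(0 \leq W < t)\bigr]\,dt \geq 0.
\]
Hence it suffices to prove the stronger pointwise bound $f_W(-w) \geq f_W(w)$ for every $w \geq 0$, where $f_W$ denotes the density of $W$; this gives $\Prob(-t \leq W < 0) \geq \Prob(0 \leq W < t)$ termwise, and the integral inequality follows.

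To establish this pointwise bound, I would expand $f_W$ as the convolution of $f_{Z_1}$ and $f_{-Y}$, change variables in each term so that $f_Y$ is evaluated at a free parameter $u$, split the resulting integral at $u = 0$, and fold the negative half using the symmetry $f_{Z_1}(\cdot) = f_{Z_1}(-\cdot)$ to obtain
\[
f_W(-w) - f_W(w) = \int_0^{\infty} \bigl[f_{Z_1}(u-w) - f_{Z_1}(u+w)\bigr]\bigl[f_Y(u) - f_Y(-u)\bigr]\,du.
\]
The second bracket is nonnegative by the hypothesis on $f_Y$. The first bracket is nonnegative provided $f_{Z_1}$ is symmetric and unimodal with mode at $0$, because then $f_{Z_1}(|u-w|) \geq f_{Z_1}(u+w)$ for any $u, w \geq 0$ as $|u-w| \leq u+w$. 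The integrand is thus nonnegative on $[0, \infty)$, yielding $f_W(-w) \geq f_W(w)$.

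The main obstacle is verifying unimodality of $f_{Z_1}$. Symmetry is immediate since each summand of $Z_1$ is symmetric, but unimodality uses the specific choice $\distr = U(-1,1)$ crucially: $f_{Z_1}$ is the $m(1)$-fold convolution of the symmetric uniform density on $[-1,1]$. I would derive unimodality either by induction on $m(1)$, invoking the classical fact that the convolution of two symmetric unimodal densities is symmetric unimodal, or directly from the explicit Irwin--Hall-type formula, which exhibits $f_{Z_1}$ as a piecewise polynomial that is unimodal about $0$. Once unimodality is in hand, the remaining estimates are routine changes of variables.
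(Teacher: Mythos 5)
Your proof is correct, but it takes a genuinely different route from the paper's. The paper never forms $W := Z_1 - Y$: it first proves a standalone tail-domination lemma (\Cref{lem:var}) asserting $\Prob(Z_1 + X \geq x) \geq \Prob(Z_1 \geq x)$ for all $x \geq 0$, via a convolution computation that uses the symmetry of $X$ together with the symmetry and unimodality of the density of $Z_1$; it then writes $\Prob(Z_1 + X \geq Y) - \Prob(Z_1 \geq Y) = \int f_Y(y)\,(F(y) - G(y))\,dy$ with $F, G$ the cumulative distribution functions of $Z_1$ and $Z_1 + X$, folds this integral at $0$ using the symmetry of both $Z_1$ and $Z_1 + X$, and concludes from $F(y) \geq G(y)$ and $f_Y(y) \geq f_Y(-y)$ for $y \geq 0$. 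You instead absorb $Y$ first: conditioning on the uniform $X$ reduces the claim to the pointwise skewness statement $f_W(-w) \geq f_W(w)$ for $w \geq 0$, which you prove by folding the convolution of $f_{Z_1}$ with $f_Y$; your key intermediate object is thus the density of $Z_1 - Y$ rather than a stochastic comparison of $Z_1$ with $Z_1 + X$. The two arguments rest on the same two ingredients---the symmetry and unimodality of the Irwin--Hall-type density of $Z_1$ (which the paper also invokes without proof inside \Cref{lem:var}, so your plan to justify it by induction or by the explicit formula is, if anything, more careful) and the asymmetry hypothesis on $f_Y$---and both implicitly assume independence of $X$, $Z_1$, and $Y$, which is how the lemma is applied. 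What each buys: the paper's route isolates a reusable, conceptually clean one-sided stochastic domination between $Z_1$ and $Z_1 + X$, while yours yields a slightly stronger local fact (the density skewness of $W$, giving $\Prob(-t \leq W < 0) \geq \Prob(0 \leq W < t)$ for every $t$, not merely its average over $t \in [0,1]$) and makes clear that only the symmetry of $X$ is really needed.
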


Unfortunately, \Cref{lem:technical} is not sufficient for dealing with singleton coalitions because in this case, we want to add a random variable for the empty coalition represented by a utility value of~$0$.
We deal with this case by proving a variation of \Cref{lem:technical} suitable for 
singleton coalitions.
For its proof, we make use of yet another lemma that deals with the distribution of the maximum of several random variables.

\begin{restatable}{lemma}{LemMax}\label{lem:max}
Let $Z_1,\ldots, Z_k$ be independent symmetric continuous random variables with mean $0$ 
and let $Y=\max(Z_1,\ldots,Z_k)$.
Then for any $x\geq 0$, it holds that $$f_Y(x)\geq f_Y(-x)$$ where $f_Y$ is the probability density function of $Y$.
\end{restatable}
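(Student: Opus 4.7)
The plan is to compute the density $f_Y$ explicitly from the structure of $Y$ as a maximum of independent continuous random variables, and then compare the resulting expressions at $x$ and at $-x$ termwise using the symmetry of each $Z_i$.

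First I would write, for each $i\in[k]$, $F_i$ and $f_i$ for the CDF and PDF of $Z_i$. Since $Z_i$ is symmetric about $0$, we have $f_i(x)=f_i(-x)$ and $F_i(-x)=1-F_i(x)$ for every $x\in\mathbb{R}$; in particular $F_i(0)=\tfrac12$, so for every $x\geq 0$ it holds that $F_i(x)\geq\tfrac12\geq 1-F_i(x)$. Because the $Z_i$ are independent, the CDF of the maximum factors as $F_Y(y)=\prod_{i=1}^k F_i(y)$, and continuity of each $Z_i$ lets me differentiate to obtain
\begin{equation*}
f_Y(y)=\sum_{i=1}^k f_i(y)\prod_{j\neq i} F_j(y).
\end{equation*}

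Next I would evaluate this formula at $x$ and at $-x$ for arbitrary $x\geq 0$. By the symmetry identities above,
\begin{equation*}
f_Y(x)=\sum_{i=1}^k f_i(x)\prod_{j\neq i} F_j(x),\qquad f_Y(-x)=\sum_{i=1}^k f_i(x)\prod_{j\neq i}\bigl(1-F_j(x)\bigr).
\end{equation*}
Since $x\geq 0$ gives $F_j(x)\geq 1-F_j(x)\geq 0$ for every $j$, the product $\prod_{j\neq i} F_j(x)$ dominates $\prod_{j\neq i}(1-F_j(x))$ termwise; combined with $f_i(x)\geq 0$, summing over $i$ yields $f_Y(x)\geq f_Y(-x)$, as required.

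I do not expect a serious obstacle here, since the argument is essentially a symmetry computation. The only delicate point is the passage from the CDF product formula to the density formula, which requires justifying that $F_Y$ is differentiable and that differentiation under the product sign is valid. This is handled by the standing hypothesis that each $Z_i$ is continuous (so each $F_i$ is absolutely continuous with density $f_i$), which validates the Leibniz-style expansion. All remaining inequalities are elementary consequences of the symmetry of the $Z_i$ about $0$.
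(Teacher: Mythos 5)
Your proposal is correct and follows essentially the same route as the paper's proof: write $f_Y(y)=\sum_i f_i(y)\prod_{j\neq i}F_j(y)$ by independence, use symmetry of each $Z_i$ to equate the density factors at $x$ and $-x$, and compare the CDF products termwise for $x\geq 0$ (the paper invokes monotonicity $F_j(x)\geq F_j(-x)$ directly, whereas you route through $F_j(-x)=1-F_j(x)$ and $F_j(x)\geq\tfrac12$, which is the same comparison).
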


This leads to the version of \Cref{lem:technical} applicable to 
singleton coalitions.

\begin{restatable}{lemma}{technicalzero}\label{lem:technicalzero} 
Let $\distr = U(-1,1)$. 
Let $Z_2,\ldots,Z_k$ be independent random variables where $Z_i$ is sampled from $\mathcal{D}_{m(i)}$ where $m(i)\in \mathbb N$ for $i\in [k]$.
Let $X\sim \mathcal{D}$.
Then, $\Prob(0\geq \max(Z_2,\ldots,Z_k))\leq \Prob(X\geq \max(Z_2\ldots,Z_k))$.
\end{restatable}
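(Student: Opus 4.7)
The plan is to reduce the claim to an integral comparison using the density bound $f_Y(y)\ge f_Y(-y)$ for $y\ge 0$ supplied by Lemma~\ref{lem:max}, applied to $Y := \max(Z_2,\ldots,Z_k)$. Each $Z_i$ is a sum of i.i.d.\ $U(-1,1)$ variables, hence symmetric around $0$ with mean $0$, so Lemma~\ref{lem:max} applies directly to $Y$. Unlike Lemma~\ref{lem:technical}, here we are comparing a constant ($0$) against a random variable $X$, so the trick of adding an independent random variable is not available; instead I would directly compare the two probabilities by expressing them as integrals against the density of $Y$.

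Concretely, I would write
\begin{equation*}
\Prob(X\ge Y)-\Prob(0\ge Y)=\int_{-\infty}^{\infty}\Prob(X\ge y)f_Y(y)\,dy-\int_{-\infty}^{0}f_Y(y)\,dy,
\end{equation*}
and then use that $X\sim U(-1,1)$ gives $\Prob(X\ge y)=1$ for $y\le -1$, $\Prob(X\ge y)=\tfrac{1-y}{2}$ for $y\in[-1,1]$, and $\Prob(X\ge y)=0$ for $y\ge 1$. Splitting the first integral accordingly and cancelling the piece $\int_{-\infty}^{-1}f_Y(y)\,dy$ that appears in both terms, the difference collapses to
\begin{equation*}
\int_{-1}^{0}\frac{-1-y}{2}f_Y(y)\,dy+\int_{0}^{1}\frac{1-y}{2}f_Y(y)\,dy.
\end{equation*}

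Next I would substitute $y\mapsto -y$ in the first integral, obtaining
\begin{equation*}
\Prob(X\ge Y)-\Prob(0\ge Y)=\int_{0}^{1}\frac{1-y}{2}\bigl(f_Y(y)-f_Y(-y)\bigr)\,dy.
\end{equation*}
Since $\tfrac{1-y}{2}\ge 0$ on $[0,1]$ and $f_Y(y)\ge f_Y(-y)$ on $[0,\infty)$ by Lemma~\ref{lem:max}, the integrand is nonnegative, so the difference is nonnegative, which is the desired inequality.

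The only mildly delicate point is the bookkeeping in splitting the integrals and the substitution, and implicitly relying on the fact that each $Z_i$ is a continuous symmetric mean-zero random variable (which is immediate from $Z_i$ being a finite sum of independent $U(-1,1)$ variables) so that Lemma~\ref{lem:max} is applicable to $Y$. No genuine obstacle arises beyond these manipulations.
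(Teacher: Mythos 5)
Your proposal is correct and takes essentially the same route as the paper's proof: both write $\Prob(0\ge Y)$ and $\Prob(X\ge Y)$ as integrals against the density of $Y=\max(Z_2,\ldots,Z_k)$, reflect the interval $[-1,0]$ onto $[0,1]$, and conclude via $f_Y(y)\ge f_Y(-y)$ from \Cref{lem:max}, with the symmetry of $X$ entering through your explicit uniform CDF $\Prob(X\ge y)=\tfrac{1-y}{2}$ exactly where the paper uses $\Prob(X\ge -y)+\Prob(X\ge y)=1$. The bookkeeping in your splitting and substitution checks out, so there is no gap.
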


We can apply the insight from the previous lemmas to bound the probability that a fixed partition into $k$ coalitions is Nash-stable. 
We apply the proof strategy discussed after \Cref{bound} and apply \Cref{lem:technical,lem:technicalzero} to add an additional random variable.

\begin{restatable}{lemma}{allbound}\label{lem:allbound}
Let $\distr = U(-1,1)$ and consider a random hedonic game $(N,u)\sim \randgame$.
Let $\partition$ be a partition of $N$ into $k$ coalitions.
Then, $$\Prob\left(\partition \text{ is Nash-stable for }(N,u)\right)\le \left(\frac{1}{k}\right)^n\text.$$
\end{restatable}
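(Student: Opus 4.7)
The plan is to bound, for a fixed partition $\partition = \{C_1, \dots, C_k\}$ with $|C_i| = s_i$ and $\sum_{i=1}^k s_i = n$, the probability that every agent is content under $\partition$. For $a \in C_i$, Nash-stability requires $u_a(C_i) \geq u_a(C_j \cup \{a\})$ for every $j \neq i$ together with $u_a(C_i) \geq 0$ (the singleton-deviation constraint). Writing $\tilde Z_j^{(a)} := u_a(C_j)$, this amounts to $\tilde Z_i^{(a)} \geq \max(\{\tilde Z_j^{(a)} : j \neq i\} \cup \{0\})$, where $\tilde Z_i^{(a)}$ is a sum of $s_i - 1$ i.i.d.\ $U(-1,1)$ samples, each $\tilde Z_j^{(a)}$ for $j \neq i$ is a sum of $s_j$ such samples, and all these sums are mutually independent. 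The key difficulty is the dimension mismatch between the $s_i-1$ summands on the left and the $s_j$ summands on the right, which prevents a direct application of \Cref{bound}.

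To remedy this, I would use \Cref{lem:technical} together with \Cref{lem:technicalzero} to inflate the left-hand side by one fresh i.i.d.\ $U(-1,1)$ sample $X_a$. For $s_i \geq 2$, monotonicity first yields $\Prob(\tilde Z_i^{(a)} \geq \max(\{\tilde Z_j^{(a)} : j \neq i\} \cup \{0\})) \leq \Prob(\tilde Z_i^{(a)} \geq Y)$ with $Y := \max_{j \neq i} \tilde Z_j^{(a)}$, dropping the $\geq 0$ condition. Then \Cref{lem:technical} applies, its hypothesis $f_Y(y) \geq f_Y(-y)$ for $y \geq 0$ being furnished by \Cref{lem:max} because $Y$ is the maximum of symmetric continuous mean-zero random variables. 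For $s_i = 1$, the empty-sum left-hand side is literally $0$ and \Cref{lem:technicalzero} directly replaces it by a fresh $X_a$. In both cases, setting $\hat Z_i^{(a)}$ to be the enlarged sum of $s_i$ samples and $\hat Z_j^{(a)} := \tilde Z_j^{(a)}$ for $j \neq i$, the probability that agent $a$ is satisfied is at most $q_i := \Prob(\hat Z_i \geq \max_{j \neq i} \hat Z_j)$, a quantity depending only on the size profile $(s_1, \dots, s_k)$.

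Since the $\hat Z_j$ are independent continuous random variables, ties occur with probability zero, so $\sum_{i=1}^k q_i = 1$. Moreover, satisfaction events for distinct agents $a, a'$ depend on the disjoint utility families $\{u_a(\cdot)\}$ and $\{u_{a'}(\cdot)\}$ (together with independent auxiliaries $X_a, X_{a'}$), hence are mutually independent. This gives
\[
\Prob(\partition \text{ is Nash-stable}) \leq \prod_{a \in N} q_{i(a)} = \prod_{i=1}^k q_i^{s_i},
\]
and feeding $z = 1$ into \Cref{bound} delivers $\prod_{i=1}^k q_i^{s_i} \leq (1/k)^n$, as claimed.

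The main obstacle is the dimension-equalization step. Singleton coalitions must be handled separately because \Cref{lem:technical} needs a genuine sum on the left, and the atom of $\max(\{\tilde Z_j^{(a)} : j \neq i\} \cup \{0\})$ at $0$ forbids plugging this maximum directly into $Y$ in \Cref{lem:technical}; dropping the singleton-deviation constraint by monotonicity before invoking the lemma is the clean fix. Everything else is bookkeeping: the independence of agents' events reflects the product structure of $\randgame$, and the concluding inequality is the AM-GM-type bound packaged in \Cref{bound}.
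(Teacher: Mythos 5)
Your proposal is correct and follows essentially the same route as the paper's proof: drop the singleton-deviation constraint, exploit independence of the agents' utility families to factor the probability, equalize the summand counts by inflating the own-coalition sum with one fresh $U(-1,1)$ sample via \Cref{lem:technical} (with the density hypothesis supplied by \Cref{lem:max}) and \Cref{lem:technicalzero} for singletons, and then conclude with $\sum_i q_i = 1$ and \Cref{bound} at $z=1$. No gaps to report.
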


\begin{proof}
For $\distr = U(-1,1)$ let $(N,u)\sim \randgame$ and consider a partition $\partition = \{C_1,\dots, C_k\}$ into $k$ coalitions.

For $a\in N$, define $p_a :=\Prob\left(u_a(\partition) = \max\{u_a(C_i\cup\{a\})\colon i \in [k]\}\right)$, i.e., the probability that $a$ cannot perform a Nash-deviation to join any other nonempty coalition.
By independence of the utilities, it follows that 
\begin{equation}\label{eq:NashReal}
    \Prob\left(\pi \text{ is Nash-stable for }(N,u)\right) \le \prod_{a\in N}p_a\text.
\end{equation}
Note that this is an inequality because we disregard the deviations to form a singleton coalition.

Now, for $i\in [k]$ and $j\in [|C_i|]$, let $X_{i,jt}\sim U(-1,1)$ and $Z_i=\sum_{j=1}^{|C_i|}X_{i,j}$.
Let $q_i:=\Prob(Z_i\geq \max(Z_1,\ldots,Z_n))$.
Note that $Z_i$ is the sum of $|C_i|$ samples from random variables distributed according to $U(-1,1)$.
Hence, if $a\notin C_i$, this is precisely the distribution of $u_a(C_i\cup\{a\})$.
Moreover, if $a\in C_{i^*}$, then $u_a(\partition)$ has the distribution of the sum of $|C_{i^*}|-1\geq 0$ random variables distributed according to $U(-1,1)$.

Let $Y=\max(Z_2,\ldots,Z_k)$. Then since $|C_i|\geq 1$ for each $i=2,\ldots,k$, $Y$ has a probability density function $f_Y$. 
By \Cref{lem:max}, it holds that  $f_Y(y)\geq f_Y(-y)$ whenever $y\geq 0$.
If $|C_{i^*}|\ge 2$, we can apply \Cref{lem:technical} to infer that $p_a\leq q_{i^*}$.
If $|C_{i^*}|= 1$, it follows from \Cref{lem:technicalzero} that $p_a\leq q_{i^*}$.

Combining this with \Cref{eq:NashReal}, we obtain 
$$
\Prob\left(\pi \text{ is Nash-stable for }(N,u)\right)\le \prod_{a\in N}p_a\leq \prod_{i=1}^{k}q_i^{|C_i|}.$$
Since $\sum_{i=1}^k q_i=1$, using \Cref{bound},
we conclude that $$\Prob\left(\pi \text{ is Nash-stable for }(N,u)\right)\leq \left(\frac{1}{k}\right)^n\text.$$
\end{proof}

We can improve this bound for the case where the partition does not contain singleton coalitions.
\begin{restatable}{lemma}{nosinglebound}\label{lem:nosinglebound}
Let $\distr = U(-1,1)$ and consider a random hedonic game $(N,u)\sim \randgame$.
Let $\partition$ be a partition of $N$ into $k$ nonsingleton coalitions.
Then, $$\Prob\left(\partition \text{ is Nash-stable for }(N,u)\right)\le\left(\frac{1-\frac12^k}{k}\right)^n.$$
\end{restatable}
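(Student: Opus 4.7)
The plan is to refine the argument of Lemma~\ref{lem:allbound} by exploiting the extra Nash deviation available when no coalition is a singleton: every agent $a$ in a coalition $C_{i^*}$ of size at least two can deviate to $\{a\}$ whenever $u_a(\partition)<0$, and incorporating this option sharpens the per-agent bound. Strengthen the per-agent event to
\begin{equation*}
p_a := \Prob\!\left(u_a(\partition)\ge 0 \text{ and } u_a(\partition)\ge u_a(C_i\cup\{a\}) \text{ for all } i\in[k]\setminus\{i^*\}\right).
\end{equation*}
As in Lemma~\ref{lem:allbound}, these events are independent across agents, so $\Prob(\partition\text{ is Nash-stable})=\prod_{a\in N} p_a$. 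Introduce independent auxiliary random variables $Z_1,\dots,Z_k$, where $Z_i$ is a sum of $|C_i|$ i.i.d.\ $U(-1,1)$ samples, and set $q_i:=\Prob(Z_i\ge 0 \text{ and } Z_i\ge Z_j \text{ for all } j\ne i)$. It then suffices to prove (i) $p_a\le q_{i^*}$ for every $a$ and (ii) $\sum_{i=1}^k q_i=1-2^{-k}$; applying Lemma~\ref{bound} with $s_i=|C_i|$ and $z=1-2^{-k}$ then yields $\prod_i q_i^{|C_i|}\le \bigl(\tfrac{1-2^{-k}}{k}\bigr)^n$, which together with the factorization above finishes the proof.

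For (i), couple $Z_{i^*}=Z_{i^*}'+X$ where $Z_{i^*}'$ has the distribution of the sum of the $|C_{i^*}|-1\ge 1$ utilities $u_a(b)$ for $b\in C_{i^*}\setminus\{a\}$, and $X\sim U(-1,1)$ is an extra independent sample. Setting $Y:=\max_{j\ne i^*} Z_j$, the required inequality becomes
\begin{equation*}
\Prob(Z_{i^*}'\ge\max(Y,0))\;\le\;\Prob(Z_{i^*}'+X\ge\max(Y,0)).
\end{equation*}
Split on the sign of $Y$. On $\{Y<0\}$, both probabilities equal $\tfrac12$ by the symmetry of $Z_{i^*}'$ and of $Z_{i^*}'+X$ around $0$. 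On $\{Y\ge 0\}$, the conditional law of $Y$ is continuous and supported on $[0,\infty)$ (by Lemma~\ref{lem:max}, $Y$ has a density with $f_Y(y)\ge f_Y(-y)$ for $y\ge 0$, and the conditioning removes any mass below $0$), so it trivially satisfies the hypothesis of Lemma~\ref{lem:technical}. Applying Lemma~\ref{lem:technical} on this event yields the inequality there, and combining the two cases gives~(i).

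For (ii), since the $Z_i$ are independent continuous symmetric random variables, the events $\{Z_i\ge 0 \text{ and } Z_i\ge Z_j \text{ for all } j\ne i\}$ are almost surely disjoint and partition $\{\max_i Z_i\ge 0\}$, so
\begin{equation*}
\sum_{i=1}^k q_i \;=\; 1-\prod_{i=1}^k \Prob(Z_i<0)\;=\;1-2^{-k}.
\end{equation*}
The main obstacle is step~(i): Lemma~\ref{lem:technical} requires a continuous ``threshold'' random variable, but $\max(Y,0)$ carries a point mass at $0$ whenever $\Prob(Y<0)>0$; the case split by the sign of $Y$ absorbs this atom into a direct symmetry computation and reduces the nontrivial case to the conditional distribution $Y\mid Y\ge 0$, which is genuinely continuous and automatically satisfies the density hypothesis of Lemma~\ref{lem:technical}.
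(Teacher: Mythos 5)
Your proposal is correct and follows essentially the same route as the paper's proof: the same per-agent factorization, the same auxiliary variables $Z_i$ with the extra threshold $0$ giving $\sum_i q_i = 1-2^{-k}$, the same split on the sign of $\max_{j\neq i^*}Z_j$ to reduce the comparison $p_a\le q_{i^*}$ to Lemma~\ref{lem:technical} applied to the (trivially one-sided) conditional law of the maximum, and the same final application of Lemma~\ref{bound}. The only cosmetic difference is that the paper treats $k=1$ separately, while your case split absorbs it with the convention $\max\emptyset=-\infty$.
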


\begin{proof}
For $\distr = U(-1,1)$ let $(N,u)\sim \randgame$ and consider a partition $\partition = \{C_1,\dots, C_k\}$ into $k$ coalitions such that $|C_i|\ge 2$ for all $i\in [k]$.
First note that if $k = 1$, then $\partition$ is the grand coalition, which is Nash-stable with probability $\left(\frac 12\right)^n$, and the bound is correct.
We may therefore assume that $k\ge 2$.

Similar to the proof of \Cref{lem:allbound}, for $a\in N$ define $$p_a := \Prob\left(u_a(\partition)\geq\max\{u_a(C_1\cup\{a\}),u_a(C_2\cup\{a\}),\ldots u_a(C_k\cup\{a\}),0\}\right).$$
The difference is that we now also want a utility larger than $0$, i.e., to be better than the singleton coalition.
Hence, 
\begin{equation}\label{eq:exactproduct}
    \Prob\left(\partition \text{ is Nash-stable for }(N,u)\right) = \prod_{a\in N}p_a\text.
\end{equation}

Moreover, for $i\in [k]$ and $j\in [|C_i|]$, let again $X_{i,j}\sim U(-1,1)$ and $Z_i=\sum_{j=1}^{|C_i|}X_{i,j}$.
Set $Z_{k+1} = 0$.
However, we also adjust the definition of $q_i$ to include~$0$ and define $$q_i:=\Prob(Z_i\geq \max(Z_1,\ldots,Z_k,Z_{k+1}))$$ for $i\in [k+1].$
Again, $Z_i$ is the sum of $|C_i|$ random variables sampled from $U(-1,1)$.
Hence if $a\notin C_i$, then $u(C_i\cup\{a\})$ is distributed identically to $Z_i$.
Moreover, if $a\in C_{i^*}$, then and $u_a(C_{i^*})$ has the distribution of the sum of $|C_{i^*}|-1$ random variables distributed according to $U(-1,1)$.
Note that $\partition$ does not contain singleton coalitions and therefore $|C_{i^*}|-1\ge 1$.
Moreover, for $i\in [n]$, it holds that $Z_i$ is the sum of at least one random variable.
By the law of total probability, we have that $$q_i=\Prob\left(Z_i\geq 0\text{ and }0\geq \max_{j\in[k],j\neq i}\{Z_j\}\right)+\Prob\left(Z_i\geq \max_{j\in[k],j\neq i}\{Z_j\}\text{ and } \max_{j\in[k],j\neq i}\{Z_j\}>0\right).$$
Similarly if $a\in C_i$, then $$p_a=\Prob\left(u_a(C_i)\geq 0\text{ and }0\geq \max_{j\in[k],j\neq i}\{Z_j\}\right)+\Prob\left(u_a(C_i)\geq \max_{j\in[k],j\neq i}\{Z_j\}\text{ and } \max_{j\in[k],j\neq i}\{Z_j\}>0\right).$$
The first term in both of these expressions equals  $\left(\frac{1}{2}\right)^n$, since all $Z_j\leq 0$ for every $j\neq i$ w.p. $\left(\frac{1}{2}\right)^{n-1}$ and then $Z_i$ (or $u_a(C_i)$ respectively)
are positive with probability $\frac{1}{2}$ since it is the sum of $|C_i|\geq 2$ (respectively, $|C_i|-1\geq 1$) $U(-1,1)$ random variables.
To conclude that $p_a\leq q_{i^*}$
we will apply \Cref{lem:technical} to show that 
$$\Prob\left(u_a(C_i)\geq \max_{j\in[k],j\neq i}\{Z_j\}\ \Big|\ \max_{j\in[k],j\neq i}\{Z_j\}>0\right)\leq \Prob\left(Z_i\geq \max_{j\in[k],j\neq i}\{Z_j\}\ \Big|\  \max_{j\in[k],j\neq i}\{Z_j\}>0\right)\text.$$
This is sufficient since by conditional probability, 
\begin{align*}
    &\Prob\left(u_a(C_i)\geq \max_{j\in[k],j\neq i}\{Z_j\}\text{ and } \max_{j\in[k],j\neq i}\{Z_j\}>0\right)\\
    &=\Prob\left(u_a(C_i)\geq \max_{j\in[k],j\neq i}\{Z_j\}\ \Big|\ \max_{j\in[k],j\neq i}\{Z_j\}>0\right)\cdot \Prob\left(\max_{j\in[k],j\neq i}\{Z_j\}>0\right)
\end{align*}
 and an analogous statement holds for $Z_i$ in place of $u_a(C_i)$.
 
Let $Y=\left\{\max_{j\in[k],j\neq i}\{Z_j\}>0)\ \big|\ \max_{j\in[k],j\neq i}\{Z_j\}>0\right\}$.
Since $u_a(C_i)$ and $Z_i$ are both independent of $Y$,
$$\Prob\left(u_a(C_i)\geq \max_{j\in[k],j\neq i}\{Z_j\}\ \Big|\ \max_{j\in[k],j\neq i}\{Z_j\}>0\right)=\Prob(u_a(C_i)\geq Y)$$
and $$\Prob\left(Z_i\geq \max_{j\in[k],j\neq i}\{Z_j\}\ \Big|\ \max_{j\in[k],j\neq i}\{Z_j\}>0\right)=\Prob(Z_i\geq Y)$$
Observe that $Y$ is a continuous random variable and its cumulative density function $f_Y$ trivially satisfies $f_Y(y)\geq f_Y(-y)$ for any $y\geq 0$ since $f_Y(y)=0$ whenever $y<0$.
Since the conditions of \Cref{lem:technical} are satisfied, it follows that $\Prob(u_a(C_i)\geq Y)\leq\Prob(Z_i\geq Y)$, showing that $p_a\leq q_{i^*}$, as desired.
Hence, by \Cref{eq:exactproduct}, we obtain 
$$
\Prob\left(\partition \text{ is Nash Stable for }(N,u)\right)=\prod_{a\in N}p_a\leq \prod_{i=1}^{k}q_i^{|C_i|}\text.$$
Now the probability that $0$ is the maximum among $Z_1,\ldots,Z_k, Z_{k+1}$ is exactly $q_{k+1} = \frac{1}{2}^k$.
Indeed, it is maximum if and only if all of the $Z_i$ are negative. 
It follows that $\sum_{i=1}^kq_i=1-\left(\frac{1}{2}\right)^k$.
Using \Cref{bound},
we conclude that 
$$\Prob\left(\partition \text{ is Nash Stable for }(N,u)\right)\leq \left(\frac{1-\left(\frac{1}{2}\right)^k}{k}\right)^n.$$
This completes the proof.
\end{proof}

We can combine the previous lemmas to prove the main theorem of this section.
As we discussed in the beginning of the section, we essentially want to multiply the probabilities obtained in \Cref{lem:allbound,lem:nosinglebound} with the Stirling number of the second kind.
Our analysis relies on two estimations that consider the cases of a small and a large number of coalitions.

\begin{restatable}{theorem}{nashvanish}\label{thm:nashvanish}
Let $\distr = U(-1,1)$ and consider a random hedonic game $(N,u)\sim \randgame$.
Then, 
	\begin{align*}
		\lim_{n\to \infty} {\Prob}((N,u) \text{ contains Nash-stable partition}) = 0.
	\end{align*}
\end{restatable}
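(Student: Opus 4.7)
The plan is to apply a union bound over all partitions of $N$ and group the terms by the number of coalitions $k = |\pi|$:
\begin{equation*}
\Prob((N,u) \text{ admits a Nash-stable partition}) \le \sum_{k=1}^n \sum_{\pi : |\pi|=k} \Prob(\pi \text{ is Nash-stable}).
\end{equation*}
I would then combine the bounds of \Cref{lem:allbound} (applicable to any partition) and \Cref{lem:nosinglebound} (applicable only to partitions with no singleton coalitions), balancing them with a slowly-growing cutoff $K := K(n) = \lfloor \log_2 \log_2 n \rfloor$.

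For $k > K$, I would apply \Cref{lem:allbound} together with the standard bound $\stirling{n}{k} \le k^n/k!$ on the Stirling number of the second kind. The resulting contribution is at most $\sum_{k=K+1}^n 1/k! \le 2/K!$, which tends to $0$ because $K(n) \to \infty$.

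For $k \le K$, I would split further based on whether $\pi$ contains singleton coalitions. Partitions without singletons contribute at most $\sum_{k=1}^K \stirling{n}{k}\left((1-2^{-k})/k\right)^n \le e \cdot e^{-n/\log_2 n}$, using \Cref{lem:nosinglebound} combined with $(1-2^{-k})^n \le e^{-n/2^k} \le e^{-n/\log_2 n}$, which holds because $2^k \le 2^K \le \log_2 n$. Partitions with at least one singleton require the weaker per-partition bound $(1/k)^n$ from \Cref{lem:allbound}, but their count is also smaller: at most $n \cdot \stirling{n-1}{k-1}$ such partitions exist for each $k$, since one can specify the singleton agent and then partition the remaining $n-1$ agents into $k-1$ nonempty parts (this overcounts partitions with multiple singletons, which only helps). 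Using $\stirling{n-1}{k-1} \le (k-1)^{n-1}/(k-1)!$ and $(1-1/k)^{n-1} \le e^{-(n-1)/K}$ for $k \le K$, the contribution is at most $e \cdot n \cdot e^{-(n-1)/\log_2 \log_2 n}$, which vanishes because $(n-1)/\log_2 \log_2 n$ dominates $\log n$ for large $n$.

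Summing the three pieces yields a bound tending to $0$ in the large agent limit. The main obstacle is calibrating the cutoff $K$: one needs $2^K = o(n)$ so that $(1-2^{-k})^n$ vanishes in the no-singleton small-$k$ regime, while simultaneously $K \to \infty$ so that $1/K!$ kills the large-$k$ tail. The double logarithm $\log_2 \log_2 n$ is the natural sweet spot, and the troublesome singleton case is controlled because isolating a singleton costs only a factor of $n$ in the count, which is dominated by the super-polynomially small exponential $e^{-\Theta(n/\log \log n)}$.
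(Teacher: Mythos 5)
Your proposal is correct, and it follows the paper's overall skeleton -- a union bound over partitions grouped by the number $k$ of coalitions, using exactly \Cref{lem:allbound} and \Cref{lem:nosinglebound}, with a small-$k$/large-$k$ split and separate treatment of singleton-containing partitions -- but the execution differs in ways worth noting. For the count of partitions you use the elementary bound $\stirling{n}{k}\le k^n/k!$ and then sum the resulting series $\sum_{k>K}1/k!$ directly, whereas the paper works with the alternating-sum formula for $\stirling{n}{k}$ and Stirling's approximation for factorials; your route is shorter, but because your cutoff is $K=\lfloor\log_2\log_2 n\rfloor$ rather than the paper's $\tfrac12\log_2 n$, your large-$k$ tail decays only like $1/(\log\log n)!$, while the paper obtains the superpolynomial rates $2^{-\Omega(\log n\log\log n)}$ and $e^{-\Omega(\sqrt n)}$ in its two regimes (both are of course sufficient for the limit statement; the paper's per-$k$ bounds also need to be $o(1/n)$ since it concludes via $n\cdot\max(\cdot)$, a step your series summation avoids). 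The treatment of singletons is genuinely different: the paper keeps them inside the partition, charges each singleton the probability $(1/2)^{k-1}$ of having no profitable deviation, and applies \Cref{lem:nosinglebound} to the remaining $n-\ell$ agents in $k-\ell$ coalitions, paying $\binom{n}{\ell}$ for the choice of singletons; you instead fall back on the cruder bound $(1/k)^n$ of \Cref{lem:allbound} for such partitions and win purely by counting, observing that partitions with a singleton number at most $n\,\stirling{n-1}{k-1}$, so that the factor $(1-1/k)^{n-1}\le e^{-(n-1)/K}$ absorbs the extra $n$. Both arguments are sound; yours is more elementary and needs no probabilistic bookkeeping for the singleton agents, while the paper's yields sharper quantitative decay.
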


\begin{proof}
Our goal is to give a bound for the probability that there exists a Nash-stable partition into $k$ coalitions.

Therefore, we want to bound the product of the bounds on the probability that such a partition is Nash-stable obtained in \Cref{lem:allbound,lem:nosinglebound} with the number of such partitions. 
The number of partitions of $n$ elements into $k$ sets is given by the Stirling number of the second kind \begin{align*}\stirling{n}{k}=\sum_{i=1}^k(-1)^{k-i}\frac{i^n}{i!(k-i)!},\end{align*}

We make a case distinction based on the size of $k$.
Assume first that $k>\frac{\log_{2}n}{2}.$ 
By \Cref{lem:allbound}, for any of the $\stirling{n}{k}$ partitions into k coalitions, the probability that one is stable is bounded by $\left(\frac{1}{k}\right)^n$. 
We show that the probability that any partition of size $k$ is Nash-stable is bounded by $2^{-\Omega(\log n \log\log n)}$.
Therefore, observe that
    \begin{align}
    &\stirling{n}{k}\left(\frac{1}{k}\right)^n=\left(\frac{1}{k}\right)^n\sum_{i=1}^k(-1)^{k-i}\frac{i^n}{i!(k-i)!}\leq \sum_{i=1}^k\frac{i^n}{k^ni!(k-i)!}\notag\\
    &\leq \sum_{i=1}^k\frac{1}{i!(k-i)!}\leq k\cdot \frac{1}{\ceil*{\frac{k}{2}}!\floor*{\frac{k}{2}}!}\le \frac{2}{\ceil*{\frac{k}{2}-1}!\floor*{\frac{k}{2}}!}\label{centralbinom}\\
    & \leq  \frac{1}{\ceil*{\frac{\log_{2}n}{4}-1}!\floor*{\frac{\log_{2}n}{4}}!} <  \frac{1}{\left(\ceil*{\frac{\log_{2}n}{5}}!\right)^2}\label{plugin}\\
    &
    <\frac{1}{2\pi\ceil*{\frac{\log_{2}n}{5}}}\left(\frac{e}{\ceil*{\frac{\log_{2}n}{5}}}\right)^{2\cdot \ceil*{\frac{\log_{2}n}{5}}}\cdot e^{-\frac{2}{12\cdot \ceil*{\frac{\log_{2}n}{5}}+1}}
    \label{stirlapprox}\\
    &\leq\left(\frac{e}{\frac{\log_{2}n}{5}}\right)^{2\cdot (\frac{\log_{2}n}{5})}\cdot\frac{1}{2\pi\frac{\log_{2}n}{5}} e^{-\frac{2}{12\cdot \ceil*{\frac{\log_{2}n}{5}}+1}}
    \notag\\
    &=\log_2 n^{-\frac{2}{5}\log_2 n}\cdot\frac{(5e)^{\frac{2}{5} \log_{2}n}}{2\pi\frac{\log_{2}n}{5}} e^{-\frac{2}{12\cdot \ceil*{\frac{\log_{2}n}{5}}+1}}
    \notag\\
    &=2^{-\frac{2}{5}\log_2 n \log_2 \log_2 n}\cdot\frac{(5e)^{\frac{2}{5} \log_{2}n}}{2\pi\frac{\log_{2}n}{5}} e^{-\frac{2}{12\cdot \ceil*{\frac{\log_{2}n}{5}}+1}}
    \notag\\
    &=2^{-\Omega(\log n \log\log n)}\notag
    \end{align}
Inequality (\ref{centralbinom}) holds since the central binomial coefficient is largest. The first inequality in (\ref{plugin}) holds using $k>\frac{\log_{2}n}{2}$ and the second holds for large enough $n$. Inequality (\ref{stirlapprox}) is an application of the Stirling inequality $\ell!>\sqrt{2\pi \ell}\left(\frac{\ell}{e}\right)^{\ell} e^{\frac{1}{12\ell+1}}$, which holds for every $\ell\in \mathbb{N}$.

Now consider $k\leq\frac{\log_{2}n}{2}$ and suppose $\ell$ of the $k$ coalitions are singletons.
We first restrict attention to the partial partition consisting of the coalitions of size at least~$2$.
By \Cref{lem:nosinglebound}, the probability that any partition of $n-\ell$ agents into $k-\ell$ coalitions, none of which are singletons, is Nash-stable can be bounded as follows.
  \begin{align}
  &\stirling{n-\ell}{k-\ell}\left(\frac{1-\left(\frac{1}{2}\right)^{k-\ell}}{k-\ell}\right)^{n-\ell}\notag\\
  &=\left(\frac{1-\left(\frac{1}{2}\right)^{k-\ell})}{k-\ell}\right)^{n-\ell}\sum_{i=1}^{k-\ell}(-1)^{k-\ell-i}\frac{i^{n-\ell}}{i!(k-\ell-i)!}\notag\\ 
  &\leq \left(1-\left(\frac{1}{2}\right)^{k-\ell}\right)^{n-\ell}\sum_{i=1}^{k-\ell}\frac{i^{n-\ell}}{(k-\ell)^{n-\ell}i!(k-\ell-i)!}\notag\\
  &<\left(1-\left(\frac{1}{2}\right)^{k-\ell}\right)^{n-\ell}\sum_{i=1}^{k-\ell}\frac{1}{i!(k-\ell-i)!}\notag\\
  &\leq e\cdot \left(1-\left(\frac{1}{2}\right)^{k-\ell}\right)^{n-\ell}\label{ebound}\\
  &\leq e\cdot\left(1-\frac{1}{2}^{\frac{\log_{2}n}{2}}\right)^{n-\ell}=e\cdot\left(1-\frac{1}{\sqrt{n}}\right)^{n-\ell}\notag\\
  &\le e\cdot\left(1-\frac{1}{\sqrt{n}}\right)^{n-\log_2 n}\leq e^{1-\sqrt{n}+\frac{\log_2 n}{\sqrt{n}}}\notag\end{align}
Here \Cref{ebound} follows since $\sum_{i=1}^k\frac{1}{i!(k-i)!}\leq \sum_{i=1}^k\frac{1}{i!}< \sum_{i=1}^\infty\frac{1}{i!}=e$.\\
There are $\binom{n}{\ell}$ ways of choosing the $\ell$ singletons and the probability that none of the $\ell$ singletons can perform a Nash deviation is $\left(\frac{1}{2}\right)^{\ell(k-1)}$.
The probability that a partition into $k$ coalitions is Nash-stable is therefore bounded by
\begin{align*}\stirling{n-\ell}{k-\ell}\cdot \binom{n}{\ell }\left(\frac{1}{2}\right)^{\ell(k-1)}\cdot\left(\frac{1-\left(\frac{1}{2}\right)^{k-\ell}}{k-\ell}\right)^{n-\ell}\leq n^{\ell}\cdot e^{1-\sqrt{n}+\frac{\log_2 n}{\sqrt{n}}}\leq n^{\log_2 n}e^{1-\sqrt{n}+\frac{\log_2 n}{\sqrt{n}}}
= e^{\Omega(-\sqrt{n})}\end{align*}

We put everything together and obtain
\begin{align*}
 &\lim_{n\rightarrow \infty }\Prob((N,u)\text{ contains Nash-stable partition})\\
 &=
 \lim_{n\rightarrow \infty }\sum_{k=1}^{n} \Prob((N,u)\text{ contains Nash-stable partition in }k \text{ coalitions}) \\
 &\leq \lim_{n\rightarrow \infty }\sum_{k=1}^{n} \max\left(2^{-\Omega(\log n \log\log n)},e^{\Omega(-\sqrt{n})}\right)\leq \lim_{n\rightarrow \infty}n\max\left(2^{-\Omega(\log n \log\log n)},e^{\Omega(-\sqrt{n})}\right)=0,
 \end{align*}
The first equality follows from partitioning the event of Nash stability into disjoint events. 
We then apply the maximum of the two bounds for different sizes of $k$.
This concludes the proof.

\end{proof}

\section{Conclusion}

We proposed a model of random hedonic games that offers a novel perspective on the analysis of hedonic games whose study has thus far been restricted to a deterministic setting.
Our model describes a random variant of additively separable hedonic games, where utilities are drawn i.i.d. from a given distribution.
While trivial coalition structures like the grand coalition already guarantee Nash stability and contractual Nash stability under certain conditions,
our work focuses on the arguably difficult case of \textit{impartial culture} where utilities are sampled from the uniform distribution $U(-1,1)$.
Our main result is an algorithm that with high probability determines a partition into $\Theta(\frac{n}{\log n})$ coalitions of size $\Theta(\log n)$ which are individually rational, {\env}, and {\exv}.
This implies in particular that the produced partitions are individually stable and contractually Nash-stable.
By contrast, with high probability, Nash-stable partitions do not exist in random hedonic games for $U(-1,1)$-distributed utilities.

Individual stability and contractual Nash stability are often seen as complementary solution concepts.
Nevertheless, there are crucial conceptual differences and our results contribute to understanding these differences.
First, contractual deviations treat joining a nonempty coalition or forming a new coalition similar because {\exvNoun} applies to both. 
By contrast, {\envNoun} does not constrain creating a new coalition.
Hence, individual rationality as implied by individual stability appears more closely related to the stronger concept of Nash stability.
Moreover, satisfying {\exvNoun} appears easier to satisfy than {\envNoun}.
Essentially, denying an agent to leave prevents \emph{every} contractual deviation by this agent whereas denying an agent to join only blocks \emph{some} individual deviation.
In other words, to satisfy {\exvNoun}, we have to fulfill one constraint for every agent, whereas for {\envNoun}, we have to satisfy $|\partition|-1$ constraints per agent.
Our results reflect this intuition because {\exvNoun} is easily satisfied in the probabilistic setting whereas {\envNoun} (as well as individual rationality) are more difficult to achieve.

This work is a stepping stone for a range of fascinating future directions.
First, it would be interesting to consider random models for other classes of hedonic games.
Interestingly, our results already have implications for other classes of hedonic games based on cardinal valuation functions of single agents.
We show that, with high probability, \Cref{alg:main} produces partitions such that $(i)$ for every agent, the sum of utilities for agents in their own coalition is positive and $(ii)$ for every agent there exists an agent in every other coalition that has a negative utility for this agent.
Property~$(i)$ is equivalent to individual rationality not only for additively separable hedonic games but also for other commonly considered games, such as fractional and modified fractional hedonic games \citep{ABB+17a,Olse12a}.
Moreover, while in these classes of games property~$(ii)$ is not sufficient to guarantee {\envNoun}, it is sufficient under individual rationality.
Hence, \Cref{alg:main} even finds individually rational and {\env}, and therefore individually stable, partitions in random fractional and modified fractional hedonic games.

By contrast, \Cref{alg:main} guarantees {\exvNoun} for additively separable hedonic games by guaranteeing that for every agent some agent in their own coalition has a positive value for them.
This condition is not equivalent to {\exvNoun} for fractional and modified fractional hedonic games: Example~2.2 by \citet{BBK23a} shows that it may be preferable for an agent to be abandoned by another agent for which their utility is positive.
We suspect that a stronger version of {\exvNoun} holds for the output of \Cref{alg:main}, which would yield contractual Nash stability for fractional and modified fractional hedonic games.
Of course, the landscape of classes is far richer than the discussed classes of games and one could also define random variants of other classes of games as well as a richer class of utility distributions.

\section*{Acknowledgements}
Martin Bullinger was supported by the AI Programme of The Alan Turing Institute.
Sonja Kraiczy was supported by an EPSRC studentship.
A preliminary version of this article appears in the 
Proceedings of the 25th ACM Conference on Economics and Computation (2024).
We would like to thank the anonymous reviewers for their helpful feedback.

\appendix

\section*{Appendix: Omitted Proofs from Section \ref{sec:Nash}}

In this appendix, we present the missing proofs of the lemmas leading to the nonexistence of Nash-stable partitions with high probability for games with utilities sampled from $U(-1,1)$. 
We start with the first central lemma.

\bound*

\begin{proof}
Assume that all numbers are defined as in the statement of the lemma.
We will prove the bound via two consecutive applications of the method of Lagrange multipliers.
First we want to maximize $\prod_{j=1} q_i^{s_i}$ subject to the constraint $\sum_{i=1}^{k} q_i=z$. Suppose for each $i=1,\ldots,k$, it holds that $q_i\neq 0$, otherwise the product is trivially $0$ since by assumption each $s_i$ is positive.
We consider $q_1,q_2,\ldots, q_k$ to be variables and the $s_i$ are fixed.
By applying the natural logarithm to our objective function, this is equivalent to the following maximization problem:

\begin{equation}\label{eq:pisum}
    \text{maximize }\sum_{i=1}^k s_i \ln(q_i) \text{ subject to } \sum_{i=1}^{k} q_i=z
\end{equation}
The Lagrangian function for \Cref{eq:pisum} is given by
$$\sum_{i=1}^k s_i \ln(q_i)-\lambda \left(\sum_{i=1}^{k} q_i-z\right)\text.$$
By the KKT conditions, setting the partial derivative with respect to $q_i$ to $0$
gives $$\frac{s_i}{q_i}=\lambda \iff q_i=\frac{s_i}{\lambda}\text.$$
Summing over all $i$ we obtain $$z=\sum_{i=1}^k q_i=\frac{1}{\lambda}\sum_{i=1}^k s_i=\frac{n}{\lambda},$$
which implies that $$\lambda=\frac{n}{z}$$ and hence $$q_i=\frac{s_i}{n}z.$$
Hence, we know that 
\begin{equation}\label{eq:firstbound}
    \prod_{j=1}^k q_i^{s_i} \leq \prod_{j=1}^k \left(\frac{s_i}{n}z\right)^{s_i}\text.
\end{equation}
We now use Lagrange multipliers again for maximizing the right hand side of \Cref{eq:firstbound}.
Taking the natural logarithm of the objective function, this yields the maximization problem
\begin{equation}\label{eq:sisum}
    \text{maximize }% 
\sum_{i=1}^k s_i \ln\left(\frac{s_i}{n}z\right)\text{ subject to }\sum_{i=1}^{k} s_i=n\text.
\end{equation}
The Lagrangian function for this maximization problem is given by
$$\sum_{i=1}^k s_i \ln\left(\frac{s_i}{n}z\right)-\lambda \left(\sum_{i=1}^{k} s_i-n\right)\text,$$
and setting the partial derivative with respect to $s_i$ to $0$
gives $$\ln\left(\frac{s_i}{n}z\right)+1-\lambda =0\iff s_i=\frac{e^{\lambda-1}}{z}n.$$
Note that $s_i=s_j$ for all $i,j\in[k]$. 
Hence, the maximum of \Cref{eq:sisum} is achieved when all $s_i$ are as large as possible subject to $k s_i = n$.
This happens when $s_i=\frac{n}{k}$ for all $i$.
Combining this with \Cref{eq:firstbound}, we obtain
$$\prod_{j=1}^k q_i^{s_i}\leq \prod_{j=1}^k \left(\frac{s_i}{n}z\right)^{s_i}\leq \prod_{j=1}^k \left(\frac{z}{k}\right)^{\frac{n}{k}}=\left(\frac{z}{k}\right)^n,$$

as required.
\end{proof}

Towards the proof of the next lemma, we prove another auxiliary lemma.
    It captures how the distribution of the sum of $U(-1,1)$-distributed random variables changes when adding another term. 
    
\begin{lemma}\label{lem:var}
Let $\mathcal{D}=U(-1,1)$ and let $Z_1,\ldots,Z_{k+1}$, $k\geq1$, be random variables distributed i.i.d. according to~$\mathcal{D}$.
    Then, for any $x\geq 0$ it holds that
    $$\Prob\left(\sum_{i=1}^{k+1}Z_i\geq x\right)\geq \Prob\left(\sum_{i=1}^{k}Z_i\geq x\right).$$
\end{lemma}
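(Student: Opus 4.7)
The plan is to reduce the lemma to a structural property of $S_k := \sum_{i=1}^k Z_i$: its density $f_k$ is symmetric about $0$ and non-increasing on $[0, \infty)$. Granted this property, one has the pointwise comparison $f_k(x - u) \geq f_k(x + u)$ for all $x, u \geq 0$: if $x \geq u$, both arguments are nonnegative and one applies monotonicity; if $x < u$, one first uses symmetry $f_k(x - u) = f_k(u - x)$ and then monotonicity on $[0, \infty)$, since $0 \leq u - x \leq u + x$.

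To prove the lemma from this structural property, I would condition on $Z_{k+1} \sim U(-1, 1)$ and apply the substitution $z \mapsto -z$ on $[-1, 0]$ to obtain
\begin{equation*}
\Pr(S_{k+1} \geq x) \;=\; \tfrac{1}{2}\int_{-1}^{1} \Pr(S_k \geq x - z)\, dz \;=\; \tfrac{1}{2}\int_0^1 \bigl[\Pr(S_k \geq x - z) + \Pr(S_k \geq x + z)\bigr]\, dz.
\end{equation*}
Since $\Pr(S_k \geq x) = \int_0^1 \Pr(S_k \geq x)\, dz$, the desired inequality reduces to $\Pr(S_k \in [x - z, x]) \geq \Pr(S_k \in [x, x + z])$ for every $z \in [0, 1]$. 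Writing both sides as integrals of $f_k$ and substituting $y = x - u$ and $y = x + u$, this is exactly the pointwise comparison $f_k(x - u) \geq f_k(x + u)$ integrated over $u \in [0, z]$.

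To establish the structural property, I would induct on $k$. For $k = 1$, $f_1 = \tfrac{1}{2}\mathbf{1}_{[-1, 1]}$ is symmetric and non-increasing on $[0, \infty)$. For the inductive step, symmetry of $f_{k+1}$ is inherited from that of $f_k$ and of the uniform density. For monotonicity on $[0, \infty)$, it suffices to show $f_{k+1}(x) \geq f_{k+1}(x + d)$ for $x \geq 0$ and $0 \leq d \leq 2$ (larger shifts follow by chaining). Using the convolution identity $f_{k+1}(y) = \tfrac{1}{2}\int_{y - 1}^{y + 1} f_k(t)\, dt$, the overlapping region cancels and the difference rearranges to
\begin{equation*}
f_{k+1}(x) - f_{k+1}(x + d) \;=\; \tfrac{1}{2}\int_0^d \bigl[f_k(x - 1 + u) - f_k(x + 1 + u)\bigr]\, du,
\end{equation*}
whose integrand is nonnegative by the pointwise comparison from the first paragraph applied to $f_k$ at argument $x + u \geq 0$ with shift $1$. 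The crux of the argument is this symmetry-plus-unimodality of $f_k$, and the main (very mild) technical care is in tracking the sub-cases $x + u \geq 1$ and $0 \leq x + u < 1$ when invoking the inductive hypothesis. Alternatively, one can appeal to the facts that convolutions of log-concave densities are log-concave and that a symmetric log-concave density on $\mathbb{R}$ is automatically non-increasing on $[0, \infty)$, which yields the structural property without induction.
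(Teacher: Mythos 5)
Your proof is correct and follows essentially the same route as the paper: both arguments reduce the claim to the pointwise inequality $f_k(x-u)\ge f_k(x+u)$ for $x,u\ge 0$, which comes from symmetry and unimodality of the density of $\sum_{i=1}^k Z_i$ (you condition on the extra uniform variable, the paper conditions on the partial sum, but this is only a cosmetic swap). The one substantive addition is that you actually establish the unimodality, by induction via the convolution identity (or by citing log-concavity), whereas the paper simply asserts that the density is "unimodal and symmetric"; this makes your write-up more self-contained.
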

\begin{proof}
    Let $g$ be the probability density function of $\sum_{i=1}^{k}Z_i$ with cumulative density function $G$, let $F$ be the cumulative density function of $\sum_{i=1}^{k+1}Z_i$, and let $H$ be the cumulative density function of $Z_{n+1}$.
    Using independence for the first step, we obtain
    \begin{align*}
    1-F(x)&=\int_{-\infty}^{\infty} g(y)(1-H(x-y))dy\\
    &=\int_{-\infty}^xg(y)(1-H(x-y))dy+\int_{x}^{\infty} g(y)(1-H(x-y))dy\\
    &=\int_{-\infty}^xg(y)(1-H(x-y))dy+\int_{x}^{\infty} g(y)dy-\int_{x}^{\infty} g(y)H(x-y))dy\\
    &=\int_{-\infty}^xg(y)(1-H(x-y))dy+ 1-G(x) -\int_{x}^{\infty} g(y)H(x-y)dy
    \end{align*}
To show that $1-F(x)\geq 1-G(x)$, we will show that $$\int_{-\infty}^xg(y)(1-H(x-y))dy-\int_{x}^{\infty} g(y)H(x-y)dy\geq 0.$$

    To see this, we compute
    \begin{align*}
    &\int_{-\infty}^xg(y)(1-H(x-y))dy-\int_{x}^{\infty} g(y)H(x-y)dy\\&=\int_{0}^{\infty}g(x-y) (1-H(y))dy-\int_{0}^{\infty}g(z+x)H(-y)dy
    \\
    &=\int_{0}^{\infty}g(x-y) (H(-y))dy-\int_{0}^{\infty}g(z+x)H(-y)dy
    \\
    &=\int_{0}^{\infty}\left(g(x-y)-g(x+z)\right) H(-y)dy\geq 0\text.
    \end{align*}
    For the second equality, we use that $H$ is the cumulative density function of a symmetric random variable.
    The last inequality follows because $g$ is unimodal and symmetric and therefore $g(x-y)\geq g(x+y)$ for any $x,y\geq 0$.% 
\end{proof}

We are ready to prove our next main lemma.

\technical*
\begin{proof}
Let everything be defined as in the statement of the lemma.
    
    Now let $G$ be the cumulative density function of $Z_1+X$ and let $F$ be the cumulative density function of $Z_1$. By \Cref{lem:var}, for any $y\geq 0$, it holds that $1-G(y)=\Prob(Z_1+X\geq y)\geq \Prob(Z_1\geq y)=1-F(y)$ and therefore $F(y)\ge G(y)$.
    We obtain 
    \begin{align}&\Prob(Z_1+X\geq Y)\notag\\
    &=\int_{-\infty}^{\infty}f_Y(y) (1-G(y)) dy \notag\\
    &= \int_{-\infty}^{\infty}f_Y(y) (F(y)-G(y)+(1-F(y)) dy\notag\\
    &=\int_{-\infty}^{\infty}f_Y(y) (1-F(y))+\int_{-\infty}^{\infty}f_Y(y)(F(y)-G(y))dy\notag\\
    &=\Prob(Z_1\geq Y)+\int_{-\infty}^{\infty}f_Y(y)(F(y)-G(y))dy\notag\\
    &=\Prob(Z_1\geq Y)+\int_{0}^{\infty}f_Y(y)(F(y)-G(y))dy+\int_{-\infty}^{0}f_Y(y)(F(y)-G(y))dy\notag\\
    &=\Prob(Z_1\geq Y)+\int_{0}^{\infty}f_Y(y)(F(y)-G(y))dy+\int_{0}^{\infty}f_Y(-y)(F(-y)-G(-y))dy\notag\\
    &=\Prob(Z_1\geq Y)+\int_{0}^{\infty}f_Y(y)(F(y)-G(y))dy+\int_{0}^{\infty}f_Y(-y)(1-F(y)-(1-G(y))dy \label{symmetry}\\
    &=\Prob(Z_1\geq Y)+\int_{0}^{\infty}(f_Y(y)-f_Y(-y))(F(y)-G(y))dy+0
    \notag\\
    &\geq \Prob(Z_1\geq Y)\label{last}\end{align}
    where \Cref{symmetry} follows since $Z_1+X$ and $Z_1$ are both symmetrically distributions with mean $0$ and \Cref{last} follows since for $y\geq 0$, $F(y)\geq G(y)$ and $f_Y(y)\geq f_Y(-y)$.
\end{proof}

The next lemma deals with the probability density function of the maximum of identically distributed random variables.

\LemMax*

\begin{proof}
Assume that we consider objects satisfying the assumptions of the lemma.
Let $F_Y$ be the cumulative density function of $Y$, and for $i\in [k]$, let $f_{Z_i}$ and $F_{Z_i}$ be the probability density function and cumulative density function of $Z_i$, respectively.
By independence, for any $x\in \mathbb R$, we have that 
    $$f_Y(x)=\sum_{i=1}^{k}f_{Z_i}(x)\prod_{j\neq i} F_{Z_j}(x).$$
    Hence, for $y\ge 0$, it holds that
    \begin{align}f_Y(x)&=\sum_{i=1}^{k}f_{Z_i}(x)\prod_{j\neq i} F_{Z_j}(x)\notag\\&=\sum_{i=1}^{k}f_{Z_i}(-x)\prod_{j\neq i} F_{Z_j}(x)\label{eq:dist:symmetry:one}\\&\geq \sum_{i=1}^{k}f_{Z_i}(-x)\prod_{j\neq i} F_{Z_j}(-x)\label{eq:cdf}\\&=f_Y(-x)\text,\notag\end{align}
    where \Cref{eq:dist:symmetry:one} follows since the distribution of each $Z_i$ is symmetric, and \Cref{eq:cdf} follows since $F$ is a cumulative density function and therefore $F(x)\geq F(-x)$ for any $x\geq 0$.
\end{proof}

We apply the lemma for the version of \Cref{lem:technical} applicable to 
singleton coalitions.

\technicalzero*
\begin{proof}
Let $f_Y$ be the probability density function of $Y=\max\{Z_2,\ldots,Z_k\}$ .
Then,
\begin{align*}\Prob\left(0\geq \max(Z_2,\ldots,Z_k)\right)=\Prob(0\geq Y)=\int_{-\infty}^0 f_Y(y)dy=\int_{-\infty}^{-1} f_Y(y)dy+\int_{-1}^0 f_Y(y) dy\text.\end{align*}
Furthermore, 
\begin{align*}\Prob(X\geq \max\left(Z_2,\ldots,Z_k\right))&=\Prob(X\geq Y)\\&=\int_{-\infty}^{1}\Prob(X\geq y)f_Y(y) dy=\int_{-\infty}^{-1}f_Y(y) dy+\int_{-1}^{1}\Prob(X\geq y)f_Y(y) dy.\end{align*}
To prove the claim, we show that $\int_{-1}^{1}\Prob(X\geq y)f_Y(y) dy\geq \int_{-1}^0 f_y(y) dy.$
\begin{align}\int_{-1}^{1}\Prob(X\geq y)f_Y(y) dy&=\int_{-1}^{0}\Prob(X\geq y)f_Y(y) dy+\int_{0}^{1}\Prob(X\geq y)f_Y(y) dy \notag\\&=\int_{0}^{1}\Prob(X\geq -y)f_Y(-y) dy+\int_{0}^{1}\Prob(X\geq y)f_Y(y) dy \notag\\&\geq \int_{0}^{1}\Prob(X\geq -y)f_Y(-y) dy+\int_{0}^{1}\Prob(X\geq y)f_Y(-y) dy \label{maxdensity}
\\&=\int_{0}^{1}(\Prob(X\geq -y)+P(X\geq y))f_Y(-y) dy \notag\\&=\int_{0}^{1} f_Y(-y)dy=\int_{-1}^{0} f_Y(y)dy\label{laststep},
\end{align}

where Inequality (\ref{maxdensity}) follows by \Cref{lem:max} and \Cref{laststep} follows since $X$ is symmetric and so $\Prob(X\geq -y)+\Prob(X\geq y)=\Prob(X\leq y)+\Prob(X\geq y)=1$
\end{proof}

\end{document}